\title{Gauge Transformations, Twisted Poisson Brackets and Hamiltonization of Nonholonomic Systems}
\author{{\sc{Paula Balseiro}\thanks{
         Universidade Federal Fluminense, Instituto de Matem\'atica, Rua Mario Santos Braga S/N, 24020-140, Niteroi, Rio de Janeiro, Brazil. \newline{\texttt{E-mail: pbalseiro@vm.uff.br}}}} , \ \
{\sc{Luis C. Garc\'ia-Naranjo}\thanks{
         Section de Math\'ematiques, Ecole Polytechnique F\'ed\'erale de Lausanne, CH-1015
Lausanne, Switzerland. \newline{\texttt{E-mail:
luis.garcianaranjo@epfl.ch}}}} }
\theoremstyle{plain}
\newtheorem{theorem}{Theorem}[section]
\newtheorem{proposition}[theorem]{Proposition}
\newtheorem{corollary}[theorem]{Corollary}
\newtheorem*{theorem*}{Theorem}
\newtheorem{remarkth}[theorem]{Remark}
\theoremstyle{definition}
\newtheorem{definition}{Definition}
\newenvironment{remark}{\begin{remarkth}\upshape}{\hfill$\diamond$\end{remarkth}}
\newenvironment{example}[1][Example]{\begin{trivlist}
\item[\hskip \labelsep {\bfseries #1}]}{\end{trivlist}}
\newenvironment{examples}[1][Examples]{\begin{trivlist}
\item[\hskip \labelsep {\bfseries #1}]}{\end{trivlist}}
\newcommand{\lcf}{\lbrack\! \lbrack}
\newcommand{\rcf}{\rbrack\! \rbrack}
\newcommand{\Xnh}{\mbox{$X_{\textup{nh}}$}}
\def\M{\mathcal{M}}
\def\C{\mathcal{C}}
\def\Ham{\mathcal{H}}
\def\Lag{\mathcal{L}}
\def\R{\mathbb{R}}
\def\D{\mathcal{D}}
\def\F{\mathcal{F}}
\def\RR{\mathcal{R}}
\def\L{\mbox{Leg}}
\def\RR{\mathcal{R}}
\def\vecOm{\boldsymbol{\Omega}}
\def\I{\mathbb{I}}
\def\vecom{\boldsymbol{\omega}}
\newcommand{\SO}{\mbox{$\textup{SO}$}}
\def\so{\mathfrak{so}}
\def\Lag{\mathcal{L}}
\def\se{\mathfrak{se}}
\def\vecep{\boldsymbol{\epsilon}}
\def\vecL{\boldsymbol{\lambda}}
\def\vecR{\boldsymbol{\rho}}
\def\vecgamma{\boldsymbol{\gamma}}
\begin{document}
\maketitle

\begin{abstract}
In this paper we study the problem of Hamiltonization of
nonholonomic systems from a geometric point of view. We use gauge
transformations by 2-forms (in the sense of \v{S}evera and Weinstein
\cite{SeveraWeinstein}) to construct different almost Poisson
structures describing the same nonholonomic system. In the presence
of  symmetries, we observe that these almost Poisson structures,
although gauge related, may have fundamentally different properties
after reduction, and that brackets that Hamiltonize the problem may
be found within this family. We illustrate this framework with the
example of rigid bodies with generalized rolling constraints,
including the Chaplygin sphere rolling problem. We also see how {\it
twisted Poisson brackets} appear naturally in nonholonomic mechanics
through these examples.

\end{abstract}

\tableofcontents

\section{Introduction} \label{S:Intro}

It is well known that the equations of motion for a mechanical
system with nonholonomic constraints do not arise from a variational
principle in the usual sense. As a consequence, they cannot be
formulated as a classical Hamiltonian system. Instead, they are
written with respect to an \emph{almost Poisson bracket} that fails
to satisfy the Jacobi identity. This formulation has its origins in
\cite{SchaftMaschke1994, Marle1998, IbLeMaMa1999} and others.

On the other hand, after a symmetry reduction, the equations of
motion of a number of examples allow a Hamiltonian formulation
(sometimes after a \emph{time reparametrization}) and one talks
about \emph{Hamiltonization}\footnote{A different meaning to
Hamiltonization is given in \cite{Mestdag-Variations} where the
authors study the \emph{unreduced} system in connection with the
inverse problem of the calculus of variations.}
(see \cite{Chapligyn_reducing_multiplier, EhlersKoiller, FedorovJovan,  Hoch, Fernandez, Naranjo2008,
JovaChap, Ohsawa} and others).

In this paper we employ recent developments of Poisson geometry to
study this phenomenon from a geometric perspective. We use
\emph{gauge transformations} by 2-forms as introduced by \v{S}evera
and Weinstein in \cite{SeveraWeinstein} to construct different
almost Poisson brackets describing the dynamics of the same
nonholonomic system. Although our interest is in almost Poisson
geometry, we consider more general objects known as \emph{almost
Dirac structures} \cite{Courant}, as they provide the most natural
setting  for the definition and study of gauge transformations.

We illustrate the need for our methods by working out the
Hamiltonization of the motion of rigid bodies that are  subject to
\emph{generalized rolling constraints}. These are nonholonomic
constraints that relate the angular velocity $\vecom$ of the body
and the linear velocity $\dot {\bf x}$ of its center of mass  in a
linear way (i.e. $\dot {\bf x}=A\vecom$ for a $3\times 3$ matrix
$A$). This type of constraints  contain the celebrated Chaplygin
sphere problem as a special case. The latter  concerns the motion of
an inhomogeneous sphere whose center of mass coincides with its
geometric center, that rolls without slipping on the plane. As a
consequence of the Hamiltonization, we are able to show complete
Liouville integrability of the reduced dynamics of any rigid body
subject to generalized rolling constraints.

Incidentally, during our discussion, we discover that \emph{twisted
Poisson brackets} \cite{SeveraWeinstein} appear in the study of
nonholonomic systems. In particular, we show that in the original
physical time (before the time reparametrization),
 the reduced dynamics of the Chaplygin sphere are formulated in terms of a twisted-Poisson bracket.
Although these structures do not in general satisfy the Jacobi
identity, they possess a fair amount of properties, including
foliations, that might imply an interesting interplay with dynamical
features. To date, the interest in these brackets has been mainly
geometrical.

\subsection{Hamiltonization}

Perhaps the most interesting example of Hamiltonization concerns the
Chaplygin sphere.
 Even though the formulation and integration of the equations of motion by Chaplygin dates back to 1903
  \cite{chapsphere}, the Hamiltonian structure of the reduced equations (after a time reparametrization) was only
  discovered in 2001 by Borisov and
Mamaev \cite{BorisovMamaev}.

Recently, Jovanovi{\'c} \cite{JovaChap} proved that the
multidimensional version of the Chaplygin sphere problem introduced
in \cite{FedorovKozlov} is also integrable and Hamiltonizable when
the vertical angular momentum is zero. This gives a  partial
solution to a problem that remained open for many years. His
approach to prove integrability involves in a crucial way the
Hamiltonization of the problem.
Another important example where the integration of a nonholonomic
system follows from its Hamiltonization is the multidimensional
Veselova system treated by Fedorov and Jovanovi\'c in
\cite{FedorovJovan}. We also mention the recent work of Ohsawa,
Fernandez,  Bloch and Zenkov \cite{Ohsawa} in connection with
Hamilton-Jacobi theory.

The relationship between Hamiltonization and integrability may have
been the original motivation for Chaplygin to consider the problem
of Hamiltonization back in 1911
\cite{Chapligyn_reducing_multiplier}. In this work, Chaplygin proved
the famous \emph{Chaplygin reducing multiplier Theorem} that applies
to the so-called \emph{$G$-Chaplygin systems}. These are
nonholonomic systems with the property that the tangent space to the
orbits of a symmetry group $G$ exactly complements the constraint
distribution on the tangent space $TQ$ of the configuration manifold
$Q$. Stated in  modern geometric terms, the Theorem says that if the
\emph{shape space} $Q/G$ is two-dimensional, and the reduced
equations have an invariant measure, then they can be put in
Hamiltonian form in the new time $\tau$ defined by  $d\tau
=\mbox{\small $\frac{1}{\varphi}$}\, dt$. The positive function $
\mbox{\small $\frac{1}{\varphi}$}:Q/G\to \R$ is known as the
 \emph{reducing multiplier}\footnote{We denote the reducing multiplier
by $\mbox{\small $\frac{1}{\varphi}$}$ instead of  $\varphi$ to be
consistent with our exposition which takes the Poisson rather than
the symplectic perspective.}. There is a very neat interpretation of
the multiplier $ \mbox{\small $\frac{1}{\varphi}$}$ in terms of the
invariant measure and as a conformal factor for an almost symplectic
form that describes the dynamics, see \cite{EhlersKoiller,
FedorovJovan, Hoch, Ohsawa}. This interpretation suggests that
geometric methods may be useful to understand Hamiltonization in
more general scenarios of nonholonomic systems with symmetry.

Recently, Fernandez, Mestdag and Bloch \cite{Fernandez}, derived a
set of coupled first order partial differential equations for the
multiplier  $ \mbox{\small $\frac{1}{\varphi}$}$ for $G$-Chaplygin
systems whose shape space has arbitrary dimension. Even more, the
set of equations found by the authors applies to general
nonholonomic systems with symmetry that are not necessarily
$G$-Chaplygin. This is  done by writing the reduced equations of
motion in Hamilton-Poincar\'e-D'Alembert form as described in
\cite{BKMM, Blochbook}. The issue of Hamiltonization is thus
reformulated as a problem of existence of a solution for the
aforementioned system of partial differential equations.

Our approach to Hamiltonization contains the same degree of
generality but is more intrinsic. Denote by $X_\RR$ the vector field
describing the dynamics on the reduced space $\RR$ and by $\Ham_\RR$
the reduced Hamiltonian. We formulate the reduced equations of
motion in almost Poisson form with respect to a \emph{collection} of
bivector fields $\pi_{\mbox{\tiny {red}}^B}$. Each member  in this
collection describes the reduced dynamics in (almost) Hamiltonian
form (i.e. $(\pi_{\mbox{\tiny {red}}^B})^\sharp(d\Ham_\RR)=-X_\RR$),
and arises as the reduction of a bivector field $\pi_{\mbox{\tiny {nh}}}^B$
associated to a bracket $\{\cdot , \cdot\}_{\mbox{\tiny {nh}}}^B$.
Such bracket  $\{\cdot , \cdot\}_{\mbox{\tiny {nh}}}^B$ is obtained through what we define
as a \emph{dynamical gauge transformation}  by a 2-form $B$ of the
noholonomic bracket $\{\cdot , \cdot \}_{\mbox{\tiny {nh}}}$ defined
in \cite{SchaftMaschke1994, Marle1998, IbLeMaMa1999} (see discussion in
subsection \ref{SS:intro-gauge}
below).

 Having said this, we reformulate
the issue of Hamiltonization  by requiring that one of the bivector
fields $\pi_{\mbox{\tiny {red}}^B}$ in the  collection described
above is \emph{conformally Poisson}, i.e.
\begin{equation}
\label{E:Conf-Poisson-Intro} [\varphi \, \pi_{\mbox{\tiny {red}}^B},
\varphi \, \pi_{\mbox{\tiny {red}}^B}] =0,
\end{equation}
for a positive function $\varphi$,\footnote{In fact  the function $\varphi:\RR\to \R^+$ is not arbitrary.
It is required  to be basic with respect to the fibered structure of the reduced space $\RR$.
Just as for $G$-Chaplygin systems, we think
of $\varphi: Q/G\to \R^+$.} and where $[\cdot , \cdot ]$ is
the Schouten bracket. The  scaling of $\pi_{\mbox{\tiny {red}}^B}$
by $\varphi$ is dynamically interpreted as the time
reparametrization $d\tau =\mbox{\small $\frac{1}{\varphi}$}\, dt$
(see Section \ref{Ss:ConformalFactor}).
Note that, for each  $\pi_{\mbox{\tiny {red}}^B}$, equation
\eqref{E:Conf-Poisson-Intro} locally defines a set of coupled first
order partial differential equations for $\varphi$. This seems to be
in  agreement with the results of Fernandez, Mestdag and Bloch
\cite{Fernandez}.

 If the symmetries are of $G$-Chaplygin type, the
bivector fields $ \pi_{\mbox{\tiny {red}}^B}$ are everywhere
non-degenerate and the equations of motion can be written with
respect to the associated almost symplectic form
$\Omega_{\mbox{\tiny {red}}^B}$. If a multiplier $\varphi$
satisfying \eqref{E:Conf-Poisson-Intro} exists, then
$\Omega_{\mbox{\tiny {red}}^B}$ is conformally closed,
($d(\mbox{\small $\frac{1}{\varphi}$}\Omega_{\mbox{\tiny
{red}}^B})=0$), and one speaks of \emph{Chaplygin Hamiltonization}
\cite{EhlersKoiller}.

The term \emph{Poissonization} was introduced by Fernandez, Mestdag
and Bloch in \cite{Fernandez} to refer to the case where the
bivector field $\pi_{\mbox{\tiny {red}}^B}$ satisfying
\eqref{E:Conf-Poisson-Intro} is degenerate. Their motivation to
distinguish this case is to study the relationship between
Hamiltonization and the existence of invariant measures for the
reduced equations.  To simplify  our exposition and to treat the
problem in a unified manner, we will not use their terminology and
simply talk about Hamiltonization whenever there exists a solution
to \eqref{E:Conf-Poisson-Intro}. We give a discussion on the
existence of invariant measures for nonholonomic systems admitting a
Hamiltonization in this generality in subsection
\ref{Ss:ConformalFactor}.

In general terms, the main contributions of this paper to the
problem of Hamiltonization are the clear geometric formulation of
the problem using recent developments on the field of Poisson
geometry (mainly those in \v{S}evera, Weinstein
\cite{SeveraWeinstein}) and the illustration of the usefulness of
these techniques in the study of rigid bodies subject to generalized
rolling constraints.

\subsection{Gauge transformations in nonholonomic mechanics}
\label{SS:intro-gauge}

The Hamiltonization of the Chaplygin sphere can be obtained in two
ways:

 In the first one,
described in Borisov and Mamaev \cite{BorisovMamaev2008}, one
performs the reduction in two stages.  In the first step one reduces
the translational symmetries of the problem corresponding to the
abelian action of $\R^2$. On the second  stage one reduces the
\emph{internal symmetry} corresponding to rotations about the
vertical axis described by the action of $\operatorname{S}^1$. The
second stage is performed  in Borisov and Mamaev
\cite{BorisovMamaev2008} using Routh's reduction method. The
Hamiltonian interpretation of the second reduction is delicate and
is studied in  Hochgerner, Garc\'ia-Naranjo \cite{Hoch}.
 Effectively, it is shown that in order to further reduce the system in a Marsden-Weinstein
fashion, the geometric data needs to be  \emph{modified} and the
authors propose a method for this called \emph{truncation}.

A second manner to achieve the Hamiltonization of the Chaplygin
sphere is described in Garc\'ia-Naranjo \cite{Naranjo2008}. In this
approach one first formulates the equations of motion in terms of an
\emph{affine almost Poisson bracket} that differs from the usual
nonholonomic bracket $\{\cdot , \cdot \}_{\mbox{\tiny nh}}$
considered in \cite{SchaftMaschke1994, Marle1998, IbLeMaMa1999} and
others. In this framework, the Hamiltonization can be obtained as a
single (one step) reduction by the symmetry group of rigid
transformations on the plane, $\operatorname{SE}(2)$.

In this paper we further elaborate on the latter approach. Our main
tool for this is to  incorporate \emph{gauge transformations} as
introduced in \v{S}evera and Weinstein \cite{SeveraWeinstein}.

Recall that the nonholonomic bracket $\{\cdot , \cdot
\}_{\mbox{\tiny nh}}$ considered in \cite{SchaftMaschke1994,
Marle1998, IbLeMaMa1999} is defined on functions on the constraint
phase space $\M$. The vector field $\Xnh$ on $\M$ that describes the
nonholonomic  dynamics is (almost) Hamiltonian with respect to this
bracket and with respect to the constrained Hamiltonian $\Ham_\M$.
Now, let $B$ be a 2-form on $\M$. If   $B$  satisfies a certain
technical condition, then a gauge transformation of the nonholonomic
bracket by $B$ defines a new bracket $\{\cdot , \cdot
\}^B_{\mbox{\tiny nh}}$ on $\M$ with the \emph{same (non-integrable)
characteristic distribution}.  As a consequence, the (almost)
Hamiltonian vector fields associated to  $\{\cdot , \cdot
\}^B_{\mbox{\tiny nh}}$  satisfy the nonholonomic constraints. If in
addition, the 2-form $B$ satisfies ${\bf i}_{\Xnh}B=0$, then we say
that $B$ defines a \emph{dynamical gauge transformation} and the
vector field $\Xnh$ is  (almost) Hamiltonian with respect to the
gauged bracket $\{\cdot , \cdot \}^B_{\mbox{\tiny nh}}$ and the
constrained Hamiltonian $\Ham_\M$.
%
%
%
%
In this way, we distinguish a family $\frak{F}$ of almost Poisson
structures that describe the dynamics of our nonholonomic system
corresponding to different dynamical gauge transformations. We  show
(Remark \ref{Rem:GeneralGauge}) that the affine almost Poisson
brackets defined in
 Garc\'ia-Naranjo \cite{Naranjo2008} are particular members in $\frak{F}$.

The main motivation to consider the large family $\frak{F}$ of
almost Poisson brackets for our nonholonomic system is to have a
larger choice of structures to describe the reduced dynamics and
hope to find one amongst them that Hamiltonizes the problem. In
particular, the Hamiltonization of the Chaplygin sphere arises as
the reduction of a dynamically gauged bracket $\{\cdot , \cdot
\}_{\mbox{\tiny nh}}^B$ that differs from the standard nonholonomic
bracket $\{\cdot , \cdot \}_{\mbox{\tiny nh}}$.

In fact, the members of the family $\frak{F}$ behave quite
differently after reduction. Our examples show that while some of
them might yield a true Poisson structure after reduction, others
yield an almost Poisson bracket with a non-integrable characteristic
distribution.

In order to study the gauge transformations of almost Poisson
brackets, we formulate the dynamics of nonholonomic systems on
almost Dirac structures \cite{Courant}. These are more general
geometric objects that provide the framework in which gauge
transformations are more natural. These structures had been already
considered in connection to nonholonomic mechanics by Yoshimura and
Marsden \cite{YoshimuraMarsdenI,YoshimuraMarsdenII}, and by Jotz
and Ratiu \cite{JotzRatiu}. However, the issue of Hamiltonization
and the incorporation of gauge transformations are not treated in
these works.

\subsection{Twisted Poisson brackets in nonholonomic mechanics}

It is well known that nonholonomic systems are formulated in terms
of  almost Poisson brackets that fail to satisfy the Jacobi
identity. However, very little research, if any, has been done in
understanding how far away these brackets are from being true
Poisson.

A very strong property of a Poisson manifold is that the
characteristic distribution is integrable and defines a foliation by
even dimensional leaves. This property is also shared by
conformally Poisson brackets whose bivector field $\pi$
satisfies $[\varphi \pi , \varphi \pi]=0$, for a certain positive
function $\varphi$ called the \emph{conformal factor}. These
brackets have been considered in the study of Hamiltonization of
nonholonomic systems, and, as mentioned before, the conformal factor
$\varphi$ defines the time reparametrization $d\tau =\mbox{\small
$\frac{1}{\varphi}$}\, dt$.

Another example of almost Poisson brackets that possess a foliation
by even dimensional leaves, is given by
 \emph{twisted Poisson brackets} that were introduced by Klim\v{c}\'\i k and
Str\"obl in \cite{KlimcikStrobl} and later in \v{S}evera and
Weinstein \cite{SeveraWeinstein} from a more geometric point of
view. Twisted Poisson brackets correspond to almost Poisson
structures whose associated bivector field $\pi$ satisfies
\begin{equation*}\frac{1}{2}
[\pi, \pi] = \pi^\sharp(\phi),
\end{equation*}
for a certain \emph{closed} 3-form $\phi$. To our knowledge, the
present paper is the first one to explore the connection between
this type of structures and nonholonomic mechanics.

In this paper we show that an almost Poisson structure with a
regular (constant rank), integrable characteristic distribution is
twisted (Corollary \ref{C:Int_Dist_implies_twist_Poisson}). As a
consequence we show (Remark \ref{R:Veselova-twisted}) that the
reduced equations of the classical Veselova problem
\cite{Veselova} can be formulated in terms of a twisted Poisson
bracket in the original physical time
 (prior to any time reparametrization).

We also show (Theorems \ref{T:ChaplyginIsTwisted} and
\ref{T:Rank1IsTwisted}) that the reduced equations of some examples
of rigid body motion with generalized rolling constraints, that
contain the Chaplygin sphere as a special case,  are described by a
twisted Poisson bracket in the original physical time. Moreover, we
give an explicit formula for the twisting closed 3-form $\phi$.

\subsection{Outline and main results of the paper}

The paper is organized as follows. In Section \ref{S:Examples} we
introduce our motivating examples, rigid bodies subject to
generalized rolling constraints. As mentioned before, these are
nonholonomic constraints that relate the linear velocity of the body
to its angular velocity via a $3\times 3$ matrix $A$. After writing
down the reduced equations of motion, we define two different
(almost) Poisson structures for the reduced equations according to
the rank  of $A$, that varies from $0$ to $3$. For each value of the
rank of $A$, we show that only one of the brackets is  Poisson
(conformally Poisson if rank $A=1,2$) while the other one possesses
a non-integrable characteristic distribution. The geometric
interpretation and construction of these brackets is one of the main
goals of the paper and is postponed to Section
\ref{S:Hamiltonization}, after the necessary tools are developed in
Sections \ref{S:Geometry} and \ref{S:Nonho}.

In Section \ref{S:Geometry} we develop the geometric background
needed for our purposes. We  focus on almost Dirac structures and
their gauge transformations, introduced respectively in
\cite{Courant} and \cite{SeveraWeinstein}, and we collect some new
results that are important in our study of nonholonomic systems.
Proposition \ref{L:Dirac2Section} gives a characterization
 of regular almost Dirac structures that is used in Corollary \ref{C:NonDegSection}
 to describe the structure of almost Poisson brackets having a regular characteristic distribution. Corollary
 \ref{C:Int_Dist_implies_twist_Poisson}
  shows that an almost Poisson bracket possessing a regular, integrable, characteristic
distribution is twisted.
In fact this result is proved in the more general setting of almost Dirac structures
in Theorem \ref{Prop:twisted}. We also mention
 Theorem \ref{T:GaugedRelation} that asserts
that any two regular almost Dirac structures defining the same
distribution are gauge related.


In Section \ref{S:Nonho} we make the connection between the
geometric methods developed in Section \ref{S:Geometry} and
nonholonomic mechanics. In particular, we construct the nonholonomic
bracket of \cite{SchaftMaschke1994, Marle1998, IbLeMaMa1999} using Corollary \ref{C:NonDegSection}
and the framework for nonholonomic mechanics described in Bates and Sniatycki \cite{BS93}. In Proposition
\ref{L:nhbracket} we show that the dynamics associated with this
bracket coincide with the formulation of nonholonomic mechanics on
almost Dirac structures considered in
\cite{YoshimuraMarsdenI,YoshimuraMarsdenII, JotzRatiu}. Next, we
define the notion of dynamical gauge transformations for a
nonholonomic system, and define a family $\mathfrak{F}$ of almost
Poisson brackets, possessing the same characteristic distribution,
and that describe our nonholonomic system. Finally, we discuss the
reduction of these brackets in the presence of symmetries and
introduce our working definition of Hamiltonization.

In Section \ref{S:Hamiltonization} we resume the study of rigid bodies subject to generalized
rolling constraints. In subsection \ref{SS:Geom_brackets} we show that the brackets given in
Section  \ref{S:Examples} to describe the reduced dynamics, arise as a reduction of different members of the
family $\mathfrak{F}$ (Theorems \ref{T:red_bracket} and \ref{T:red_bracket_gauge}).
In subsection \ref{SS:Hamiltonization-Integrability} we  establish the Hamiltonization of the reduced equations
in detail and we conclude their integrability. Finally, in subsection \ref{Ss:TwistedMechanics} we focus
on the twisted nature of the brackets that Hamiltonize the problem for the cases Rank $A=1,2$,
prior to the time reparametrization.


%
%
%
%

\section{Motivating Examples: Rigid bodies with Generalized Rolling Constraints}
 \label{S:Examples}

Consider the motion of a rigid body in space that evolves under
its own inertia and is subject to the
constraint that enforces the linear velocity of the center of mass,
$\dot {\bf x}$, to be linearly related to the angular velocity of
the body $\vecom$, i.e.,
\begin{equation}
\label{E:general_constraint} \dot {\bf x} =r A \vecom.
\end{equation}
 Both vectors $\dot {\bf x}$ and  $\vecom$  belong to $\R^3$ and are
written with respect to an inertial frame. The constant scalar $r$
has dimensions of length and is a natural length scale of the
system. The dimensionless constant $3\times 3$ matrix $A$ is given and satisfies
certain conditions that are made precise in the following Definition.

\begin{definition}
\label{condA}
The matrix $A$ is said to \emph{define a generalized
rolling constraint} if it satisfies one of the following conditions according to its rank:
\begin{itemize}
\item[$(i)$] $A= \left ( \begin{array}{cc} C & 0 \\ 0 & 1  \end{array} \right ),$ with $C\in \operatorname{SO}(2)$,
if $\mbox{rank} \, A =3$.
\item[$(ii)$] $A= \left ( \begin{array}{cc} C & 0 \\ 0 & 0  \end{array} \right ),$ with $C\in \operatorname{SO}(2)$,
if $\mbox{rank} \, A =2$.
\item[$(iii)$] $A={\bf e}_3 \, {\bf e}_3^T $,  if $\mbox{rank} \, A =1$, where
${\bf e}_3$ is the third canonical vector in $\R^3$, and $T$ denotes transpose.
 \item[$(iv)$] $A=0$ if $\mbox{rank} \, A =0$.
\end{itemize}

\end{definition}

The above conditions on $A$ can be relaxed (see Remark
\ref{rm:RelaxA} ahead). However, for simplicity, we will assume that
$A$ has the form given by one of the items of the above Definition.
If $A$ satisfies any of the the conditions of the above Definition
we say that \eqref{E:general_constraint} is a \emph{generalized
rolling constraint}.

Our terminology is motivated by a particular example:   the
Chaplygin sphere. The problem, introduced by Chaplygin in 1903
\cite{chapsphere}, concerns the motion of   a ball whose center of
mass coincides with its geometric center that rolls on the plane
without slipping. In this case, the matrix $A$ is given by
\begin{equation*}
A=\left ( \begin{array}{ccc} 0 & 1 & 0 \\ -1 & 0 & 0 \\ 0 & 0& 0
\end{array} \right ),
\end{equation*}
and $r$ is the radius of the sphere.

The motion of the Chaplygin ball has been the subject of much
research to our days. An important property is that (after a time
reparametrization) the reduced equations can be given a Hamiltonian
structure \cite{BorisovMamaev, Naranjo2008}. The geometry of the
Hamiltonization of the problem is intricate. In order to study this
phenomenon in a mathematically systematic fashion, we consider more
general possibilities for the matrix $A$.

The crucial property of $A$ that determines many of the dynamical and geometrical
features of the problem is its rank. The Chaplygin sphere
corresponds to the case  $\operatorname{rank} A=2$. Another familiar
case occurs when $\operatorname{rank} A=0$. In this case the
constraint \eqref{E:general_constraint} becomes  $\dot {\bf x}=0$
which can be interpreted as a conservation law for the free system
that states that the center of mass of the body is at rest in the
inertial frame. The motion of the system reduces to that of the
classical free rigid body.

We will also consider the cases where the rank of $A$ equals $3$ and
$1$ which, to our knowledge, have not yet been considered in the
literature.

\subsection{Generalities}

The configuration space for the system is $Q=\SO(3) \times \R^3$.
 Elements in $Q$  are of the form $q=(g,{\bf x})\in \operatorname{SO}(3)\times \R^3$. The vector ${\bf x} \in \R^3$
 is the position of the center of mass in space and the orthogonal matrix $g$ specifies the
orientation of the ball by relating two orthogonal frames, one
attached to the body and one that is fixed in space.
We will assume that the body frame has its origin at the center of mass and is aligned with the
principal axes of inertia of the body.
These frames
define the so-called \emph{space} and \emph{body coordinates}
respectively.

Recall that the Lie algebra $\so(3)$ can be identified with $\R^3$
equipped with the vector product via the \emph{hat map}:
\begin{equation}
\label{E:hat-map}
\boldsymbol{\eta}=(\eta_1,\eta_2,\eta_3)\mapsto \hat
{\boldsymbol{\eta}}= \left ( \begin{array}{ccc} 0& -\eta_3 & \eta_2
\\ \eta_3 & 0 & -\eta_1 \\ -\eta_2 &\eta_1 & 0
\end{array} \right ).
\end{equation}
Given a motion $(g(t), {\bf x}(t))\in Q $, the angular velocity
vector in space coordinates, $\boldsymbol{\omega}\in \R^3$, and the
angular velocity vector in body coordinates, $\vecOm \in \R^3$, are
respectively given by
\begin{equation*}
\hat {\boldsymbol{\omega}}(t)=\dot g(t) g^{-1}(t), \qquad \hat
\vecOm(t)= g^{-1}(t)\dot g(t),
\end{equation*}
and satisfy  $\vecOm= g^{-1} \vecom$. It will be useful to write the
constraint \eqref{E:general_constraint} in terms of the body angular
velocity as
\begin{equation}
\label{E:general_constraint-body} \dot {\bf x} = rAg \,\vecOm.
\end{equation}

The kinetic energy of the rigid body defines the Lagrangian
$\Lag:TQ\to \R$  by
\begin{equation}
\label{E:lag-rigidbody}
\Lag (g, \dot g, {\bf x}, \dot  {\bf x})=\frac{1}{2}( \I \vecOm)
\cdot  \vecOm  + \frac{m}{2}||\dot  {\bf x}||^2,
\end{equation}
where ``$\cdot$" denotes the Euclidean scalar product on $\R^3$, $m$
is the mass of the body and the $3\times 3$
 diagonal matrix
$\I$ is the inertia tensor with positive entries $I_1,I_2,I_3$.

\begin{remark}
\label{rm:RelaxA}
It is not hard to see that if the space axes are rotated by an element $h\in \operatorname{SO}(3)$, the
Lagrangian $\Lag$ is invariant and the constraint \eqref{E:general_constraint} is rewritten as
\begin{equation*}
\dot {\bf x} = rh^{-1}Ah \vecom.
\end{equation*}
Therefore,  the conditions for $A$ given in Definition \ref{condA}
can be relaxed by allowing conjugation by matrices    $h\in
\operatorname{SO}(3)$.
\end{remark}


\subsection{The equations of motion}

Let ${\bf p}=m\dot{\bf  x}$ be the  linear momentum of the body. In
accordance with the Lagrange-D'Alembert principle, the constraint
forces must annihilate any  velocity pair $(\dot {\bf x}, \vecOm)$
satisfying \eqref{E:general_constraint-body}. Therefore, the
equations of motion are given by
\begin{equation}
\label{E:Motion1} \dot {\bf p}= {\boldsymbol{\mu}}, \qquad \I \dot
\vecOm = \I \vecOm \times \vecOm -rg^{-1}A^T{\boldsymbol{\mu}},
\end{equation}
 where ``$\times$" denotes the vector product in $\R^3$ and
  the multiplier ${\boldsymbol{\mu}}\in \R^3$ is determined uniquely from the constraint \eqref{E:general_constraint-body}.

Differentiating \eqref{E:general_constraint-body} and using $\dot g
\vecOm=0$ we find ${\boldsymbol{\mu}}=mr Ag\dot \vecOm$. Thus, the
second equation in \eqref{E:Motion1} decouples from the first to
give
\begin{equation}
\label{E:Motion2}
 \I \dot \vecOm = \I \vecOm \times \vecOm -mr^2g^{-1}A^TA g \,\dot \vecOm.
\end{equation}
In principle, this equation should be complemented with the
\emph{reconstruction equation} $\dot g=g\hat \vecOm$. It will be shown
ahead that
  it suffices to consider the evolution of
the \emph{Poisson vector}  $\vecgamma:=g^{-1}{\bf e}_3$ that represents the vector ${\bf e}_3$
written in body coordinates. A direct calculation
gives
\begin{equation*}
\dot \vecgamma =\vecgamma \times \vecOm.
\end{equation*}
The decoupling in \eqref{E:Motion1} is due to the presence of
symmetries that will be discussed in detail in Section
\ref{SSS:reduction-example}. Once this equation is solved for $(g,
\vecOm)$, we  obtain ${\bf p}=mrAg\vecOm$ that follows from
\eqref{E:general_constraint-body}.

We introduce the kinetic momentum ${\bf K}\in \R^3$ by
\begin{equation}
\label{E:Kinetic_Momentum} {\bf K}:=\I \vecOm +mr^2g^{-1}A^TAg\,
\vecOm.
\end{equation}
This definition of the kinetic momentum allows us to define the
(reduced) Hamiltonian
\begin{equation}
\label{E:Ham} \Ham_\RR =\frac{1}{2} ( {\bf K} \cdot  \vecOm ),
\end{equation}
which coincides with the kinetic energy on the constraint space $\M$.

A direct calculation using \eqref{E:Motion2}  gives our final set of
equations
\begin{equation}
\label{E:Motion3}
\begin{split}
\dot {\bf K}= {\bf K} \times \vecOm, \qquad \dot \vecgamma
=\vecgamma \times \vecOm.
\end{split}
\end{equation}
To  understand why the above equations define a closed system for
$({\bf K}, \vecgamma)\in \R^3\times \R^3$, and to understand their
structure,  it is useful to perform a separate study for different
values of the rank of the matrix $A$. This will also show that the
Hamiltonian $\Ham_\RR$ can be considered as a function of ${\bf K}$
and $\vecgamma$.

\subsection{A pair of (almost) Poisson brackets for the equations of motion}

For each value of the rank of $A$ we will give two different
brackets that define the equations   of motion \eqref{E:Motion3}, with
respect to the reduced Hamiltonian $\Ham_\RR$. In general, these brackets
are almost Poisson, i.e. they do  not satisfy the Jacobi identity
 but we
will argue that one of them is more convenient than the other. They
will be denoted by $\{\cdot , \cdot \}_{\mbox{\tiny Rank{\it j}}}$
and $\{\cdot , \cdot \}'_{\mbox{\tiny Rank{\it j}}}$ where $j$
denotes the rank of the matrix $A$. Both  brackets
define the equations of motion \eqref{E:Motion3} in the sense that
the directional derivative of any function $f=f(\vecgamma, {\bf
K})\in C^\infty(\R^3\times \R^3)$ along the flow is given by $\dot
f=\{f,\Ham_\RR\}_{\mbox{\tiny Rank{\it
j}}}=\{f,\Ham_\RR\}'_{\mbox{\tiny Rank{\it j}}}$. The  geometric
interpretation of these brackets is the subject of the subsequent
sections. Concretely, in section \ref{S:Hamiltonization} (Theorems
\ref{T:red_bracket} and     \ref{T:red_bracket_gauge}), we will show
that the bracket $\{\cdot , \cdot \}_{\mbox{\tiny Rank{\it j}}}$
arises as the  reduction of the nonholonomic bracket introduced in
\cite{SchaftMaschke1994}, and that $\{\cdot , \cdot \}_{\mbox{\tiny
Rank{\it j}}}'$ arises as the reduction of a \emph{gauge
transformation} of the nonholonomic bracket.

The following
definitions will be useful in our discussion of the utility of the
two brackets:
\begin{definition} \label{D:Basics} Let $P$ be a  manifold equipped with an almost Poisson bracket $\{\cdot , \cdot \}$.
\begin{enumerate}
\item The \emph{(almost) Hamiltonian vector field} $X_f$ of a function $f\in C^\infty(P)$ is the vector field on $P$
defined as the usual derivation $X_f(g)=\{g,f\}$ for all $g\in
C^\infty(P)$.
\item  The \emph{characteristic distribution}
 of $\{\cdot , \cdot \}$ is the distribution on the manifold $P$ whose fibers are spanned by the (almost) Hamiltonian vector fields.
 \item Due to Leibniz condition of $\{\cdot , \cdot \}$, there is a bivector field $\pi \in \Gamma (\bigwedge ^2(TP))$
 such that for $f, g \in C^\infty(P)$ we have $\pi(df, dg) = \{f , g \}$.
 We say that $\pi$ is the \emph{bivector field associated to $\{ \cdot , \cdot \}$} and we
 denote by $\pi^\sharp : T^*P \rightarrow TP$ the map such that
  $\beta(\pi^\sharp(\alpha)) = \pi(\alpha, \beta)$. We will occasionally refer to  bivector fields simply as bivectors.
 Note that the characteristic distribution is the image of
 $\pi^\sharp$ and the Hamiltonian vector field  $X_f = - \pi^\sharp(df)$.
The 3-vector field $[\pi, \pi]$, where $[\cdot,\cdot]$ is the
Schouten bracket, may be different from zero, and it measures the
failure of the Jacobi identity through the relation
\begin{equation} \{f, \{g,h\}\}+ \{g,\{h,f\}\} + \{h,\{f,g\}\}=
\frac{1}{2} {\bf i}_{[\pi,\pi]}(df, dg, dh), \label{E:Jacobi}
\end{equation} for $f, g, h \in C^\infty(P)$.

 \item The bracket is called \emph{conformally Poisson} if there exists a strictly positive function $\varphi\in C^\infty (P)$ such that
 the bracket   $\varphi \{\cdot , \cdot \}$
satisfies the Jacobi identity, i.e. $[\varphi \pi, \varphi \pi ]=0$.
\end{enumerate}
\end{definition}

The well known symplectic stratification theorem states that the
characteristic distribution of a Poisson bracket is integrable and
its leaves are symplectic manifolds. Since multiplication of an
almost Poisson  bracket by a positive function  does not change the
characteristic distribution, a necessary condition for an almost
Poisson bracket to be conformally Poisson is that its characteristic
distribution be integrable.

We now come back to the discussion of our example for the different values of the rank of $A$.

\paragraph{If $A$ has rank 3.} In this case $A^{-1}=A^T$ and ${\bf K}=(\I +mr^2E)\vecOm$ where
$E$ denotes the $3\times 3$ identity matrix. It follows form
\eqref{E:Motion3} that the rotational motion of the body is the same
as that of a free rigid body whose total inertia tensor is $\I +
mr^2E$. It is trivial to write $\vecOm=(\I +mr^2E)^{-1}{\bf K}$ and
it is  clear that equations \eqref{E:Motion3} define a closed
system in $\R^3\times \R^3$.

The two brackets for the system for functions $f,g\in
C^\infty(\R^3\times \R^3)$ are given by
\begin{equation}
\label{E:Brackets-Rank3}
\begin{split}
&\{f,g\}_{\mbox{\tiny Rank3}}=-({\bf K}+mr^2\vecOm) \cdot \left  (
\frac{\partial f}{\partial {\bf K}} \times    \frac{\partial
g}{\partial {\bf K}} \right ) -\vecgamma \cdot \left (
\frac{\partial f}{\partial {\bf K}} \times    \frac{\partial
g}{\partial {\vecgamma}} -   \frac{\partial g}{\partial {\bf K}}
\times    \frac{\partial
f}{\partial {\vecgamma}} \right ), \\
&\{f,g\}'_{\mbox{\tiny Rank3}}=-{\bf K} \cdot \left ( \frac{\partial
f}{\partial {\bf K}} \times    \frac{\partial g}{\partial {\bf K}}
\right ) -\vecgamma \cdot \left ( \frac{\partial f}{\partial {\bf
K}} \times    \frac{\partial g}{\partial {\vecgamma}} -
\frac{\partial g}{\partial {\bf K}} \times    \frac{\partial
f}{\partial {\vecgamma}} \right ).
\end{split}
\end{equation}
The above brackets are quite different. On the one hand, the bracket
 $\{\cdot,\cdot\}'_{\mbox{\tiny Rank3}}$ satisfies the Jacobi identity. It in fact
 coincides with the Lie-Poisson bracket on the dual Lie algebra $\se(3)^*$.
On the other hand, the bracket  $\{\cdot,\cdot\}_{\mbox{\tiny Rank3}}$
is not even conformally Poisson as the following Proposition shows.
\begin{proposition}
\label{P:bracket-rank3} The characteristic distribution of the
almost Poisson bracket $\{\cdot,\cdot\}_{\mbox{\tiny \textup{Rank3}}}$  defined
in \eqref{E:Brackets-Rank3} is not integrable.
\end{proposition}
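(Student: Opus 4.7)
The plan is to apply the Frobenius theorem in its dual 1-form version. Starting from the bracket \eqref{E:Brackets-Rank3}, I would first compute the Hamiltonian vector fields of the coordinate functions. A direct calculation yields
$$
X_{\gamma_i} = (\vecgamma \times {\bf e}_i) \cdot \partial_{\bf K}, \qquad
X_{K_i} = ({\bf L} \times {\bf e}_i) \cdot \partial_{\bf K} + (\vecgamma \times {\bf e}_i) \cdot \partial_{\vecgamma},
$$
where ${\bf L} := {\bf K} + mr^2\vecOm$. A short dimension count using $\sum_i \gamma_i X_{\gamma_i} = 0$ and $\sum_i \gamma_i X_{K_i} = ({\bf L}\times \vecgamma)\cdot \partial_{\bf K}$ (which lies in $\operatorname{span}\{X_{\gamma_j}\}$) shows that $\operatorname{rank} D = 4$ on the open set $\{\vecgamma \neq 0\}$. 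One then verifies directly, using antisymmetry of the scalar triple product, that $D$ is the joint kernel of the two 1-forms
$$
\alpha_1 = \vecgamma \cdot d\vecgamma, \qquad \alpha_2 = \vecgamma \cdot d{\bf K} + {\bf L} \cdot d\vecgamma.
$$

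By Frobenius, $D$ is integrable if and only if $d\alpha_j(X, Y) = 0$ for all $X, Y \in \Gamma(D)$ and $j=1,2$. Clearly $d\alpha_1 = 0$. For the second form, writing $L_i = \lambda_i K_i$ with $\lambda_i = (I_i + 2mr^2)/(I_i + mr^2)$, one gets
$$
d\alpha_2 \;=\; -\sum_{i=1}^3 c_i\, d\gamma_i \wedge dK_i, \qquad c_i := \frac{mr^2}{I_i+mr^2} > 0.
$$
The key step is to evaluate $d\alpha_2$ on the two Hamiltonian vector fields $X_{\gamma_1}, X_{K_1} \in \Gamma(D)$, which gives
$$
d\alpha_2(X_{\gamma_1}, X_{K_1}) = c_2\, \gamma_3^2 + c_3\, \gamma_2^2,
$$
strictly positive on the dense open set $\{\gamma_2^2 + \gamma_3^2 > 0\}$. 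Therefore $D$ fails to be involutive there, and hence is not integrable.

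The only mild obstacle lies in identifying the annihilator $\alpha_2$: the vanishing of the $\vecgamma\vecgamma$-block of the bivector immediately accounts for $\alpha_1$, but $\alpha_2$ requires noting that the $\vecgamma$-derivative part of \eqref{E:Brackets-Rank3} couples the $K$- and $\vecgamma$-directions with the weight ${\bf L}={\bf K}+mr^2\vecOm$ rather than ${\bf K}$ alone. The presence of the extra term $mr^2\vecOm$ is precisely what produces nonzero $c_i$ and hence the non-integrability; replacing ${\bf L}$ by ${\bf K}$ (as in the alternative bracket $\{\cdot,\cdot\}'_{\mbox{\tiny Rank3}}$) would give $\lambda_i=1$, $d\alpha_2=0$, and an integrable symplectic foliation, consistent with that bracket being the Lie--Poisson structure on $\mathfrak{se}(3)^*$.
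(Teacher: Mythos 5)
Your proposal is correct and follows essentially the same route as the paper: the one-form $\alpha_2=\vecgamma\cdot d{\bf K}+({\bf K}+mr^2\vecOm)\cdot d\vecgamma$ is exactly the annihilator $\chi$ used there, and your evaluation $d\alpha_2(X_{\gamma_1},X_{K_1})=c_2\gamma_3^2+c_3\gamma_2^2$ reproduces (up to the sign in $\chi([X_{\gamma_1},X_{K_1}])=-d\chi(X_{\gamma_1},X_{K_1})$) the paper's computation $mr^2\bigl(\gamma_3^2/(I_2+mr^2)+\gamma_2^2/(I_3+mr^2)\bigr)\neq 0$. The extra material you include (the rank count, the second annihilator $\alpha_1$, and the comparison with $\{\cdot,\cdot\}'_{\mbox{\tiny Rank3}}$) is a correct but optional elaboration of the same argument.
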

\begin{proof}
The (almost) Hamiltonian vector field $X_f$ of  a function $f\in
C^\infty (\R^3\times \R^3)$ corresponding to the bracket
$\{\cdot,\cdot\}_{\mbox{\tiny Rank3}}$ is given by
\begin{equation*}
X_f=\left  (({\bf K}+mr^2\vecOm) \times \frac{\partial f}{\partial
{\bf K}} + { \vecgamma}\times \frac{\partial f}{\partial
\vecgamma}\right ) \cdot  \frac{\partial }{\partial {\bf K}}
 + \left ( \vecgamma\times
\frac{\partial f}{\partial {\bf K}} \right ) \cdot  \frac{\partial
}{\partial \vecgamma}
\end{equation*}
and it is annihilated by the non-closed one-form
\begin{equation*}
\chi =\vecgamma \cdot d {\bf K} + ({\bf K}+mr^2 \vecOm)\cdot
d\vecgamma.
\end{equation*}

We have
\begin{equation*}
X_{K_1}=(K_3+mr^2\Omega_3)\frac{\partial}{\partial{K_2}} -(K_2+mr^2\Omega_2)\frac{\partial}{\partial{K_3}}
+\gamma_3\frac{\partial}{\partial \gamma_2}- \gamma_2\frac{\partial}{\partial \gamma_3}, \qquad
X_{\gamma_1}=\gamma_3\frac{\partial}{\partial{K_2}}-\gamma_2\frac{\partial}{\partial{K_3}},
\end{equation*}
and thus
\begin{equation*}
\chi ([X_{\gamma_1}, X_{K_1}])=- d\chi ( X_{\gamma_1},
X_{K_1})=-mr^2\left ( \frac{\gamma_3^2}{I_2+mr^2} +
\frac{\gamma_2^2}{I_3+mr^2} \right ) \neq 0.
\end{equation*}
This shows that the commutator $[X_{\gamma_1}, X_{K_1}]$ does not
belong to the characteristic distribution which is therefore not
integrable.
\end{proof}

Therefore, to obtain a true Hamiltonian formulation of the reduced
equations of motion in the case where the rank of $A$ is 3, one
needs to work with the bracket $\{\cdot,\cdot\}'_{\mbox{\tiny
Rank3}}.$

\paragraph{If $A$ has rank 2.} As mentioned before, this case has the Chaplygin sphere as a
particular example. The analysis of
the two brackets has been done in \cite{Naranjo2008}. We include it
here for completeness and to link it with clarity to other results
of the present work.

In view of the form of $A$ given in item ({\it ii}\,) of Definition
\ref{condA}, we can write $A^TA=E-{\bf e}_3{\bf e}_3^T$ and thus,
according to \eqref{E:Kinetic_Momentum}, we get
\begin{equation*}
{\bf K}=(\I+mr^2E)\vecOm - mr^2 (\vecOm \cdot \vecgamma) \vecgamma,
\end{equation*}
which is precisely the expression for the angular momentum about the
contact point for the Chaplygin sphere.

The angular velocity $\vecOm$ can be written in terms of ${\bf K}$
and $\vecgamma$ as
\begin{equation*}
\vecOm=(\I+mr^2E)^{-1}{\bf K}+mr^2\left ( \frac{{\bf K}\cdot
(\I+mr^2E)^{-1}\vecgamma}{||\vecgamma||^2-mr^2\vecgamma \cdot
(\I+mr^2E)^{-1}\vecgamma}  \right ) (\I +mr^2 E)^{-1}\vecgamma,
\end{equation*}
so both the equations \eqref{E:Motion3} and the Hamiltonian
$\Ham_\RR$ are well defined on $\R^3\times \R^3$.

In this case, the two brackets for the system for functions $f,g\in
C^\infty(\R^3\times \R^3)$ are given by
\begin{equation}
\label{E:Brackets-Rank2}
\begin{split}
&\{f,g\}_{\mbox{\tiny Rank2}}=-({\bf K}+mr^2\vecOm -mr^2
(\vecOm\cdot \vecgamma)\vecgamma ) \cdot \left  ( \frac{\partial
f}{\partial {\bf K}} \times    \frac{\partial g}{\partial {\bf K}}
\right ) -\vecgamma \cdot \left ( \frac{\partial f}{\partial {\bf
K}} \times \frac{\partial g}{\partial {\vecgamma}} - \frac{\partial
g}{\partial {\bf K}} \times    \frac{\partial
f}{\partial {\vecgamma}} \right ), \\
&\{f,g\}'_{\mbox{\tiny Rank2}}=-({\bf K}-mr^2 (\vecOm\cdot
\vecgamma)\vecgamma) \cdot \left  ( \frac{\partial f}{\partial {\bf
K}} \times    \frac{\partial g}{\partial {\bf K}} \right )
-\vecgamma \cdot \left ( \frac{\partial f}{\partial {\bf K}} \times
\frac{\partial g}{\partial {\vecgamma}} - \frac{\partial g}{\partial
{\bf K}} \times    \frac{\partial f}{\partial {\vecgamma}} \right ).
\end{split}
\end{equation}
None of the above brackets  satisfies the Jacobi identity but it is
preferable to consider $\{\cdot,\cdot\}'_{\mbox{\tiny
Rank2}}$. The reason is that this bracket is
conformally Poisson with conformal factor
\begin{equation}
\label{E:rank2_conf_factor}
\varphi(\vecgamma)=\sqrt{||\vecgamma||^2 - mr^2\, (\vecgamma \cdot
(\I+mr^2E)^{-1}\vecgamma)}.
\end{equation}
This important observation was first made in \cite{BorisovMamaev}.
The characteristic distribution of $\{\cdot,\cdot\}'_{\mbox{\tiny
Rank2}}$ is thus integrable. The generic leaves
are the level sets of the Casimir functions $C_1({\bf K}, \vecgamma)= {\bf K}\cdot \vecgamma$ and
$C_2(\vecgamma)=||\vecgamma||^2$. Another important feature of this bracket is that
it is \emph{twisted Poisson} (in the sense of \cite{KlimcikStrobl, SeveraWeinstein}) as will be shown in Section
\ref{Ss:TwistedMechanics} (Theorem \ref{T:ChaplyginIsTwisted}).

On the other hand, similar to Proposition \ref{P:bracket-rank3} we
have
\begin{proposition}[\cite{Naranjo2008}]
\label{P:bracket-rank2} The characteristic distribution of the
almost Poisson bracket $\{\cdot,\cdot\}_{\mbox{\tiny \textup{Rank2}}}$  defined
in \eqref{E:Brackets-Rank2} is not integrable.
\end{proposition}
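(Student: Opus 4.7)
The strategy is to mirror the proof of Proposition~\ref{P:bracket-rank3}: find a one-form $\chi$ annihilating the characteristic distribution and two vector fields $X$, $Y$ in the distribution with $\chi([X,Y])=-d\chi(X,Y)\neq 0$. From \eqref{E:Brackets-Rank2} a direct computation gives
\[
X_f = \Bigl(W\times \tfrac{\partial f}{\partial {\bf K}} + \vecgamma \times \tfrac{\partial f}{\partial \vecgamma}\Bigr)\cdot \tfrac{\partial}{\partial {\bf K}} + \Bigl(\vecgamma \times \tfrac{\partial f}{\partial {\bf K}}\Bigr)\cdot \tfrac{\partial}{\partial \vecgamma},
\]
where $W := {\bf K}+mr^2\vecOm -mr^2(\vecOm\cdot\vecgamma)\vecgamma$; this is the Rank~3 vector corrected by the rank-deficient term. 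Using the cyclicity of the triple product and $\vecgamma\cdot(\vecgamma\times \cdot)=0$, one checks that the (non-closed) one-form $\chi := \vecgamma\cdot d{\bf K} + W\cdot d\vecgamma$ vanishes on every $X_f$.

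Next, I take $X=X_{\gamma_1}$ and $Y=X_{K_1}$, which from the formula above read
\[
X_{\gamma_1} = \gamma_3 \tfrac{\partial}{\partial K_2} - \gamma_2 \tfrac{\partial}{\partial K_3}, \qquad X_{K_1} = W_3 \tfrac{\partial}{\partial K_2} - W_2 \tfrac{\partial}{\partial K_3} + \gamma_3 \tfrac{\partial}{\partial \gamma_2} - \gamma_2 \tfrac{\partial}{\partial \gamma_3}.
\]
In $d\chi$ the pieces $d\gamma_i\wedge dK_i$ and $dK_i\wedge d\gamma_i$ (coming from the $\gamma_i dK_i$ and $K_i d\gamma_i$ parts of $\chi$) sum to zero, leaving
\[
d\chi = mr^2\, d\Omega_i\wedge d\gamma_i - mr^2\, \gamma_i\, d(\vecOm\cdot\vecgamma)\wedge d\gamma_i.
\]
On the pair $(X_{\gamma_1}, X_{K_1})$ the second term drops out: $d\gamma_i(X_{\gamma_1})=0$, and $\sum_i\gamma_i\, d\gamma_i(X_{K_1}) = \gamma_2\gamma_3-\gamma_3\gamma_2=0$. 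Hence $d\chi(X_{\gamma_1},X_{K_1}) = mr^2\bigl[\gamma_3\, d\Omega_2(X_{\gamma_1}) - \gamma_2\, d\Omega_3(X_{\gamma_1})\bigr]$.

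The main obstacle is that $\vecOm$ is defined only implicitly by ${\bf K}=(\I+mr^2 E)\vecOm-mr^2(\vecOm\cdot\vecgamma)\vecgamma$. Differentiating this relation in ${\bf K}$ yields $\partial\vecOm/\partial K_j = N\mathbf{e}_j$ with $N := (\I+mr^2 E -mr^2 \vecgamma\vecgamma^T)^{-1}$, and substituting collapses $d\chi(X_{\gamma_1},X_{K_1})$ to $mr^2\, v^T N_{\{2,3\}}\, v$, where $v=(\gamma_3,-\gamma_2)^T$ and $N_{\{2,3\}}$ is the $2\times 2$ principal submatrix of $N$ indexed by rows and columns $2,3$. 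Since $\I + mr^2 E - mr^2 \vecgamma\vecgamma^T$ is positive definite for any unit $\vecgamma$ (split the quadratic form along $\vecgamma$ and its orthogonal complement), $N$ is positive definite and so is its principal submatrix $N_{\{2,3\}}$. Therefore $d\chi(X_{\gamma_1},X_{K_1})>0$ wherever $(\gamma_2,\gamma_3)\neq 0$, forcing $[X_{\gamma_1},X_{K_1}]\notin \ker\chi$ and hence the characteristic distribution to be non-involutive.
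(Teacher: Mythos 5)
Your proof is correct and follows exactly the route the paper intends: the paper's own ``proof'' of Proposition~\ref{P:bracket-rank2} is simply the remark that it can be shown exactly as Proposition~\ref{P:bracket-rank3}, i.e.\ by exhibiting the annihilating one-form $\chi$ and checking $\chi([X_{\gamma_1},X_{K_1}])=-d\chi(X_{\gamma_1},X_{K_1})\neq 0$, which is precisely what you do. Your treatment of the implicit dependence $\vecOm = N{\bf K}$ and the positive-definiteness argument for $N_{\{2,3\}}$ supplies the details the paper omits, and is sound.
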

This can be shown exactly as we did for Proposition
\ref{P:bracket-rank3}. Therefore if the rank of $A$ is 2, just as in
the case of rank 3,  a Hamiltonian formulation of the reduced
equations can only be obtained if we work with the bracket
$\{\cdot,\cdot\}'_{\mbox{\tiny Rank2}}$. However, in this case one
needs to multiply the bracket by a conformal factor. This can be
interpreted as a \emph{time reparametrization}, see the discussion
in Section \ref{Ss:ConformalFactor}.

\paragraph{If $A$ has rank 1.} Taking into account the form of $A$ given in item (\emph{iii})
of Definition \ref{condA} we have
 $A^TA={\bf e}_3{\bf e}_3^T$ and thus, in view
of \eqref{E:Kinetic_Momentum}, we get
\begin{equation*} \label{Ex:Kinet_Moment_Rank1}
{\bf K}=\I\vecOm + mr^2 (\vecOm \cdot \vecgamma) \vecgamma.
\end{equation*}
The expression for the angular velocity $\vecOm$  in terms of ${\bf
K}$ and $\vecgamma$ is
\begin{equation*} \label{E:Omega_in_terms_of_K_gamma}
 \vecOm=\I^{-1}{\bf K} -mr^2
\left ( \frac{{\bf K}\cdot
\I^{-1}{\vecgamma}}{||\vecgamma ||^2 +mr^2\, ({\vecgamma} \cdot \I^{-1}{\vecgamma} ) }
\right ) \I^{-1}{\vecgamma},
\end{equation*}
so again, both the equations \eqref{E:Motion3} and the Hamiltonian
$\Ham_\RR$  are well defined on $\R^3\times \R^3$.

This time, the two brackets for the system are given by
\begin{equation}
\label{E:Brackets-Rank1}
\begin{split}
&\{f,g\}_{\mbox{\tiny Rank1}}=-({\bf K} +mr^2 (\vecOm\cdot
\vecgamma)\vecgamma ) \cdot \left  ( \frac{\partial f}{\partial {\bf
K}} \times    \frac{\partial g}{\partial {\bf K}} \right )
-\vecgamma \cdot \left ( \frac{\partial f}{\partial {\bf K}} \times
\frac{\partial g}{\partial {\vecgamma}} -   \frac{\partial
g}{\partial {\bf K}} \times    \frac{\partial
f}{\partial {\vecgamma}} \right ), \\
&\{f,g\}'_{\mbox{\tiny Rank1}}=-({\bf K}-mr^2\vecOm +mr^2
(\vecOm\cdot \vecgamma)\vecgamma) \cdot \left  ( \frac{\partial
f}{\partial {\bf K}} \times    \frac{\partial g}{\partial {\bf K}}
\right ) -\vecgamma \cdot \left ( \frac{\partial f}{\partial {\bf
K}} \times    \frac{\partial g}{\partial {\vecgamma}} -
\frac{\partial g}{\partial {\bf K}} \times    \frac{\partial
f}{\partial {\vecgamma}} \right ),
\end{split}
\end{equation}
for functions $f,g\in C^\infty(\R^3\times \R^3)$.

The properties of the brackets above are very similar to those
obtained in the case where the rank of $A$
 is 2 except that the roles of $\{\cdot,\cdot\}_{\mbox{\tiny Rank1}}$ and $\{\cdot,\cdot\}'_{\mbox{\tiny Rank1}}$
are reversed.

This time one can show that
%


\begin{proposition} \label{P:ConfPoisson_Rank1}
The bracket $\{\cdot,\cdot
\}_{\mbox{\tiny \textup{Rank1}}}$ defined in \eqref{E:Brackets-Rank1} is conformally Poisson with conformal factor
\begin{equation*}
\varphi(\vecgamma)=\sqrt{||\vecgamma||^2 + mr^2\, (\vecgamma \cdot
\I^{-1}\vecgamma ) }.
\end{equation*}
\end{proposition}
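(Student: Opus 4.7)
The plan is to prove conformal Poisson by direct verification of the Jacobi identity for the rescaled bracket $\varphi\{\cdot,\cdot\}_{\mbox{\tiny Rank1}}$; equivalently, writing $\pi_1:=\pi_{\mbox{\tiny Rank1}}$ for the associated bivector, one must show $[\varphi\pi_1,\varphi\pi_1]=0$ for the Schouten bracket. Since this trivector is a derivation in each slot, it suffices to check Jacobi on the six coordinate functions $(K_1,K_2,K_3,\gamma_1,\gamma_2,\gamma_3)$ on $\R^3\times\R^3$.

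First I would record the structure functions of $\pi_1$ by evaluating \eqref{E:Brackets-Rank1} on coordinates:
\begin{equation*}
\{K_i,K_j\}_{\mbox{\tiny Rank1}} = -\epsilon_{ijk}\bigl(K_k+mr^2(\vecOm\cdot\vecgamma)\gamma_k\bigr),\quad \{K_i,\gamma_j\}_{\mbox{\tiny Rank1}} = -\epsilon_{ijk}\gamma_k,\quad \{\gamma_i,\gamma_j\}_{\mbox{\tiny Rank1}}=0,
\end{equation*}
where $\vecOm$ is the function of $({\bf K},\vecgamma)$ obtained by inverting \eqref{Ex:Kinet_Moment_Rank1}. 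A useful preparatory identity, obtained by dotting the explicit formula for $\vecOm$ with $\vecgamma$, is
\begin{equation*}
\vecOm\cdot\vecgamma \;=\; \frac{\|\vecgamma\|^2\,({\bf K}\cdot\I^{-1}\vecgamma)}{\varphi^2},
\end{equation*}
which makes $\varphi\,mr^2(\vecOm\cdot\vecgamma)$ a tractable rational function in $({\bf K},\vecgamma)$. It is also worth verifying at the outset that $C_1={\bf K}\cdot\vecgamma$ and $C_2=\|\vecgamma\|^2$ are Casimirs of $\pi_1$, as this reduces the residual freedom in the calculation.

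The Jacobi identity then splits into cases. Three $\gamma$'s contribute nothing, since $\{\gamma_i,\gamma_j\}_{\mbox{\tiny Rank1}}=0$. The triples $(\gamma_i,\gamma_j,K_k)$ and $(\gamma_i,K_j,K_k)$ reduce, after factoring out $\varphi(\vecgamma)$ via the Leibniz rule, to the Lie-Poisson Jacobi identity on $\se(3)^*$ plus correction terms that cancel by antisymmetry of $\epsilon_{ijk}$. The decisive case is the 3-$K$ triple $(K_1,K_2,K_3)$: the cyclic sum $\sum_{\mathrm{cyc}}\{K_i,\{K_j,K_k\}_{\varphi\pi_1}\}_{\varphi\pi_1}$ collects contributions from $\partial\varphi/\partial\gamma_\ell$ and from both partials of $\vecOm\cdot\vecgamma$, and these recombine to zero using the preparatory identity together with the definition $\varphi^2=\|\vecgamma\|^2+mr^2(\vecgamma\cdot\I^{-1}\vecgamma)$.

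The main obstacle is the bookkeeping in this last 3-$K$ case; the argument runs structurally parallel to the Borisov-Mamaev proof for the rank 2 (Chaplygin sphere) case invoked after \eqref{E:rank2_conf_factor}, with the opposite sign in the $mr^2$ terms and with $\I^{-1}$ in place of $(\I+mr^2E)^{-1}$. A more geometric alternative, available once the machinery of Section~\ref{S:Hamiltonization} is in place, would be to identify $\pi_{\mbox{\tiny Rank1}}$ as the reduction of a gauge-transformed almost Poisson structure on $\M$ and deduce the conformal Poisson property from a geometric criterion; but at this stage of the paper the coordinate verification sketched above appears to be the most direct route.
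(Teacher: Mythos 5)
Your proposal matches the paper's proof: the paper likewise verifies the Jacobi identity for $\varphi\{\cdot,\cdot\}_{\mbox{\tiny Rank1}}$ directly on the coordinate functions $K_i,\gamma_i$, disposing of the cases with two or more $\gamma$'s via $\{\gamma_i,\gamma_j\}_{\mbox{\tiny Rank1}}=0$ and then carrying out the explicit computation for the representative triples $(K_1,K_2,\gamma_1)$, $(K_1,K_2,\gamma_3)$ and $(K_1,K_2,K_3)$, invoking the index symmetry of the bracket for the remaining cases. The one small caveat is that the mixed triples $(\gamma_i,K_j,K_k)$ belong to the ``long but straightforward computation'' rather than cancelling by antisymmetry alone, but this does not alter the strategy.
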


\begin{proof} We have to prove that the scaled bracket on $\mathcal{R}$ defined as $\varphi\{\cdot ,
\cdot \}_{\mbox{\tiny Rank1}}$
 satisfies the Jacobi identity, i.e.,
\begin{equation*}
\varphi\{\varphi \{f_1 , f_2 \}_{\mbox{\tiny Rank1}} ,
f_3\}_{\mbox{\tiny Rank1}} + \varphi\{\varphi \{f_2 , f_3
\}_{\mbox{\tiny Rank1}} , f_1\}_{\mbox{\tiny Rank1}}+ \varphi
\{\varphi \{f_3 , f_1 \}_{\mbox{\tiny Rank1}} , f_2\}_{\mbox{\tiny
Rank1}}=0
\end{equation*}
for all $f_1,f_2,f_3\in C^\infty(\R^3\times \R^3)$. In view of the
derivation properties of the bracket, is enough to show the identity
for the  coordinate functions $K_i,\gamma_i$.
 In our
case, since $\{\gamma_i , \gamma_j \}_{\mbox{\tiny Rank1}}=0$, it is
immediate to check that the identity  holds  if two of the three
functions are $\gamma_i$'s. A long but straightforward
computation shows that the identity holds for the
following three choices of functions $f_1=K_1,  f_2=K_2,
f_3=\gamma_1$; $f_1=K_1, f_2=K_2, f_3=\gamma_3$ and $f_1=K_1,
f_2=K_2, f_3=K_3$. Since the definition of the bracket is symmetric
with respect to the coordinate functions $K_i,
\gamma_i$, and since the Jacobi identity trivially holds if two of
the three functions $f_1, f_2, f_3$ are equal, all of the other
cases are either trivial or analogous.
\end{proof}

Hence, the characteristic
distribution of $\{\cdot,\cdot \}_{\mbox{\tiny Rank1}}$ is integrable and the generic leaves are
again the level sets of the
Casimir functions $ C_1( {\bf K}, \vecgamma )={\bf K}\cdot \vecgamma$ and $C_2(\vecgamma)=||\vecgamma||^2$. It
will also be shown in Section \ref{Ss:TwistedMechanics} that $\{\cdot,\cdot\}_{\mbox{\tiny Rank1}}$ is twisted Poisson.

On the other hand, analogous to Propositions \ref{P:bracket-rank3} and
\ref{P:bracket-rank2} we have
\begin{proposition}
\label{P:bracket-rank1} The characteristic distribution of the
almost Poisson bracket $\{\cdot,\cdot\}'_{\mbox{\tiny \textup{Rank1}}}$  defined in \eqref{E:Brackets-Rank1} is
not integrable.
\end{proposition}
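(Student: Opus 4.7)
The proof follows exactly the pattern of Proposition \ref{P:bracket-rank3}. First, from the formula for $\{\cdot,\cdot\}'_{\mbox{\tiny Rank1}}$, I would compute the (almost) Hamiltonian vector field associated to an arbitrary $f\in C^\infty(\R^3\times\R^3)$. Setting $\mathbf{L}:=\mathbf{K}-mr^2\vecOm+mr^2(\vecOm\cdot\vecgamma)\vecgamma$ for the vector that appears in the first term of the bracket, one obtains
\[
X_f=\bigl(\mathbf{L}\times\tfrac{\partial f}{\partial\mathbf{K}}+\vecgamma\times\tfrac{\partial f}{\partial\vecgamma}\bigr)\cdot\tfrac{\partial}{\partial\mathbf{K}}+\bigl(\vecgamma\times\tfrac{\partial f}{\partial\mathbf{K}}\bigr)\cdot\tfrac{\partial}{\partial\vecgamma};
\]
in particular $X_{K_1}=L_3\partial_{K_2}-L_2\partial_{K_3}+\gamma_3\partial_{\gamma_2}-\gamma_2\partial_{\gamma_3}$ and $X_{\gamma_1}=\gamma_3\partial_{K_2}-\gamma_2\partial_{K_3}$. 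Inspection of the formula for $X_f$ (using $\mathbf{a}\cdot(\mathbf{b}\times\mathbf{c})=\mathbf{b}\cdot(\mathbf{c}\times\mathbf{a})$) shows that the non-closed 1-form
\[
\chi:=\vecgamma\cdot d\mathbf{K}+\mathbf{L}\cdot d\vecgamma
\]
annihilates the characteristic distribution; in particular $\chi(X_{K_1})=\chi(X_{\gamma_1})=0$, so the standard identity gives $\chi([X_{\gamma_1},X_{K_1}])=-d\chi(X_{\gamma_1},X_{K_1})$.

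The remaining task is to show that $d\chi(X_{\gamma_1},X_{K_1})$ does not vanish identically. The computation is cleanest after rewriting $\chi=d(\vecgamma\cdot\mathbf{K})+mr^2\bigl((\vecOm\cdot\vecgamma)\vecgamma-\vecOm\bigr)\cdot d\vecgamma$, which yields $d\chi=mr^2\sum_i d\bigl((\vecOm\cdot\vecgamma)\gamma_i-\Omega_i\bigr)\wedge d\gamma_i$. Since $X_{\gamma_1}$ has no $\partial_\vecgamma$-components, only the $\mathbf{K}$-derivatives of $\vecOm$ contribute; using the explicit expression for $\vecOm(\mathbf{K},\vecgamma)$ together with the identity $\vecOm\cdot\vecgamma=\|\vecgamma\|^2(\mathbf{K}\cdot\I^{-1}\vecgamma)/D$, where $D:=\|\vecgamma\|^2+mr^2\vecgamma\cdot\I^{-1}\vecgamma$ is $\mathbf{K}$-independent, the algebra simplifies and one eventually arrives at an expression of the shape
\[
d\chi(X_{\gamma_1},X_{K_1})=\frac{(mr^2)^2\gamma_2^2\gamma_3^2(I_2-I_3)^2}{I_2^2 I_3^2\,D}-mr^2\left(\frac{\gamma_3^2}{I_2}+\frac{\gamma_2^2}{I_3}\right).
\]
This is nonzero at generic points (e.g.\ at $\vecgamma=(0,1,0)$ it reduces to $-mr^2/I_3\ne 0$), so the commutator $[X_{\gamma_1},X_{K_1}]$ escapes the characteristic distribution and by Frobenius the distribution fails to be integrable.

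The main obstacle is the last computation: in contrast to Proposition \ref{P:bracket-rank3}, here $\vecOm$ depends nonlinearly on $\mathbf{K}$ through the scalar factor multiplying $\I^{-1}\vecgamma$, so the matrix of partials $\partial\Omega_k/\partial K_j$ has a rank-one correction to $(\I^{-1})_{kj}$. The simplification that makes the calculation tractable is the observation that $\vecOm\cdot\vecgamma$ is \emph{linear} in $\mathbf{K}$ (the nonlinearity cancels), which reduces the problem to bookkeeping that is structurally identical to the rank~3 case and produces a manifestly nonzero final expression.
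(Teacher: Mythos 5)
Your proof is correct and follows exactly the route the paper intends (the paper only says ``the proof is again similar'' to Proposition \ref{P:bracket-rank3}, and you have carried out the analogous computation explicitly; your final formula for $d\chi(X_{\gamma_1},X_{K_1})$ checks out, and its value $-mr^2/I_3$ at $\vecgamma=(0,1,0)$ suffices). The only quibble is the closing remark: $\vecOm$ is in fact \emph{linear} in $\mathbf{K}$ for fixed $\vecgamma$ (the scalar factor $\mathbf{K}\cdot\I^{-1}\vecgamma/D$ is linear in $\mathbf{K}$), so the correct statement is that $\partial\Omega_k/\partial K_j$ is a constant-in-$\mathbf{K}$ matrix with a rank-one correction to $\I^{-1}$, which is exactly what your computation uses.
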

The proof is again similar.

Thus, this time the Hamiltonian structure of the reduced equations
can only be obtained with the bracket $\{\cdot,\cdot\}_{\mbox{\tiny
Rank1}}$, again through the multiplication by a conformal factor that
is interpreted as a time reparametrization.

\paragraph{If $A$ has rank 0.} In this case $A$ is the zero matrix and the constraints are holonomic and can be seen as a conservation
law for the standard free rigid body. We have ${\bf K}=\I\vecOm$ and
clearly the equations \eqref{E:Motion3} and the Hamiltonian
$\Ham_\RR$ are well defined  on $\R^3\times \R^3$. The two brackets
are given by
\begin{equation}
\label{E:Brackets-Rank0}
\begin{split}
&\{f,g\}_{\mbox{\tiny Rank0}}=-{\bf K} \cdot \left  ( \frac{\partial
f}{\partial {\bf K}} \times    \frac{\partial g}{\partial {\bf K}}
\right ) -\vecgamma \cdot \left ( \frac{\partial f}{\partial {\bf
K}} \times    \frac{\partial g}{\partial {\vecgamma}} -
\frac{\partial g}{\partial {\bf K}} \times    \frac{\partial
f}{\partial {\vecgamma}} \right ), \\
&\{f,g\}'_{\mbox{\tiny Rank0}}=-({\bf K}-mr^2\vecOm) \cdot \left  (
\frac{\partial f}{\partial {\bf K}} \times    \frac{\partial
g}{\partial {\bf K}} \right ) -\vecgamma \cdot \left (
\frac{\partial f}{\partial {\bf K}} \times \frac{\partial
g}{\partial {\vecgamma}} -   \frac{\partial g}{\partial {\bf K}}
\times    \frac{\partial f}{\partial {\vecgamma}} \right ).
\end{split}
\end{equation}
The situation is analogous to that of the case when the rank of $A$
is 3 but, once more, the roles of the brackets are reversed. While
$\{\cdot,\cdot \}_{\mbox{\tiny Rank0}}$ coincides with the Lie-Poisson
bracket in the dual Lie algebra $\se(3)^*$ (and hence satisfies the
Jacobi identity), we have
\begin{proposition}
\label{P:bracket-rank0} The characteristic distribution of the
almost Poisson bracket $\{\cdot,\cdot\}'_{\mbox{\tiny \textup{Rank0}}}$  defined in \eqref{E:Brackets-Rank0} is
not integrable.
\end{proposition}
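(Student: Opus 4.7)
The plan is to mimic exactly the strategy used in the proof of Proposition \ref{P:bracket-rank3}: exhibit a 1-form $\chi$ on $\R^3\times \R^3$ that annihilates the characteristic distribution of $\{\cdot,\cdot\}'_{\mbox{\tiny Rank0}}$, then produce two Hamiltonian vector fields $X_{\gamma_1}, X_{K_1}$ whose Lie bracket fails to lie in the kernel of $\chi$. By Frobenius, this shows non-integrability.

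First I would read off from \eqref{E:Brackets-Rank0} the associated bivector field and write the Hamiltonian vector field of an arbitrary $f$ as
\begin{equation*}
X_f = \Bigl(({\bf K}-mr^2\vecOm)\times \tfrac{\partial f}{\partial {\bf K}} + \vecgamma\times \tfrac{\partial f}{\partial \vecgamma}\Bigr)\cdot \tfrac{\partial}{\partial {\bf K}} + \Bigl(\vecgamma\times \tfrac{\partial f}{\partial {\bf K}}\Bigr)\cdot \tfrac{\partial}{\partial \vecgamma},
\end{equation*}
where now $\vecOm=\I^{-1}{\bf K}$, since $A=0$. A routine linear-algebra computation (using $a\cdot(u\times v)=v\cdot(a\times u)$ and the fact that the characteristic distribution has rank $5$) shows that, up to scale, the unique 1-form annihilating every $X_f$ is
\begin{equation*}
\chi = \vecgamma\cdot d{\bf K} + ({\bf K}-mr^2\vecOm)\cdot d\vecgamma.
\end{equation*}
This is completely analogous to the choice of $\chi$ in the proof of Proposition \ref{P:bracket-rank3}, with the sign of $mr^2\vecOm$ reversed.

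Next I would compute $d\chi$. Writing $\Omega_i = K_i/I_i$, we get $\chi=\sum_i\gamma_i\,dK_i+\sum_i (1-mr^2/I_i)K_i\,d\gamma_i$, and therefore
\begin{equation*}
d\chi = \sum_{i=1}^3 \frac{mr^2}{I_i}\,d\gamma_i\wedge dK_i,
\end{equation*}
which is nonzero. In particular, evaluating on
\begin{equation*}
X_{K_1} = (K_3-mr^2\Omega_3)\tfrac{\partial}{\partial K_2}-(K_2-mr^2\Omega_2)\tfrac{\partial}{\partial K_3}+\gamma_3\tfrac{\partial}{\partial \gamma_2}-\gamma_2\tfrac{\partial}{\partial \gamma_3}, \qquad X_{\gamma_1}=\gamma_3\tfrac{\partial}{\partial K_2}-\gamma_2\tfrac{\partial}{\partial K_3},
\end{equation*}
gives
\begin{equation*}
\chi([X_{\gamma_1},X_{K_1}]) = -d\chi(X_{\gamma_1},X_{K_1}) = mr^2\Bigl(\frac{\gamma_3^2}{I_2}+\frac{\gamma_2^2}{I_3}\Bigr)\neq 0
\end{equation*}
on a dense open set, so $[X_{\gamma_1},X_{K_1}]$ does not belong to the characteristic distribution, which is therefore non-integrable.

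I do not anticipate a genuine obstacle: the calculation is structurally identical to the Rank 3 case, the only real checkpoint being the correct identification of $\chi$ (which is dictated by the rank-$5$ bivector). A minor bookkeeping point is that, unlike in the Rank 3 proof where the failure involves $\gamma_i^2/(I_j+mr^2)$, here it involves $\gamma_i^2/I_j$; the obstruction arises from the \emph{same} $mr^2$ correction to ${\bf K}$ but now in the opposite direction, which is consistent with the observation that the roles of the two brackets are swapped when compared with the rank $3$ situation.
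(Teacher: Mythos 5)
Your proposal is correct and is exactly the argument the paper intends: it declares the proof ``identical to that of Proposition \ref{P:bracket-rank3},'' and you have carried out that identical computation with the sign of the $mr^2\vecOm$ correction reversed, correctly obtaining $\chi([X_{\gamma_1},X_{K_1}])=mr^2\bigl(\gamma_3^2/I_2+\gamma_2^2/I_3\bigr)\neq 0$ away from $\gamma_2=\gamma_3=0$. No gaps.
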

The proof is identical to that of Proposition \ref{P:bracket-rank3}.

So in this case, the Hamiltonian structure of the reduced equations
\eqref{E:Motion3} can only be seen by working with the bracket
$\{\cdot,\cdot \}_{\mbox{\tiny Rank0}}$.

\subsection{Symmetries}
\label{SSS:reduction-example}

The reduced equations \eqref{E:Motion3} can be interpreted as the
output of a reduction process that we now explain. We begin by
noticing that the configuration space $Q=\operatorname{SO}(3)\times \R^3$ can be endowed with
 the Lie group  structure of the three dimensional euclidean
transformations $\operatorname{SE}(3)$. The group multiplication is
given by
\begin{equation*}
(g_1, {\bf x}_1)(g_2, {\bf x}_2)=(g_1g_2, g_1{\bf x}_2+{\bf x}_1).
\end{equation*}
Let $H$ be the Lie subgroup of $\operatorname{SE}(3)$ defined by
\begin{equation*}
H=\{(h,{\bf y})\in \operatorname{SE}(3) \, : \, h{\bf e}_3={\bf e}_3
\}.
\end{equation*}
For matrices  $A$ satisfying any of the conditions of Definition
\ref{condA}, it follows that $hA=Ah$ whenever $(h,{\bf y})\in H$.
 We consider the left action of $H$ on $Q$ by left multiplication.
The tangent lift of the action to $TQ$ maps
\begin{equation*}
(h,{\bf y}) : (g, {\bf x}, \vecom, \dot {\bf x})\mapsto (hg, h{\bf x}+{\bf y}, h \vecom, h\dot {\bf x}) \qquad
\mbox{ or} \qquad (h,{\bf y}) : (g, {\bf x}, \vecOm, \dot {\bf x})\mapsto (hg, h{\bf x}+{\bf y}, \vecOm, h\dot {\bf x}),
\end{equation*}
depending on the trivialization of $\operatorname{SO}(3)$ that one is working with.
Notice that the Lagrangian $\Lag$ given by \eqref{E:lag-rigidbody} is invariant under
the lifted action. Moreover, since $h$ commutes with $A$ for any $(h,{\bf y})\in H$,
the constraint \eqref{E:general_constraint} is also invariant.

The momenta $({\bf K}, {\bf p})$ are geometrically interpreted as coordinates on the fibers of
the (trivial) cotangent bundle $T^*Q$.  The constraint space $\M\subset T^*Q$ is determined
by the condition ${\bf p}=mrAg\vecOm$, so the triple $(g,{\bf x}, {\bf K})\in \operatorname{SO}(3)\times \R^3\times \R^3$
specifies a unique point in $\M$. Reciprocally, any point in $\M$ can be represented by
a triple $(g,{\bf x}, {\bf K})$.

By invariance of the Lagrangian and the constraints, the  lifted action of $H$ to
$T^*Q$ leaves $\M$ invariant and, therefore, restricts to $\M$. The restricted action is free
and proper so the orbit space $\RR:= \M/H$ is a smooth manifold. The reduced space $\RR$
can be identified with $\operatorname{S}^2\times \R^3$;  the projection $\rho:\M\to \RR$
is given by
\begin{equation}
\label{E:orbit-projection}
\rho(g,{\bf x}, {\bf K})=(\vecgamma, {\bf K}),
\end{equation}
where $\vecgamma =g^{-1}{\bf e}_3 \in \operatorname{S}^2$, and is a surjective submersion.
 The conditions $h{\bf e}_3={\bf e}_3$ and
$hA=Ah$, that are satisfied for $(h,{\bf y})\in H$, ensure that the above mapping is well defined
(in particular notice that ${\bf K}$ is invariant).
The reduced equations on $\RR$ are precisely \eqref{E:Motion3} when restricted to the level set
$||\vecgamma||=1$. Notice that $\RR$ inherits the (trivial) vector bundle structure $\operatorname{S}^2\times \R^3
\to \operatorname{S}^2$ from $\M$.

In this sense,  the entries of $\vecgamma$
should be considered as redundant coordinates for the sphere $\operatorname{S}^2$
and the entries of ${\bf K}$ as coordinates on the fibers of $\RR$.
Notice that, for any $j=0,\dots, 3$, both brackets
$\{\cdot , \cdot \}_{\mbox{\tiny Rank{\it j}}}$
and $\{\cdot , \cdot \}'_{\mbox{\tiny Rank{\it j}}}$ restrict to the level set $||\vecgamma||=1$
since $C_2(\vecgamma)=||\vecgamma||^2$ is a Casimir function.

\subsection{Kinematics and integrability of the constraint distribution}
\label{SS:Kinematics}

The constraint distribution on $Q$ defined by equation
\eqref{E:general_constraint} has fundamentally different properties
according to the rank of the matrix $A$ satisfying the conditions of
Definition \ref{condA}. On one extreme we have the case where $A=0$
and the distribution is integrable (the 3-dimensional integral
leaves are given by $\operatorname{SO}(3)\times \{ {\bf x} \} $ for
${\bf x}\in \R^3$). As mentioned before, in this case the
constraints are holonomic and the problem reduces to the classical
free rigid body problem (the center of mass of the body  ${\bf x}$
remains constant in our inertial frame).

The extreme opposite  case occurs when rank $A=3$. In this case the
corresponding distribution is \emph{completely nonholonomic} or
\emph{bracket-generating}, see e.g. \cite{Montgomery}. By Chow's
theorem, any  two points in the configuration space $Q$ can be
joined by a curve $(g(t),{\bf x}(t))$ satisfying the constraints.
Thus, at least at the kinematical level, there are no restrictions
on the values of ${\bf x}$.

The cases where the rank of $A$ is $1$ or $2$ lie in between the
situations described above. If the rank of $A=2$, the third
component $x_3$ of ${\bf x}$ remains constant during the motion.
This is in agreement with our observation that the Chaplygin sphere
problem is a particular case of this type of constraints - the
sphere rolls on a horizontal plane $x_3=const$. In this case, the
constraint distribution is non-integrable but is nevertheless
tangent to the foliation of $Q$ by $5$-dimensional leaves defined by
constant values of $x_3$.

Finally, for the case where the rank of $A$ equals $1$, the first
two components $x_1, \, x_2,$ of
 ${\bf x}$ remain constant during the motion. The body goes up or down along the $x_3$ axis
 at a speed that is proportional to its angular velocity about this axis.
 This time the constraint distribution is non-integrable but tangent to the $4$-dimensional
 leaves given by constant values of $x_1$ and $x_2$.

Without going into  technical definitions, we  simply  state that
the degree of non-integrability of the constraint distribution
increases with the rank of $A$, passing from an integrable
distribution if $A=0$ to a completely nonholonomic distribution if
rank $A=3$. It is interesting to see how this correlates with the
need of a gauge-transformation to Hamiltonize the problem (Remark
\ref{R:Gauge-vs-Integrability}).

%
%
%
%
%
%
%
%
%
%
%
%

\section{Geometric Setting}
\label{S:Geometry}
%

This section is concerned with the basic features of (almost) Dirac
structures \cite{Courant}, with focus on the regular case, as well
as their gauge transformations \cite{SeveraWeinstein}.  As we will see, these geometric structures
provide the setup that gives rise to the different brackets introduced in Section \ref{S:Examples}.
 Although we will be mostly interested in the geometry of bivector fields, our discussion is presented at the general level
 of (almost) Dirac structures, as they provide the
 framework in which gauge transformations are most natural.

\subsection{Dirac and almost Dirac structures}

A {\it Dirac structure} on a manifold $P$ is a subbundle $L$ of the Whitney sum $TP
\oplus T^*P$ such that
\begin{enumerate}
\item [$(i)$] $L$ is a maximal isotropic subbundle of $TP \oplus T^*P$
with respect to the pairing $\langle \cdot , \cdot \rangle$ given by
\begin{equation*} \langle(X,\alpha),(Y,\beta)\rangle = \alpha(Y)+\beta(X), \quad \mbox{for} \
(X,\alpha), (Y,\beta) \in TP \oplus T^*P.\end{equation*}

\item[$(ii)$] $\Gamma(L)$ is closed with respect to the Courant bracket
defined on $\Gamma(TP \oplus T^*P)$ given by
\begin{equation*} \lcf(X,\alpha),(Y, \beta)\rcf = ([X,Y], \pounds_X\beta - {\bf i}_Y
d\alpha),  \label{eq:CourantBracket} \end{equation*} for $(X,\alpha),
(Y,\beta) \in \Gamma(TP \oplus T^*P)$, i.e., $\lcf \Gamma(L),
\Gamma(L) \rcf \subseteq \Gamma(L)$.
\end{enumerate}
The underlying manifold  $P$ is sometimes referred to as a  \emph{Dirac manifold}.

Let $pr_1: TP \oplus T^*P \rightarrow TP$ be the projection onto the
first factor of  $TP \oplus T^*P$. A Dirac structure
$L$ on the manifold $P$ carries a Lie algebroid structure with
anchor $pr_1|_L: L \rightarrow TP$ and bracket given by the Courant
bracket $\lcf \cdot , \cdot \rcf$ restricted to $\Gamma(L)$. It
follows that, for a Dirac structure $L$, the distribution $pr_1(L)
\subset TP$ is integrable, i.e., $P$ can be decomposed into leaves
$\mathcal{O}$ such that for each $x \in P$, $T_x\mathcal{O} =
pr_1(L_x)$. If $pr_1(L)$ has constant rank (i.e., $pr_1(L_x) \subset
T_xP$ has the same dimension for all $x \in P$), then we say that
$L$ is a {\it regular} Dirac structure, and $pr_1(L)$ defines a
regular foliation. Just as a Poisson manifold $P$ is the disjoint
union of its symplectic leaves, each leaf of a Dirac manifold $P$
carries a presymplectic form.

\begin{examples} Let  $\Omega$ be a
closed 2-form and $\pi$ be a Poisson bivector field, and consider
the maps $\Omega^\flat: TP \rightarrow T^*P$ given by $\Omega^\flat
(X) = \Omega(\cdot, X)= - {\bf i}_X \Omega =$ for $X \in TP$ and
$\pi^\sharp:T^*P \to TP$ as in Definition \ref{D:Basics}. Then
$$L_\Omega := graph(\Omega^\flat) = \{(X,\alpha) \in TP \oplus T^*P \ : \ {\bf i}_X\Omega=-\alpha\}$$ and $$L_\pi := graph(\pi^\sharp) = \{(X,\alpha) \in TP \oplus T^*P \
: \ \pi^\sharp(\alpha) =X \}$$ are Dirac structures. Note that
$pr_1$ identifies $L_\Omega$ with $TP$ as Lie algebroids. Similarly,
$L_\pi$ can be naturally identified with $T^*P$, and the Lie
algebroid structure induced on $T^*P$ by $L_\pi$
 has anchor $\pi^\sharp:T^*P \rightarrow TP$, and bracket
\begin{equation*}
[\alpha, \beta]_\pi = \pounds_{\pi^\sharp(\alpha)} (\beta)-
\pounds_{\pi^\sharp(\beta)}(\alpha) - d(\pi(\alpha, \beta))  =
\pounds_{\pi^\sharp(\alpha)} \beta -  {\bf i}_{\pi^\sharp(\beta)}
d\alpha. \label{eq:LAPoisson}
\end{equation*}
This bracket is uniquely characterized by $[df, dg]_\pi = d\{f,g\}$ and the Leibniz
identity.
\end{examples}

\medskip

If a subbundle $L$ of $TP \oplus T^*P$ satisfies ({\it i}\,) above,
but not necessarily ({\it ii}\,), then $L$ is called an {\it almost
Dirac structure}. Condition ({\it ii}\,) is called the {\it
integrability condition}. We say that $L$ is a {\it regular} almost
Dirac structure when the distribution $pr_1(L)$ on $P$ has constant
rank. Notice that this distribution might not be integrable in the
almost Dirac case.

\begin{examples} If $\Omega$ is an arbitrary 2-form or $\pi$ an
arbitrary bivector field, then their graphs $L_\Omega$ and $L_\pi$
are almost Dirac structures. The failure of the integrability with
respect to the Courant bracket of $L_\Omega$ and $L_\pi$ is measured
by $d\Omega$ and $\frac{1}{2}[\pi,\pi]$, respectively. For $L_\pi
=graph (\pi^\sharp)$, the distribution $\pi^\sharp
(T^*P)=pr_1(L_\pi)$ is generally non-integrable. If it has constant
rank we call the almost Poisson structure \textit{regular}. Note
that the bracket $[\cdot, \cdot ]_\pi$ defined as in
(\ref{eq:LAPoisson}) is $\R-$bilinear, skew-symmetric and satisfies
the Leibniz identity. However, in general, $\pi^\sharp$ does not
necessarily preserve the bracket; instead, (see e.g. \cite{BCrainic}),
\begin{equation}\pi^\sharp ( [\alpha,\beta]_\pi) =
[\pi^\sharp(\alpha),\pi^\sharp(\beta)] - \frac{1}{2} {\bf
i}_{\alpha\wedge \beta} [\pi, \pi], \qquad \mbox{for } \alpha, \beta
\in \Omega^1(P). \label{E:not_Morph}
\end{equation}
\end{examples}

Note that an almost Dirac structure $L$ on $P$ is of the form $L_\pi
= graph(\pi^\sharp)$ for a bivector $\pi$ if and only if
\begin{equation}
TP \cap L = \{0\}, \label{Eq:TPcapL}
\end{equation}
and $L$ is of the form
$L_\Omega=graph(\Omega^\flat)$ for a 2-form $\Omega$ if and only if
$T^*P \cap L = \{0\}$, see \cite{Courant}.
Another example of an almost Dirac structure that will be very
useful for our purposes is given by $L \subset TP \oplus
T^*P$ defined as
\begin{equation}
L:=\{(X, \alpha) \in TP \oplus T^*P \ : \ X \in F , \ {\bf i}_X
\Omega |_F = -\alpha|_F \}, \label{Ex:almostDirac}
\end{equation}
where $F \subset TP$ is a subbundle, $\Omega$ is a 2-form on $P$ and
$\cdot \, |_F$ denotes the point-wise restriction to $F$.  If the
subbundle $F$ is an integrable distribution and $\Omega$ is closed,
then $L$ is a Dirac structure.


\begin{proposition} \label{L:Dirac2Section}
 The following statements hold:

\begin{enumerate} \item [(i)] There is a one-to-one correspondence between regular almost Dirac
structures $L \subset TP \oplus T^*P$ and pairs $(F, \Omega_F)$,
where $F$ is a regular distribution on $P$ and $\Omega_F \in \Gamma(\bigwedge^2F^*)$.
\item [(ii)] Let $F \subset TP$ be a regular
distribution on P. Given a section $\Omega_F \in
\Gamma(\bigwedge^2F^*)$, there exists a 2-form $\Omega$ on $P$ such
that $\Omega |_{F} = \Omega_F$.
\end{enumerate}
\end{proposition}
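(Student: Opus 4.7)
The plan is to prove part (ii) first, because a concrete global extension of $\Omega_F$ lets me exhibit the candidate almost Dirac structure in part (i) as a smooth subbundle via the prescription \eqref{Ex:almostDirac}, and that takes care of smoothness for free.

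For part (ii), I would fix an auxiliary Riemannian metric on $P$. Since $F$ is a regular subbundle, it admits a smooth orthogonal complement, and hence a smooth bundle projection $P_F : TP \to F$. Setting
\[
\Omega(X,Y) := \Omega_F(P_F X,\, P_F Y)
\]
produces a smooth 2-form on $P$ whose restriction to $F$ is $\Omega_F$, since $P_F$ is the identity on $F$. A partition-of-unity argument over local trivializations of $F$ works equally well and does not require an auxiliary metric.

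For the forward direction of the bijection in part (i), given a regular almost Dirac structure $L$, I put $F := pr_1(L)$, which is a regular distribution by hypothesis. For $X \in F_x$ I set
\[
\Omega_F(X,Y) := -\alpha(Y), \qquad Y \in F_x,\ (X,\alpha) \in L_x.
\]
This is well defined: if $(X,\alpha')$ is another lift, then $(0,\alpha-\alpha') \in L_x$, and pairing isotropically with an arbitrary $(Y,\beta) \in L_x$ yields $(\alpha-\alpha')(Y) = 0$, so $\alpha|_{F_x}$ is independent of the lift. Skew-symmetry of $\Omega_F$ is exactly the isotropy relation $\alpha(Y)+\beta(X)=0$, and smoothness of $\Omega_F \in \Gamma(\bigwedge^2 F^*)$ is inherited from the smoothness of the subbundle $L$.

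For the reverse direction, given $(F,\Omega_F)$, I use part (ii) to extend $\Omega_F$ to a 2-form $\Omega$ on $P$ and define
\[
L_{(F,\Omega_F)} := \{(X,\alpha) \in TP \oplus T^*P : X \in F,\ {\bf i}_X\Omega|_F = -\alpha|_F\},
\]
which is of the type \eqref{Ex:almostDirac} and hence a smooth subbundle; notice that it depends only on $\Omega_F = \Omega|_F$, not on the chosen extension. Isotropy is automatic from the skew-symmetry of $\Omega_F$, and the fibrewise rank count $\operatorname{rk}(F) + (\dim P - \operatorname{rk}(F)) = \dim P$ gives maximality. It remains to check that the two constructions are mutually inverse: starting from $L$ and rebuilding, $L_{(F,\Omega_F)}$ contains $L$ by the very definition of $\Omega_F$ and both have fibre dimension $\dim P$, so they coincide; starting from $(F,\Omega_F)$ and extracting the pair, one recovers $F = pr_1(L_{(F,\Omega_F)})$ on the nose and the extracted fibrewise 2-form is $\Omega_F$ by design. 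The only delicate point in the whole argument is the well-definedness of $\Omega_F$ in the forward direction, which is precisely where maximal isotropy of $L$ is essential; everything else is a dimension count or a direct appeal to \eqref{Ex:almostDirac}.
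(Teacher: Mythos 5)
Your proof is correct and follows essentially the same route as the paper: extract $(F,\Omega_F)$ from $L$ via $\Omega_F(X,Y)=-\alpha(Y)$, rebuild $L$ from the pair as in \eqref{Ex:almostDirac}, and prove part (ii) by projecting along a complement of $F$ obtained from an auxiliary Riemannian metric. You simply carry out explicitly the steps the paper leaves as ``straightforward'' — the isotropy argument for well-definedness of $\Omega_F$, the rank count for maximal isotropy, and the check that the two constructions are mutually inverse — all of which are correct.
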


\begin{proof}
($i$) Let $L \subset TP \oplus T^*P$ be a regular almost Dirac
structure
 with distribution $F:=pr_1(L)
\subset TP$ on $P$ (not necessarily integrable). Consider the
section $\Omega_F$ in $\Gamma(\bigwedge ^2 F^*)$ given, at each $x
\in P$, by
$$\Omega_F(x)(X_{(x)}, Y_{(x)})= - \alpha_{(x)} (Y_{(x)}), \qquad
\mbox{for} \ X,Y \in \Gamma(F) \mbox{ such that } (X_{(x)},
\alpha_{(x)}) \in L_{x}.
$$
It is a straightforward computation to see that $\Omega_F$ is well
defined, i.e., it is independent of the choice of $\alpha$.
Conversely, given a regular distribution $F$ on $P$ and $\Omega_F
\in \Gamma(\bigwedge^2F^*)$, we may define the subbundle $L \subset
TP \oplus T^*P$ as the pairs $(X,\alpha)$ such that $X\in F$ and
${\bf i}_X \Omega_F = -\alpha|_F$. 

($ii$) Let ${\mathcal W} \subset TP$ be a regular smooth
distribution such that it is a complement of $F$ on $P$, i.e.,
$T_xP=F_x \oplus
{\mathcal W}_x$ for each $x \in P$ 
(e.g., ${\mathcal W}_x$ can be chosen to be the orthogonal
complement of $F_x$ with respect to a Riemmanian metric). The
2-form $\Omega$ on $P$ can be defined by
$$ \Omega(X,Y)= \Omega_F(X_F,Y_F),$$   for $X,Y \in
\mathfrak{X}(P)$ such that $ X=X_F + X_{\mathcal W}$ \, and \, $
Y=Y_F + Y_{\mathcal W}$, where $X_F, Y_F \in \Gamma(F)$ and
$X_{\mathcal W}, Y_{\mathcal W} \in \Gamma({\mathcal W})$.
Differentiability of $\Omega$  follows from its definition and the
smoothness of $F$ and $\mathcal{W}$.

\end{proof}


\begin{corollary} \label{C:DiracForm} Given a regular almost Dirac structure
$L$, there exists a 2-form $\Omega$ on $P$ and a regular
distribution $F\subseteq TP$ such that $L$ is written in the form
\eqref{Ex:almostDirac}.
\end{corollary}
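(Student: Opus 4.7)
The plan is to derive this corollary directly from the two parts of Proposition \ref{L:Dirac2Section}, which do essentially all the work. The only thing to check is that the two descriptions of the almost Dirac structure, one coming from the correspondence in part (i) of the proposition and the other given by the formula \eqref{Ex:almostDirac}, agree.

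First, I would apply Proposition \ref{L:Dirac2Section}(i) to the given regular almost Dirac structure $L$ to produce the associated pair $(F, \Omega_F)$, where $F = pr_1(L) \subset TP$ is regular and $\Omega_F \in \Gamma(\bigwedge^2 F^*)$. Next, I would invoke Proposition \ref{L:Dirac2Section}(ii) to extend $\Omega_F$ to a 2-form $\Omega$ on $P$ satisfying $\Omega|_F = \Omega_F$.

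It remains to check that $L$ coincides with the subbundle $L'$ defined by the formula \eqref{Ex:almostDirac} for this choice of $F$ and $\Omega$. For any $X \in \Gamma(F)$, the restriction of the 1-form $\mathbf{i}_X \Omega$ to $F$ equals $\mathbf{i}_X(\Omega|_F) = \mathbf{i}_X \Omega_F$. On the other hand, by the construction of $\Omega_F$ in the proof of Proposition \ref{L:Dirac2Section}(i), any element $(X,\alpha) \in L$ satisfies $X \in F$ and $\mathbf{i}_X \Omega_F = -\alpha|_F$. Combining these two observations shows $L \subseteq L'$. Since both $L$ and $L'$ are maximal isotropic subbundles of $TP \oplus T^*P$ (and hence have the same fiber rank $\dim P$), the inclusion forces equality.

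The only conceivable obstacle is the verification that $L'$ defined by \eqref{Ex:almostDirac} is itself maximal isotropic of the correct rank; this was already used implicitly in the statement of \eqref{Ex:almostDirac} as an example of an almost Dirac structure, so I would simply cite that fact. Thus the corollary reduces to an assembly of the two parts of Proposition \ref{L:Dirac2Section} with a short unpacking of the definitions.
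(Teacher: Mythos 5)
Your proposal is correct and follows exactly the route the paper intends: the corollary is stated without proof precisely because it is the combination of parts (i) and (ii) of Proposition \ref{L:Dirac2Section}, together with the observation that the resulting pair $(F,\Omega|_F)$ reproduces $L$ via the formula \eqref{Ex:almostDirac}. Your extra step comparing $L$ with the subbundle defined by \eqref{Ex:almostDirac} (inclusion plus equal rank as maximal isotropic subbundles) is a careful spelling-out of what the paper leaves implicit, not a different argument.
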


\begin{remark}
 Note that the 2-form $\Omega \in \Omega^2(P)$ is not uniquely defined and, in general, there is no canonical choice for
 it.
\end{remark}

Given a subbundle  $F\subseteq TP$, we say that a section $\Omega_F$
in $\Gamma(\bigwedge^2F^*)$ is \textit{nondegenerate} if it is
nondegenerate as a bilinear form on $F$ at each point. It follows
from \eqref{Ex:almostDirac} that $\mathrm{Ker}(\Omega_F)= L\cap
TP$, and as a consequence of condition \eqref{Eq:TPcapL} we obtain

\begin{corollary} \label{C:NonDegSection} Let $L$ be a regular  almost Dirac
structure and $(F, \Omega_F)$ the pair associated to it in the sense
of Proposition \ref{L:Dirac2Section}. Then $\Omega_F$ is
nondegenerate if and only if $L$ is the graph of a bivector field
$\pi$. Explicitly, the relation between $(F,\Omega_F)$ and $\pi$ is
\begin{equation*} \pi^\sharp(\alpha) = - X \quad
\mbox{if and only if} \quad {\bf i}_{X} \Omega_F= \alpha|_F,
\label{E:FormBracket}
\end{equation*}
where $X \in  \Gamma(F)$ and $\alpha \in \Omega^1(P)$.
\end{corollary}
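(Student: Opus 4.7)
The plan is to leverage two ingredients already in place: the characterization \eqref{Eq:TPcapL} that identifies almost Dirac structures arising as graphs of bivectors with those satisfying $L\cap TP=\{0\}$, and the explicit description of $L$ in the form \eqref{Ex:almostDirac} provided by Proposition \ref{L:Dirac2Section}. Reading Proposition \ref{L:Dirac2Section} together with Corollary \ref{C:DiracForm} gives $L=\{(X,\alpha)\in TP\oplus T^*P : X\in F,\ \mathbf{i}_X\Omega_F=-\alpha|_F\}$, so the entire statement reduces to elementary linear algebra fibre by fibre.

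First, I would compute $L\cap TP$. A section $(X,0)$ belongs to $L$ if and only if $X\in\Gamma(F)$ and $\mathbf{i}_X\Omega_F=0$, i.e.\ $X\in\mathrm{Ker}(\Omega_F)$ (as remarked in the paragraph preceding the corollary). Hence $L\cap TP=\mathrm{Ker}(\Omega_F)$, and this intersection is trivial precisely when $\Omega_F$ is nondegenerate as a bilinear form on $F$. Combining this with \eqref{Eq:TPcapL}, I get the equivalence between nondegeneracy of $\Omega_F$ and $L$ being the graph of a bivector field $\pi$.

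Next, for the explicit formula, I would argue as follows. Assume $\Omega_F$ is nondegenerate, so $L=L_\pi$. By definition of $L_\pi$, the pair $(Y,\alpha)$ lies in $L$ exactly when $Y=\pi^\sharp(\alpha)$; by the description \eqref{Ex:almostDirac} this same condition reads $Y\in F$ and $\mathbf{i}_Y\Omega_F=-\alpha|_F$. Nondegeneracy of $\Omega_F$ on $F$ guarantees that, for every $\alpha\in\Omega^1(P)$, such a $Y$ exists and is unique in $\Gamma(F)$, so $\pi^\sharp$ is indeed well defined and takes values in $F$. Substituting $X:=-Y$ turns $\mathbf{i}_Y\Omega_F=-\alpha|_F$ into $\mathbf{i}_X\Omega_F=\alpha|_F$ and $Y=\pi^\sharp(\alpha)$ into $\pi^\sharp(\alpha)=-X$, yielding the stated equivalence.

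There is no real obstacle here: the content is a direct translation between the two descriptions of $L$ (as a graph and in the form \eqref{Ex:almostDirac}), and the only points requiring minor care are the sign conventions in $\Omega^\flat(Y)=-\mathbf{i}_Y\Omega$ and $\pi^\sharp$ introduced in the examples following the definition of (almost) Dirac structures. Once the sign is tracked, the ``if and only if'' in the displayed formula is immediate.
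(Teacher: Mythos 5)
Your proposal is correct and follows essentially the same route as the paper, which deduces the corollary from the observation that $\mathrm{Ker}(\Omega_F)=L\cap TP$ together with the criterion \eqref{Eq:TPcapL}, the displayed formula then being a direct unwinding of the two descriptions of $L$ with the sign conventions tracked exactly as you do.
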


Following notation of Definition \ref{D:Basics}, if $\{ \cdot ,
\cdot \}$ is the bracket associated to the bivector field $\pi$ in
the above Corollary, then $\{ f ,g\} =  \Omega_F(X_f, X_g)$, for all
$f,g \in C^\infty (P).$

\subsection{Twisted Poisson and  twisted Dirac structures}
\label{Sec:Twisted}


Poisson structures may be viewed as encoding integrability in two
levels: first, the characteristic distribution $\pi^\sharp(T^*P)
\subseteq TP$ is integrable, i.e., tangent to leaves; second each
leaf carries a nondegenerate 2-form that is closed (and this leads
to the Jacobi identity). Twisted Poisson structures are special
types of almost Poisson structures that retain the integrability of
$\pi^\sharp(T^*P)$ but allow the leafwise 2-form to be non closed.
These objects turn out to be related  to Hamiltonization. We
start with the more general notion of twisted Dirac structures.


Consider a closed 3-form $\phi$ on $P$, and define the $\phi$-{\it
twisted Courant bracket} \cite{SeveraWeinstein} as follows:
\begin{equation} \lcf(X,\alpha),(Y, \beta)\rcf_\phi = ([X,Y], \pounds_X\beta - {\bf i}_Y
d\alpha + {\bf i}_{X\wedge Y} \phi ), \label{Eq:twistedCourant}
\end{equation} for $(X,\alpha)$ and $(Y, \beta)$ in $\Gamma(TP \oplus
T^*P)$. Now, a subbundle $L$ of $TP \oplus T^*P$ is a $\phi$-{\it
twisted Dirac structure} \cite{SeveraWeinstein} if $L$ is maximal isotropic with respect to
$\langle\cdot, \cdot \rangle$ and the integrability condition $$\lcf
\Gamma(L), \Gamma(L) \rcf_\phi \subseteq \Gamma(L)$$ is satisfied.

As in the ordinary case, a twisted Dirac structure $L$ on $P$
induces a Lie algebroid on $L$ given by the anchor map $pr_1 |_L$
and the bracket $\lcf \cdot, \cdot \rcf_\phi |_{\Gamma(L)}$.
Therefore $pr_1(L)$ is an integrable distribution on $P$.

\begin{examples}  If $\Omega$ is any 2-form on $P$, then
$L_\Omega=graph(\Omega^\flat)$ is $(d\Omega)$-twisted Dirac. One may check
that a bivector field $\pi$ on $P$ such that  $L_\pi=graph(\pi^\sharp)$
is $\phi$-twisted Dirac verifies (see \cite{SeveraWeinstein})
\begin{equation}\frac{1}{2} [\pi, \pi] = \pi^\sharp(\phi).
\label{E:twisted_Schout}
\end{equation}
\end{examples}

The following result gives more examples:

\begin{theorem} \label{Prop:twisted} Let $L \subset TP \oplus T^*P$  be a regular almost Dirac structure
such that $pr_1(L) \subset TP$ is an integrable distribution
on $P$. Then,  there exists an exact 3-form $\phi$ with respect to which $L$ is a $\phi$-twisted Dirac structure.
\end{theorem}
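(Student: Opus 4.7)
The plan is to exhibit the candidate $\phi$ explicitly and then verify the integrability condition by a direct computation. By Corollary~\ref{C:DiracForm}, since $L$ is a regular almost Dirac structure, there is a 2-form $\Omega$ on $P$ and a regular distribution $F = pr_1(L)$ such that
\begin{equation*}
L = \{(X,\alpha) \in TP \oplus T^*P \ : \ X \in F, \ \mathbf{i}_X \Omega|_F = -\alpha|_F \}.
\end{equation*}
My candidate is $\phi := d\Omega$, which is exact and (trivially) closed. I then need to verify that $\lcf \Gamma(L), \Gamma(L) \rcf_{d\Omega} \subseteq \Gamma(L)$.

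First I would observe that any section of $L$ can be written in the form $(X, -\mathbf{i}_X \Omega + \gamma)$, where $X \in \Gamma(F)$ and $\gamma \in \Gamma(F^\circ)$ (the annihilator of $F$), since the condition $\alpha|_F = -\mathbf{i}_X \Omega|_F$ determines $\alpha$ only up to a 1-form vanishing on $F$. Next, for two such sections, I would use Cartan's magic formula together with the identities $\pounds_X \mathbf{i}_Y - \mathbf{i}_Y \pounds_X = \mathbf{i}_{[X,Y]}$ and $\mathbf{i}_{X\wedge Y} = \mathbf{i}_Y \mathbf{i}_X$ on forms to reduce the second component of $\lcf (X, -\mathbf{i}_X\Omega + \gamma_1), (Y, -\mathbf{i}_Y\Omega + \gamma_2) \rcf_{d\Omega}$ to the form
\begin{equation*}
-\mathbf{i}_{[X,Y]}\Omega + \pounds_X \gamma_2 - \mathbf{i}_Y d\gamma_1,
\end{equation*}
with the miraculous cancellation coming precisely from the choice $\phi = d\Omega$. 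Since $F$ is integrable, $[X,Y] \in \Gamma(F)$, so the first component of the Courant bracket already lies in $F$ as required.

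To finish, it only remains to check that $(\pounds_X \gamma_2 - \mathbf{i}_Y d\gamma_1)|_F = 0$, which is what is needed for the pair to lie in $L$. For any $Z \in \Gamma(F)$, the formula $\pounds_X \gamma_2(Z) = X(\gamma_2(Z)) - \gamma_2([X,Z])$ together with $\gamma_2 \in \Gamma(F^\circ)$ and $[X,Z] \in \Gamma(F)$ (by integrability) forces both terms to vanish; similarly, expanding $d\gamma_1(Y,Z)$ via the Koszul formula and using $\gamma_1 \in \Gamma(F^\circ)$ and $[Y,Z] \in \Gamma(F)$ kills all three contributions. I expect this final check to be the real crux of the proof: it is the point where the hypothesis that $pr_1(L)$ is integrable (as opposed to a generic regular distribution) is actually used, and it explains geometrically why integrability of the characteristic distribution is exactly what upgrades a regular almost Dirac structure to a twisted Dirac structure with exact twist.
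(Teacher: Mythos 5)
Your proof is correct and follows essentially the same route as the paper's: both take $\phi = d\Omega$ for an extension $\Omega$ of the 2-section $\Omega_F$ given by Corollary \ref{C:DiracForm}, and both verify the twisted integrability condition via the Cartan identity ${\bf i}_{[X,Y]}\Omega = \pounds_X {\bf i}_Y \Omega - {\bf i}_Y d\,{\bf i}_X\Omega - {\bf i}_{X\wedge Y}d\Omega$, with integrability of $F$ entering through $[X,Y]\in\Gamma(F)$ and the vanishing of the leftover terms on $F$. The only difference is bookkeeping: you decompose sections of $L$ as $(X,\,-{\bf i}_X\Omega+\gamma)$ with $\gamma\in\Gamma(F^\circ)$ and compute globally on $P$, whereas the paper restricts all identities to the leaves of $F$; your version makes slightly more explicit where the annihilator terms die.
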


\begin{proof}
Let $F := pr_1(L) \subset TP $ and $\Omega_F$ be the section in
$\Gamma(\bigwedge^2F^*)$ associated to $L$ given by Proposition
\ref{L:Dirac2Section}. Since $F$ is integrable, $\Omega_F$ defines a
2-form $\Omega_{\mathcal O}$ on each leaf $\mathcal{O}$ where $F_x =
T_x\mathcal{O}$ at each $x\in P$. By Corollary \ref{C:DiracForm} there exists a  2-form $\Omega$ on $P$
such that $\iota^*_{\mathcal O}
\Omega = \Omega_{\mathcal O}$ where $\iota_{\mathcal O} : {\mathcal
O} \hookrightarrow P$ is the inclusion. We assert that $L$ is a $(d
\Omega)$-twisted Dirac structure. In fact, for $(X,\alpha)$ and $(Y,
\beta)$ in $\Gamma(L), $
$$ \lcf (X,\alpha) ,(Y,\beta)\rcf_{(d\Omega)}= ([X, Y], \pounds_X \beta- {\bf
i}_Y d\alpha + {\bf i}_{X\wedge Y} d \Omega)  \in \Gamma(L)$$ if and
only if $$ {\bf i}_{[X, Y]} \Omega|_F = -(\pounds_X \beta- {\bf i}_Y
d\alpha + {\bf i}_{X\wedge Y} d \Omega)|_F.$$ Since
 $F$ is an integrable distribution we obtain that,
\begin{eqnarray*}-(\pounds_X \beta- {\bf i}_Y d\alpha + {\bf i}_{X\wedge Y} d \Omega)|_F
&=& -\pounds_X (\beta|_F)+ {\bf i}_Y d(\alpha|_F) - {\bf i}_{X\wedge
Y} d (\Omega|_F) \\ &=& \pounds_X {\bf i}_Y \Omega_F - {\bf i}_Y
d{\bf i}_X \Omega_F - {\bf i}_{X\wedge Y} d
 \Omega_F = {\bf i}_{[X,Y]} \Omega_F.
\end{eqnarray*} which completes the proof.
\end{proof}

\begin{remark} \label{R:nonCanChoice} \
\begin{enumerate} \item [$(i)$] Note that if $L$ is a $\phi-$twisted Dirac structure
then $L$ is also twisted with respect to any closed 3-form $\phi'$
such that $(\phi-\phi')$ vanishes on the leaves.
\item [$(ii)$]  There is no canonical choice for the 3-form $\phi$ given in
Theorem \ref{Prop:twisted}. \end{enumerate}
\end{remark}

\medskip

\subsubsection*{Twisted Poisson bivectors.}

Bivector fields $\pi$ such that $L_\pi = graph(\pi^\sharp)$  is a $\phi$-twisted Dirac structure are called $\phi$-{\it twisted Poisson bivectors} \cite{KlimcikStrobl, SeveraWeinstein}, i.e., $\pi$  verifies condition (\ref{E:twisted_Schout}).  We are especially interested in these kind of
bivector fields since, as we will see, they appear naturally in
the examples of nonholonomic systems introduced in Section \ref{S:Examples}. If $\{ \cdot, \cdot \}$ is the
bracket given by the $\phi$-twisted Poisson structure $\pi$, then
relation (\ref{E:Jacobi}) becomes
\begin{equation*} \{f, \{g,h\}\}+ \{g,\{h,f\}\} + \{h,\{f,g\}\} +
\phi(X_f, X_g, X_h)=0, \label{twistedJAC}
\end{equation*} for $f,g, h \in C^\infty(P)$ and $X_f = \{ \cdot , f\}$.
So the failure of the Jacobi identity is controlled by the closed
3-form $\phi$.

\begin{example} If $\pi^\sharp:T^*P\to TP$ is an isomorphism, then $\pi$ is
$\phi$-twisted and $\phi=d\Omega$ where $\Omega$ is the
nondegenerate 2-form associated with $\pi$, i.e., $\Omega^\flat
\circ \pi^\sharp = \textup{Id}$ where as usual $\Omega^\flat(X) =
-{\bf i}_X\Omega$. Other, less trivial examples, will be presented ahead in Corollary \ref{C:Int_Dist_implies_twist_Poisson}.
\end{example}

\medskip

Let $\pi$ be a $\phi$-twisted Poisson structure on the manifold $P$
and $[\cdot, \cdot]_\pi$ be the bracket on $T^*P$ given by
(\ref{eq:LAPoisson}). Note that $\pi^\sharp$ does not preserve this
bracket. However, using (\ref{E:not_Morph}) and
(\ref{E:twisted_Schout}) we obtain
$$[\pi^\sharp(\alpha),\pi^\sharp(\beta)] = \pi^\sharp \left( [\alpha,\beta]_\pi + {\bf i}_{\pi^\sharp(\alpha)\wedge \pi^\sharp(\beta)} \phi \right), $$
for 1-forms $\alpha, \beta$ on $P$.  The $\phi$-twisted Courant
bracket induces a modification of the bracket (\ref{eq:LAPoisson})
via the identification of $T^*P$ and $L_\pi$, \begin{equation*}
[\alpha, \beta]_\phi = \pounds_{\pi^\sharp(\alpha)} \beta - {\bf
i}_{\pi^\sharp(\beta)} d\alpha + {\bf i}_{\pi^\sharp(\alpha)\wedge
\pi^\sharp(\beta)} \phi, \label{eq:LAtwisted}
\end{equation*}
such that $(T^*P, [\cdot, \cdot]_\phi, \pi^\sharp)$ is
a Lie algebroid 
\cite{SeveraWeinstein} (see also \cite{BCrainic}).
 The characteristic distribution $\pi^\sharp(T^*P)$ defines an integrable distribution on $P$ (that may be singular).
 Each leaf  $\mathcal{O}$ of the corresponding foliation of $P$
 is endowed with a non-degenerate 2-form $\Omega_{\mathcal O}$ that is not necessarily closed. If $\pi$ is  $\phi$-twisted,
 then $d\Omega_{\mathcal O} = \iota^*_{\mathcal O} \phi$, where
$\iota_{\mathcal O}: \mathcal{O} \hookrightarrow P$ is the inclusion.

Important examples of twisted Poisson structures are contained in the following Corollary of
Theorem \ref{Prop:twisted}:
\begin{corollary}
\label{C:Int_Dist_implies_twist_Poisson} Let $\pi$ be a bivector
field on $P$ with an integrable regular characteristic distribution.
Then,  there exists an exact 3-form $\phi$ on $P$ with respect to
which $\pi$ is  $\phi$-twisted.
\end{corollary}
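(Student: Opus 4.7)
The plan is to apply Theorem \ref{Prop:twisted} directly to the almost Dirac structure $L_\pi = \operatorname{graph}(\pi^\sharp)$. The corollary is essentially a translation of that theorem from the language of almost Dirac structures satisfying $TP \cap L = \{0\}$ to the language of bivector fields, using the correspondence recalled right before equation \eqref{Eq:TPcapL}.

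First I would verify that $L_\pi$ meets the hypotheses of Theorem \ref{Prop:twisted}. Since $pr_1(L_\pi) = \pi^\sharp(T^*P)$ coincides with the characteristic distribution of $\pi$, which is regular by assumption, $L_\pi$ is a regular almost Dirac structure. Moreover, the hypothesis of integrability of the characteristic distribution is exactly the condition that $pr_1(L_\pi)$ be an integrable distribution on $P$. Thus both assumptions of Theorem \ref{Prop:twisted} are in place.

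Next I would invoke Theorem \ref{Prop:twisted} to obtain an exact 3-form $\phi = d\Omega$ on $P$ (where $\Omega$ is the 2-form produced by part (ii) of Proposition \ref{L:Dirac2Section} from the leafwise nondegenerate form $\Omega_F$ associated to $L_\pi$) such that $L_\pi$ is a $\phi$-twisted Dirac structure. By the definition recalled in Section \ref{Sec:Twisted}, a bivector field whose graph is $\phi$-twisted Dirac is precisely a $\phi$-twisted Poisson bivector; equivalently it satisfies $\tfrac{1}{2}[\pi,\pi]=\pi^\sharp(\phi)$ as in \eqref{E:twisted_Schout}. This yields the conclusion.

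There is no real obstacle: the regularity and integrability assumptions on the characteristic distribution of $\pi$ are exactly what is needed to transport the result from $L_\pi$ to $\pi$, and the nondegeneracy of $\Omega_F$ (ensured by Corollary \ref{C:NonDegSection} since $L_\pi$ is the graph of a bivector) guarantees that the 2-form $\Omega$ produced in the proof of Theorem \ref{Prop:twisted} restricts leafwise to the symplectic-type form determined by $\pi$. The only point worth stressing is the non-canonical nature of $\phi$, in line with Remark \ref{R:nonCanChoice}.
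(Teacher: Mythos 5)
Your proof is correct and follows exactly the route the paper intends: the corollary is stated as an immediate consequence of Theorem \ref{Prop:twisted} applied to $L_\pi=\operatorname{graph}(\pi^\sharp)$, whose projection $pr_1(L_\pi)$ is the (regular, integrable) characteristic distribution. The verification of the hypotheses and the translation back to the bivector language via \eqref{E:twisted_Schout} are precisely what is needed, so there is nothing to add.
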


\begin{remark} \label{R:Veselova-twisted} {\it Mechanical Example.}
 It is shown in \cite{Naranjo2007} that the (semi-direct)
product reduction of the Veselova system yields a regular
conformally Poisson bracket on the reduced space. It follows that
its characteristic distribution is integrable and thus, by Corollary
\ref{C:Int_Dist_implies_twist_Poisson},
 it is  also twisted-Poisson.
This is a first example of a nonholonomic system whose reduced
equations are formulated in terms of a twisted-Poisson bracket.
Other examples (related to the motion of a rigid body with
generalized rolling constraints) are made explicit in Section
\ref{Ss:TwistedMechanics}.
\end{remark}

\begin{remark}
An interesting question, that remains to be answered, is to give a characterization of
almost Poisson brackets possessing an integrable characteristic distribution
that is non-regular.
\end{remark}

\subsubsection*{Regular conformally Poisson bivectors.}
An interesting class of almost Poisson structures admitting an
integrable characteristic distribution is given by conformally
Poisson structures. Recall from Section \ref{S:Examples} that they
are bivector fields $\pi$ for which exists a strictly positive
function $\varphi \in C^\infty(P)$, such that $\varphi \pi$ is
Poisson. A conformally Poisson manifold $(P, \pi)$ is the disjoint
union of conformally symplectic leaves.

Note that this property is stronger than asking for $(P, \pi)$ to be
a Jacobi manifold since a conformal factor implies the global
existence of a function such that $\varphi \pi$ is Poisson, while in
Jacobi manifolds the factor $\varphi$ may be only locally defined.



From Theorem \ref{Prop:twisted} we observe that any regular bivector
admitting a conformal factor is also a twisted Poisson bivector. The
following Proposition explains the relation between these two
properties.

\begin{proposition} \label{Prop:ConformallyTwisted}
Let $\pi$ be a regular conformally Poisson bivector field on $P$
with conformal factor $\varphi$. Let $\Omega \in \Omega^2(P)$ be as
in Corollary \ref{C:DiracForm}. Then any closed 3-form verifying
\eqref{E:twisted_Schout} for $\pi$ coincides with
$\displaystyle{\frac{1}{\varphi} d\varphi \wedge \Omega}$ on the
leaves.
\end{proposition}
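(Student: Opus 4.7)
The plan is to relate the leafwise $2$-form associated to $\pi$ with the leafwise symplectic form of the Poisson structure $\varphi\pi$, and then exploit that the latter must be closed on each leaf. The proof hinges on two observations: (a) $\pi$ and $\varphi\pi$ have the same (regular) characteristic distribution $F$, and (b) their associated leafwise $2$-forms, in the sense of Corollary \ref{C:NonDegSection}, are explicitly related by the conformal factor $\varphi$.

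First I would verify this explicit relation. Denote by $\Omega_F$ the nondegenerate section of $\bigwedge^2 F^*$ associated with $\pi$ via Proposition \ref{L:Dirac2Section} and Corollary \ref{C:NonDegSection}, and by $\Omega'_F$ the corresponding section for the Poisson bivector $\pi':=\varphi\pi$. Since $(\pi')^\sharp=\varphi\,\pi^\sharp$, the defining relation ${\bf i}_X \Omega'_F = \alpha|_F \iff (\pi')^\sharp(\alpha)=-X$ becomes, after using the analogous relation for $\pi$, the identity $\Omega'_F=\tfrac{1}{\varphi}\Omega_F$. Consequently, any $2$-form $\Omega'$ on $P$ extending $\Omega'_F$ agrees with $\tfrac{1}{\varphi}\Omega$ on $F$, where $\Omega$ is the $2$-form given by Corollary \ref{C:DiracForm} for $\pi$.

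Next, since $\pi'=\varphi\pi$ is Poisson, the symplectic stratification theorem gives that on each leaf $\mathcal{O}$ of $F$ the pullback $\Omega'_{\mathcal{O}}:=\iota_{\mathcal{O}}^*\Omega'$ is closed. Pulling back the identity $\Omega'=\tfrac{1}{\varphi}\Omega$ (valid on $F$, hence on $T\mathcal{O}$) and using $d\Omega'_{\mathcal{O}}=0$ yields
\begin{equation*}
0 = d\!\left(\iota_{\mathcal{O}}^*\!\left(\tfrac{1}{\varphi}\Omega\right)\right)
= \iota_{\mathcal{O}}^*\!\left(\tfrac{1}{\varphi}\,d\Omega - \tfrac{1}{\varphi^2}\,d\varphi\wedge\Omega\right),
\end{equation*}
so that $\iota_{\mathcal{O}}^* d\Omega = \iota_{\mathcal{O}}^*\!\left(\tfrac{1}{\varphi}\,d\varphi\wedge\Omega\right)$ on every leaf $\mathcal{O}$.

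Finally, for any closed $3$-form $\phi$ satisfying \eqref{E:twisted_Schout} for $\pi$, the discussion preceding Corollary \ref{C:Int_Dist_implies_twist_Poisson} (or the proof of Theorem \ref{Prop:twisted}) shows that the leafwise $2$-form $\Omega_{\mathcal{O}}:=\iota_{\mathcal{O}}^*\Omega$ satisfies $d\Omega_{\mathcal{O}}=\iota_{\mathcal{O}}^*\phi$. Combining this with the previous step gives $\iota_{\mathcal{O}}^*\phi = \iota_{\mathcal{O}}^*\!\left(\tfrac{1}{\varphi}\,d\varphi\wedge\Omega\right)$ for every leaf, which is exactly the claim that $\phi$ coincides with $\tfrac{1}{\varphi}\,d\varphi\wedge\Omega$ on the leaves. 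The only delicate point, and the one worth writing out carefully, is the bookkeeping in step one, since Corollary \ref{C:NonDegSection} uses a specific sign convention relating $\pi^\sharp$ and $\Omega_F$; once this is pinned down, the rest of the argument is a direct computation.
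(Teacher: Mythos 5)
Your argument is correct, but it follows a genuinely different route from the one in the paper. The paper's proof is a direct pointwise multivector computation: from $[\varphi\pi,\varphi\pi]=0$ one extracts the identity $[\pi,\pi]=\frac{2}{\varphi}X_\varphi\wedge\pi$, evaluates both sides of \eqref{E:twisted_Schout} on exact $1$-forms $dg_1,dg_2,dg_3$, and uses $\Omega(X_{g_1},X_{g_2})=\pi(dg_1,dg_2)$ to recognize the right-hand side as $-\frac{1}{\varphi}d\varphi\wedge\Omega(X_{g_1},X_{g_2},X_{g_3})$. You instead pass through the auxiliary Poisson structure $\varphi\pi$: you identify its leafwise $2$-form as $\frac{1}{\varphi}\Omega_F$ (your computation of this is right, and your instinct that the sign conventions of Corollary \ref{C:NonDegSection} are the delicate point is well placed), invoke closedness of the symplectic leaf forms to get $\iota_{\mathcal O}^*d\Omega=\iota_{\mathcal O}^*\bigl(\frac{1}{\varphi}d\varphi\wedge\Omega\bigr)$, and then use the fact, stated in the paper just before Corollary \ref{C:Int_Dist_implies_twist_Poisson} and implicit in the proof of Theorem \ref{Prop:twisted}, that any $\phi$ satisfying \eqref{E:twisted_Schout} restricts on each leaf to $d\Omega_{\mathcal O}$. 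What your approach buys is conceptual transparency: the formula $\frac{1}{\varphi}d\varphi\wedge\Omega$ appears as the exact discrepancy between $d\Omega$ and $\varphi\,d\bigl(\frac{1}{\varphi}\Omega\bigr)$ on a leaf, and the argument never touches the Schouten calculus. What the paper's computation buys is self-containedness --- it uses only the definition of the conformal factor and relation \eqref{E:twisted_Schout}, whereas yours leans on two auxiliary facts (the symplectic stratification of $\varphi\pi$ and the leafwise characterization $\iota_{\mathcal O}^*\phi=d\Omega_{\mathcal O}$ of the twisting form), both of which are available in the paper but would need to be cited explicitly. Either proof is acceptable.
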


\begin{proof} Since $\pi$ admits a conformal factor $\varphi \in C^\infty(P)$,
then $[\pi, \pi] = \frac{2}{\varphi} X_\varphi \wedge \pi.$ On the
other hand, if $\Omega$ is the 2-form on $P$ associated to $\pi$
given by Corollary \ref{C:DiracForm}, then for $g_1, g_2 \in
C^\infty(P)$ we have $\Omega(X_{g_1}, X_{g_2}) = \pi(dg_1,dg_2)$.
Thus, for $g_1, g_2, g_3 \in C^\infty(P)$ we obtain
\begin{equation*} \frac{1}{2}[\pi, \pi](dg_1, dg_2 ,
dg_3) =  \frac{1}{\varphi} X_\varphi \wedge \pi (dg_1 , dg_2 , dg_3)
= -\frac{1}{\varphi} d\varphi \wedge \Omega (X_{g_1}, X_{g_2},
X_{g_3}).
\end{equation*} Then we conclude that any closed 3-form $\phi$ satisfying
\eqref{E:twisted_Schout} coincides with $(\frac{1}{\varphi} d\varphi
\wedge \Omega)$ on the leaves.

\end{proof}

Let $(P,\pi)$ be a regular conformally Poisson manifold.
A  2-form $\Omega$ on $P$ satisfying the conditions of Corollary \ref{C:DiracForm}
 verifies that $\iota^*_{\mathcal O} \Omega$ is conformally symplectic on each leaf
$\mathcal{O}$. However, $\Omega$ may not necessarily be conformally
closed.



\subsection{Gauge transformations.}

In this section we will consider a natural action of the abelian group of
2-forms on $P$ on the almost Dirac structures on $P$. This action is
given by {\it gauge transformations of  almost Dirac
structures by 2-forms}, and it was introduced in
\cite{SeveraWeinstein}.

More precisely, consider an almost Dirac structure $L$ in $TP \oplus
T^*P$. A {\it gauge transformation by the 2-form $B$}
is a map $\tau_B : L \rightarrow TP \oplus T^*P$,  given by
$\tau_B((X,\alpha)) = (X,\alpha + {\bf i}_XB )$ for $(X, \alpha) \in
L$. The subbundle $\tau_B(L)$ of $TP \oplus T^*P$ given by
\begin{equation*}
\label{E:Def-gauge-transf}
\tau_B(L) = \{(X,\alpha
+ {\bf i}_XB ) \ : \ (X,\alpha) \in L \}
\end{equation*}
 is an almost Dirac
structure. If the 2-form $B$ is closed and $L$ is Dirac then
$\tau_B(L)$ is again Dirac. Thus, the 3-form $dB$ is what determines
 the integrability with respect to the Courant bracket. It
is a direct computation to see that if $L$ is a $\phi$-twisted Dirac
structure, then the gauge transformation of $L$ by the
2-form $B$ is $(\phi-dB)$-twisted Dirac (see e.g.
\cite{SeveraWeinstein}).

If $L_1$ and $L_2$ are almost Dirac structures on $P$ and there
exist a 2-form $B$ on $P$ such that $\tau_B(L_1) = L_2$, then we say
that $L_1$ and $L_2$ are {\it gauge equivalent} or {\it gauge
related}.

Note that a gauge transformation does not modify the distribution
$pr_1(L)$. So, for a Dirac structure $L$, the foliation associated
to $L$ will be the same as the one associated to $\tau_B(L)$.
However, the presymplectic form on each leaf is modified by the
pullback of $B$ to the leaf. If $L$ is a regular almost Dirac
structure determined by the pair $(F, \Omega_F)$ in the sense of
item ($i$) of Proposition \ref{L:Dirac2Section}, a gauge
transformation by the 2-form $B$ corresponds to the operation:
 \begin{equation} \tau_B: (F, \Omega_F) \to
(F, \Omega_F -B|_F). \label{Eq:GaugeSection} \end{equation}

\begin{theorem} \label{T:GaugedRelation}
Any two regular almost Dirac structures $L_1$ and $L_2$ are gauge
related if and only if $pr_1(L_1)= pr_1(L_2)$.
\end{theorem}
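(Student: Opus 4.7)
The plan is to reduce the statement to Proposition \ref{L:Dirac2Section} together with the description \eqref{Eq:GaugeSection} of gauge transformations at the level of pairs $(F,\Omega_F)$.

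First, I would dispatch the easy direction. If $L_2=\tau_B(L_1)$ for some 2-form $B$, then by definition $\tau_B$ only alters the second component of a pair $(X,\alpha)\in L_1$, sending it to $(X,\alpha+{\bf i}_X B)$. In particular $pr_1(\tau_B(L_1))=pr_1(L_1)$, so $pr_1(L_1)=pr_1(L_2)$.

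For the converse, suppose $F:=pr_1(L_1)=pr_1(L_2)$ is a regular distribution on $P$. By item $(i)$ of Proposition \ref{L:Dirac2Section}, $L_1$ and $L_2$ correspond to pairs $(F,\Omega_F^1)$ and $(F,\Omega_F^2)$, with $\Omega_F^1,\Omega_F^2\in \Gamma(\bigwedge^2 F^*)$. Form the difference $\Omega_F^1-\Omega_F^2 \in \Gamma(\bigwedge^2 F^*)$ and, using item $(ii)$ of Proposition \ref{L:Dirac2Section}, extend it to a (globally defined) 2-form $B\in \Omega^2(P)$ satisfying $B|_F=\Omega_F^1-\Omega_F^2$. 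By the description \eqref{Eq:GaugeSection}, the gauge transformation $\tau_B$ sends the pair $(F,\Omega_F^1)$ to $(F,\Omega_F^1-B|_F)=(F,\Omega_F^2)$. By the one-to-one correspondence of Proposition \ref{L:Dirac2Section}, this forces $\tau_B(L_1)=L_2$, showing that $L_1$ and $L_2$ are gauge related.

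The only conceptually nontrivial input is the existence of the extension $B$, but this is precisely what part $(ii)$ of Proposition \ref{L:Dirac2Section} gives us (using a regular smooth complementary distribution to $F$, which exists by regularity). Hence there is no genuine obstacle; the whole argument is a translation of the equivalence between almost Dirac structures and pairs $(F,\Omega_F)$ combined with the fact that gauge transformations act by pointwise restriction of $B$ to $F$.
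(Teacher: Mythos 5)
Your proposal is correct and follows essentially the same route as the paper: the easy direction is that $\tau_B$ leaves the first component untouched, and the converse is obtained exactly as in the paper by passing to the pairs $(F,\Omega_F^i)$ via Proposition \ref{L:Dirac2Section}$(i)$, extending $\Omega_F^1-\Omega_F^2$ to a global 2-form $B$ with Proposition \ref{L:Dirac2Section}$(ii)$, and invoking \eqref{Eq:GaugeSection} together with the one-to-one correspondence.
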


\begin{proof} It remains to prove the ``only if'' part of the statement.
Let us denote $F:=pr_1(L_1)= pr_1(L_2)$ and let $\Omega^1_F$ and
$\Omega^2_F$ be the 2-sections associated to $L_1$ and $L_2$
respectively (Proposition \ref{L:Dirac2Section} ($i$)). Define the
section $B_F \in \Gamma(\bigwedge^2(F^*))$  by $B_F:= \Omega^1_F -
\Omega^2_F$ and let $B \in \Omega^2(P)$ such that $B|_F= B_F$ (Prop.
\ref{L:Dirac2Section} ($ii$)). We claim that $\tau_B L_1 = L_2$. In
fact, if $(F,\Omega_F^B)$ is the pair associated to the almost Dirac
structure $\tau_BL_1$, then by equation \eqref{Eq:GaugeSection},
$$\Omega^B_F = \Omega^1_F - B_F = \Omega^2_F.
$$ Since the sections associated to $\tau_BL_1$ and $L_2$ coincide, by Proposition \ref{L:Dirac2Section} ($i$) we conclude that $\tau_B L_1 =
L_2$ which means that $L_1$ and $L_2$ are gauge related.

\end{proof}

We are especially interested in gauge transformations of almost
Poisson structures.  Consider the almost Poisson manifold $(P, \pi)$
and a 2-form $B$ on $P$. Then, the gauge transformation of $L_{\pi} :=
graph(\pi^\sharp)$ is
 $\tau_B(L_{\pi}) = \{(X,\alpha + {\bf i}_XB ) \in TP \oplus
T^*P \ : \ X=\pi^\sharp(\alpha) \}$ which does not necessarily
correspond to the graph of a new bivector $\pi^B$. A necessary and
sufficient condition for this to happen is that
\begin{equation*} \tau_B(L_{\pi}) \cap TP = \{0\} \label{E:condGaugedbivector}, \end{equation*} which is equivalent to the
fact that the endomorphism $\textup{Id} + B^\flat\circ
\pi^\sharp:T^*P \rightarrow T^*P$ is invertible
\cite{SeveraWeinstein}. Indeed, if such a bivector field $\pi^B$ exists, then, in view of \eqref{E:Def-gauge-transf},
 for any 1-form $\alpha$ on $P$ we have
$$
\tau_B \left( (\pi^\sharp(\alpha) \, , \,  \alpha) \right) =
\left (\pi^\sharp(\alpha) \, , \, \alpha + {\bf i}_{\pi^\sharp(\alpha)} B \right )
= \left ((\pi^B)^\sharp \left ( \alpha + {\bf i}_{\pi^\sharp(\alpha)} B \right ) \, , \, \alpha + {\bf
i}_{\pi^\sharp(\alpha)} B \right).
$$
Thus, $\pi^B$ is characterized by the condition
\begin{equation}
\label{E:condition-gauge}
(\pi^B)^\sharp \left ( \alpha + {\bf i}_{\pi^\sharp(\alpha)} B \right ) =
\pi^\sharp(\alpha),
\end{equation}
and we can write
\begin{equation} (\pi^{B})^\sharp = \pi^\sharp \circ (\textup{Id}- B^\flat\circ
\pi^\sharp )^{-1}. \label{Eq:GaugedBivector} \end{equation}

In particular, if $\pi$ and $\pi^B$ are non-degenerate bivector
fields, equation (\ref{Eq:GaugedBivector}) is equivalent to
$$((\pi^B)^\sharp )^{-1} = (\pi^\sharp)^{-1} - B^\flat.$$ Therefore, any two 2-forms on a manifold are gauge
related. This is not necessarily the case  with bivector fields.
A necessary condition is that their characteristic distributions coincide. In view of Theorem
\ref{T:GaugedRelation}, this condition is also sufficient if such distributions are regular.


\medskip

Although gauge related bivectors have the same characteristic
distribution, their  Schouten brackets
$[\pi,\pi]$ and $[\pi^B, \pi^B]$ may not coincide.
The following Proposition  makes this precise.
\begin{proposition}[\cite{SeveraWeinstein}]
\label{P:gauge-twist}
 If a $\phi$-twisted Poisson
bivector $\pi$ is gauge related with another bivector $\pi^B$ via
the 2-form $B$, then $\pi^B$ is $(\phi-dB)$-twisted. That is,
\begin{equation*} \frac{1}{2}[\pi^B, \pi^B] = (\pi^B)^\sharp(\phi-dB).
\label{Eq:SchoutenPhi}
\end{equation*}
In particular, if $\pi$ is Poisson, then $\pi^B$ is $(-dB)$-twisted.
\end{proposition}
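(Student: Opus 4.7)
The plan is to exploit the Dirac-theoretic reformulation of the twisted Poisson condition together with the known behavior of twisted Dirac structures under gauge transformations, both of which are recalled earlier in the excerpt. Directly manipulating $[\pi^B,\pi^B]$ via the explicit formula $(\pi^B)^\sharp = \pi^\sharp \circ (\mathrm{Id} - B^\flat \circ \pi^\sharp)^{-1}$ and the Schouten bracket would be possible but painful; the Dirac viewpoint turns it into a one-line verification.

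First I would pass from the bivector $\pi$ to its graph $L_\pi = \mathrm{graph}(\pi^\sharp) \subset TP \oplus T^*P$. By the characterization \eqref{E:twisted_Schout} recalled in the text, the hypothesis that $\pi$ is $\phi$-twisted Poisson is exactly the statement that $L_\pi$ is a $\phi$-twisted Dirac structure. Next I would invoke the basic fact about gauge transformations, stated earlier in Section \ref{S:Geometry}: if $L$ is $\phi$-twisted Dirac, then $\tau_B(L)$ is $(\phi - dB)$-twisted Dirac. Applying this to $L_\pi$ gives that $\tau_B(L_\pi)$ is a $(\phi - dB)$-twisted Dirac structure.

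The only remaining point is to identify $\tau_B(L_\pi)$ with the graph $L_{\pi^B}$ of the gauged bivector, which is precisely the content of \eqref{E:condition-gauge}–\eqref{Eq:GaugedBivector}: since $\pi^B$ exists by assumption (gauge relation), we have $L_{\pi^B} = \tau_B(L_\pi)$. Applying the equivalence \eqref{E:twisted_Schout} once more, but now in the reverse direction and to the $(\phi - dB)$-twisted Dirac bivector $\pi^B$, yields exactly
\begin{equation*}
\tfrac{1}{2}[\pi^B, \pi^B] = (\pi^B)^\sharp(\phi - dB),
\end{equation*}
which is the desired conclusion.

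The main obstacle, such as it is, lies not in the logical steps but in being sure that the two prerequisite facts are actually available: (a) the equivalence between $\pi$ being $\phi$-twisted Poisson and $L_\pi$ being $\phi$-twisted Dirac, and (b) the transformation rule $\tau_B$ sending $\phi$-twisted Dirac to $(\phi - dB)$-twisted Dirac. Both are stated in the preceding subsection, so no further work is required. If one wanted a self-contained computational proof, one would instead substitute \eqref{Eq:GaugedBivector} into $[\pi^B, \pi^B]$, expand using the derivation properties of the Schouten bracket, and collect terms until $(\pi^B)^\sharp(\phi - dB)$ emerges; but the Dirac route makes this purely formal.
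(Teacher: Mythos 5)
Your argument is correct. The paper itself offers no proof of this proposition (it is simply cited from \v{S}evera--Weinstein), but the two ingredients you rely on are exactly the facts the paper records in Section \ref{S:Geometry}: twisted Poisson bivectors are by definition those whose graph is a twisted Dirac structure, with the equivalence to \eqref{E:twisted_Schout} stated there, and the rule that $\tau_B$ sends $\phi$-twisted Dirac to $(\phi-dB)$-twisted Dirac is stated just before the proposition. Combining these with the identification $\tau_B(L_\pi)=L_{\pi^B}$ (which is precisely the hypothesis that $\pi$ and $\pi^B$ are gauge related via $B$) gives the conclusion, so your proof is the natural one and is consistent with how the paper sets up the material.
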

%

\bigskip




\section{Applications to Nonholonomic Systems and Hamiltonization}
\label{S:Nonho}

In what follows, we will analyze the geometry of nonholonomic
systems in the framework presented in the previous section. We
introduce gauge transformations of the bracket describing the
nonholonomic dynamics in order to study the process of
Hamiltonization.

\subsection{Nonholonomic systems}
\label{S:Nonholonomic_Systems}

A nonholonomic system consists of an $n$-dimensional configuration manifold $Q$ with
local coordinates ${\bf q}\in U\subset \R^n$, a Lagrangian
$\Lag:TQ\rightarrow \R$ of the form $\Lag (\dot {\bf q}, {\bf q}) = \frac{1}{2}
\mathcal{G}({\bf q})(\dot {\bf q},\dot {\bf q}) - V({\bf q})$,  where $\mathcal{G}$ is a kinetic energy
metric on $Q$ and $V:Q\rightarrow \R$ is a potential, and
 a regular non-integrable distribution $\D\subset TQ$  that describes
the kinematic nonholonomic constraints. In coordinates, the
distribution $\D$ is defined by the
equation
\begin{equation}
\label{E:Constraints_Lag_form} {\vecep}({\bf q})\, \dot {\bf q}=0,
\end{equation}
where $\vecep({\bf q})$ is a $k\times n$ matrix of constant rank $k$ where $k<n$ is the number of constraints.
The entries of  $\vecep({\bf q})$ are the components of the
$\R^k$-valued \emph{constraint 1-form} on $Q$, $
\vecep:=\vecep({\bf q})\; d{\bf q}. $

The dynamics of the system are governed by the Lagrange-D'Alembert
principle. This principle states that the forces of constraint
annihilate any virtual displacement, so they perform no work during
the motion. The equations of motion take the form
\begin{equation}
\label{E:Eqns_of_motion_Lag_form} \frac{d}{dt} \left (\frac{\partial
\Lag}{\partial \dot {\bf q}}\right ) -\frac{\partial \Lag}{\partial
{\bf q}}={\boldsymbol{\mu}}^T\,\vecep({\bf q})  .
\end{equation}
Here ${\boldsymbol{\mu}}:TQ\to \R^k$ is an  $\R^k$-valued function whose entries are referred to as
Lagrange multipliers. Under our assumptions, it is uniquely
determined by the condition that the constraints
(\ref{E:Constraints_Lag_form}) are satisfied. The equations
(\ref{E:Eqns_of_motion_Lag_form}) together with the constraints
(\ref{E:Constraints_Lag_form})  define a vector field
$\mbox{$Y_{\mbox{\tiny nh}}^{\D}$}$ on $\D$  whose integral curves
describe the motion of the nonholonomic system. A short calculation
shows that along the flow of $\mbox{$Y_{\mbox{\tiny nh}}^{\D}$}$, the
energy function $E_\Lag:= \frac{\partial \Lag}{\partial \dot {\bf q}} \cdot
\dot {\bf q} - \Lag$, is conserved.

The above equations  of motion can be written as a first order
system on the cotangent bundle $T^*Q$ via the standard Legendre
transform, $\L:TQ\rightarrow T^*Q$, that defines canonical
coordinates $({\bf q},{\bf p})$ on $T^*Q$ by the rule $\L:({\bf q},\dot
{\bf q})\mapsto ({\bf q},{\bf p}=\partial \Lag / \partial \dot {\bf q})$. The
Legendre transform is a global diffeomorphism by our assumption that
$\mathcal{G}$ is a metric.

The Hamiltonian function, $\Ham:T^*Q\rightarrow \R$, is defined  in
the usual way $\Ham:=E_\Lag \circ \L^{-1}$. The equations of motion
(\ref{E:Eqns_of_motion_Lag_form}) are shown to be equivalent to
\begin{equation}
\label{E:Eqns_of_motion_Ham_form} \dot {\bf q} = \frac{\partial
\Ham}{\partial {\bf p}}, \qquad \dot {\bf p}= - \frac{\partial
\Ham}{\partial {\bf q}} + {\boldsymbol{\mu}}^T \vecep({\bf q}),
\end{equation}
and the constraint equations (\ref{E:Constraints_Lag_form}) become
\begin{equation}
\label{E:Constraints_Ham_form} \vecep({\bf q}) \frac{\partial
\Ham}{\partial {\bf p}}=0.
\end{equation}
The above equation defines the \emph{constraint submanifold}
$\M=\L(\D) \subset T^*Q$.  Since the Legendre transform is linear on
the fibers, $\M$ is a vector sub-bundle of $T^*Q$ that for each
$q\in Q$ specifies an $n-k$ vector subspace of $T^*_qQ$.

Equations (\ref{E:Eqns_of_motion_Ham_form}) together with
(\ref{E:Constraints_Ham_form}) define the vector field
$\mbox{$X_{\mbox{\tiny nh}}$}$ on $\M$, that  describes the motion
of our nonholonomic system in the Hamiltonian side and is the push
forward of the vector field $\mbox{$Y_{\mbox{\tiny nh}}^{\D}$}$ by the
Legendre transform. The  vector field
$\mbox{$X_{\mbox{\tiny nh}}$}$  is defined uniquely in an intrinsic
way by the equation
\begin{equation}
\label{E:Eqns_of_motion_Ham_form_intrinsic} {\bf
i}_{\mbox{$X_{\mbox{\tiny nh}}$}} \,\iota^* \Omega_Q =
\iota^*(d\Ham +\boldsymbol{\mu}^T  \tau^* \vecep),
\end{equation}
where $\Omega_Q$ is the canonical symplectic form on $T^*Q$,
$\iota:\M \hookrightarrow T^*Q$ is the inclusion and $\tau:
T^*Q\rightarrow Q$ is the canonical projection. The constraints
(\ref{E:Constraints_Ham_form}) and their derivatives are
intrinsically written as
\begin{equation}
\mbox{$X_{\mbox{\tiny nh}}$}\in \C :=T\M\cap \F, \label{E:C}
\end{equation}
where $\F$ is the distribution on $T^*Q$ defined as $\F:=\{ v\in
T(T^*Q): \langle \tau^*\vecep ,  v\rangle =0\}$. Denote by $\Omega_\M$ the pull-back of $\Omega_Q$ to  $\M$, i.e. $\Omega_\M:=
\iota^* \Omega_Q$.
%
The following Proposition is of great importance for our setup of the equations
of motion as an almost Hamiltonian system.
\begin{proposition}[\cite{Weber1986, BS93}]
The distribution $\C$ on $\M$ defined by \eqref{E:C} is regular, non-integrable, and the point-wise restriction of
$\Omega_\M$ to $\C$, denoted by $\Omega_\C$, is non-degenerate.
\end{proposition}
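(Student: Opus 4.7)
The statement contains three claims---regularity of $\C$, its non-integrability, and non-degeneracy of $\Omega_\C$---and my plan is to reduce each, at a point $m=(q,p)\in \M$, to a linear-algebraic assertion on $T_m(T^*Q)$. I expect the non-degeneracy claim to be the main obstacle, since it is the only step that requires the positive-definiteness of the kinetic-energy metric $\mathcal{G}$ in an essential way; the first two are a dimension count and a naturality argument respectively.

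For regularity, I would show that $T\M$ and $\F$, both of codimension $k$ in $T(T^*Q)$, are transverse along $\M$. In local canonical coordinates $(q^i,p_i)$ the manifold $\M$ is cut out by the $k$ functions $\tilde{\vecep}^a(q,p)=\vecep^a_i(q)\mathcal{G}^{ij}(q)p_j$, whose differentials $d\tilde{\vecep}^a$ carry the $dp$-terms $\vecep^a_i\mathcal{G}^{ij}\,dp_j$, while $\F=\ker(\tau^*\vecep)$ is annihilated by the pure $dq$-forms $\vecep^a_i(q)\,dq^i$; linear independence of the two families follows from injectivity of $\mathcal{G}^\sharp$, so $T_m\M + \F_m = T_m(T^*Q)$ and $\dim \C_m=2(n-k)$ at every $m$.

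For non-integrability, I would argue by contradiction. Every section $\xi$ of $\D$ lifts to a section of $\C$: setting the horizontal part equal to $\xi$, one can solve the $k$ linear equations $d\tilde{\vecep}^a(v)=0$ for the vertical part because $\{\mathcal{G}^\sharp(\vecep^a)\}$ are linearly independent. Hence $T\tau$ surjects $\Gamma(\C)$ onto $\Gamma(\D)$. If $\C$ were involutive, the naturality identity $T\tau([X,Y])=[T\tau X,T\tau Y]$ would then force $\D$ to be involutive, contradicting the standing hypothesis of non-integrability of $\D$.

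For the non-degeneracy of $\Omega_\C$, since $\Omega_\M=\iota^*\Omega_Q$ and $\C\subset T\M$, it suffices to prove $\C_m\cap \C_m^{\Omega_Q}=\{0\}$, where $\C_m^{\Omega_Q}$ is the symplectic orthogonal in $T_m(T^*Q)$. Using $\C_m^{\Omega_Q}=T_m\M^{\Omega_Q}+\F_m^{\Omega_Q}$, a short coordinate computation shows that $T_m\M^{\Omega_Q}$ is spanned by the Hamiltonian vector fields $X_{\tilde{\vecep}^a}$, whose horizontal projections under $T\tau$ are the vectors $\mathcal{G}^\sharp(\vecep^a)$, and that $\F_m^{\Omega_Q}$ is spanned by the purely vertical vectors $\vecep^a_i\,\partial/\partial p_i$. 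Any $v\in \C_m\cap \C_m^{\Omega_Q}$ therefore has horizontal part in the span of the $\mathcal{G}^\sharp(\vecep^a)$; but $v\in \F$ simultaneously forces this horizontal part to lie in $\D$, and since $\mathcal{G}^\sharp(\D^\circ)$ is the $\mathcal{G}$-orthogonal complement of $\D$, the horizontal part must vanish. The remaining purely vertical contribution $v=b_a\vecep^a_i\,\partial/\partial p_i$ must satisfy $d\tilde{\vecep}^b(v)=0$, which reduces to the linear system $\sum_a(\vecep^b_i\mathcal{G}^{ij}\vecep^a_j)\,b_a=0$; positive-definiteness of $\mathcal{G}$ makes the Gram matrix $(\vecep^b_i\mathcal{G}^{ij}\vecep^a_j)$ invertible, so $b_a=0$ and $v=0$.
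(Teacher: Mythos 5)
Your proof is correct, and it follows exactly the route the paper indicates (the paper itself only cites \cite{Weber1986, BS93} and sketches the strategy after the statement): a rank count giving $\dim \C = 2(n-k)$, non-integrability of $\C$ inherited from that of $\D$ via the projection $T\tau$, and non-degeneracy via the splitting $T_\M(T^*Q)=\C\oplus\C^{\Omega_Q}$. All three linear-algebra steps check out, including the essential use of the positive-definiteness of $\mathcal{G}$ in the last one (both for $\D\cap\D^{\perp_{\mathcal G}}=\{0\}$ and for the invertibility of the Gram matrix $\bigl(\vecep^b_i\mathcal{G}^{ij}\vecep^a_j\bigr)$).
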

The non-integrability of $\C$ is a direct consequence of the non-integrability of $\D$. One shows that the rank of
$\C$ is $2(n-k)$ and that
 along $\M$
we have the symplectic decomposition
\begin{equation}
\label{E:symp-decomp}
T_\M(T^*Q)=\C \oplus \C^{\Omega_Q},
\end{equation}
where $ \C^{\Omega_Q}$ denotes the symplectic orthogonal complement to $\C$.

Since $\tau^*\vecep$ vanishes on $\C$, by restricting
(\ref{E:Eqns_of_motion_Ham_form_intrinsic}) to $\C$ and denoting
$\Ham_\M := \iota^*\Ham \in C^\infty(\M)$, the equations of motion
can be written in the appealing format
\begin{equation}
\label{E:Eqns_of_motion_Almost_Almost_Ham}
\mathbf{i}_{\mbox{$X_{\mbox{\tiny nh}}$}} \Omega_\C= (d\Ham_\M)_\C,
\end{equation}
where $(d\Ham_\M)_\C$ is the point-wise restriction of $d\Ham_\M$ to
$\C$. The above equation uniquely defines the vector field
$\mbox{$X_{\mbox{\tiny nh}}$}$ and is central in our treatment;
with this in mind,
 we collect the data of the nonholonomic system in the triple $(\M, \Omega_\C ,   \Ham_\M)$. 

Even though (\ref{E:Eqns_of_motion_Almost_Almost_Ham}) defines the
vector field $\mbox{$X_{\mbox{\tiny nh}}$}$ uniquely, and resembles
a  classical Hamiltonian system,  notice that since  the distribution $\C$ is
non-integrable, then $\Omega_\C$ is a section in $\bigwedge^2\C^* \rightarrow \M$ (not a 2-form).

Let $(\M, \Omega_\C ,  \Ham_\M)$ be a nonholonomic system. For every
$f \in C^\infty (\M)$, let $\mbox{$X_{f}$}$ denote the unique vector
field on $\M$ with values in $\C$ defined by the equation
\begin{equation} \label{E:Almost_Almost_Ham}
\mathbf{i}_{\mbox{$X_{f}$}} \Omega_\C= (d f)_\C,
\end{equation} where $(df)_\C$ denotes the
point-wise restriction of $df$ to $\C$. The vector field $X_f$ defined by
equation (\ref{E:Almost_Almost_Ham}) is called the {\it the (almost)
Hamiltonian vector field associated to $f$}.

Since $\Omega_\C$ is nondegenerate, by Corollary
\ref{C:NonDegSection} there is a unique bivector field
$\pi_{\mbox{\tiny nh}}$ on $\M$ associated to the pair
$(\C,\Omega_\C)$, that is $\pi_{\mbox{\tiny nh}}^\sharp(\alpha) =
-X$ if and only if $\mathbf{i}_{\mbox{$X$}} \Omega_\C= \alpha |_\C.$
On exact forms we have, for $f\in C^\infty(\M)$,
\begin{equation}
\label{E:DefBracket}\mathbf{i}_{\mbox{$X_{f}$}} \Omega_\C= (d f)_\C
\quad \mbox{if and only if}  \quad \pi_{\mbox{\tiny nh}}^\sharp(df)
= - X_f,
\end{equation} which is consistent with notation of Definition
\ref{D:Basics}. The bracket $\{\cdot, \cdot\}_{\mbox{\tiny nh}}$ on
functions on $\M$ associated to the bivector $\pi_{\mbox{\tiny nh}}$
  {\it describes the dynamics} in the sense that
\begin{equation*} \label{E:Eq_motion_nhBracket}
\mbox{$X_{\mbox{\tiny nh}}$} (f)(m) = \mbox{$X_{\Ham_\M}$}(f)
(m) = \{f,\Ham_\M \}_{\mbox{\tiny nh}}(m) \qquad \mbox{for all} \ f \in
C^\infty(\M).\end{equation*}
It follows from \eqref{E:DefBracket} that the characteristic distribution
of the bracket $\{\cdot, \cdot\}_{\mbox{\tiny nh}}$ is $\C$. Since $\C$ is non-integrable
then $\{\cdot, \cdot\}_{\mbox{\tiny nh}}$ is an almost Poisson bracket that
does not satisfy the Jacobi identity.

\begin{remark}
If the constraint distribution $\D$ were integrable, the same would be true for
the distribution $\C$. Let $\mathcal{N}\subset \M$ be a leaf of the corresponding (regular) foliation of $\M$
(i.e. $\C_x = T_x\mathcal{N}$ for all $x\in \mathcal{N}$). In view of \eqref{E:symp-decomp} the submanifold
$\mathcal{N}$ is symplectic. Therefore, in this case, our construction of $\{\cdot, \cdot\}_{\mbox{\tiny nh}}$ coincides with the
usual construction of the Dirac bracket on each leaf $\mathcal{N}$ of the foliation of $\M$ (see \cite{Dirac, Courant}). Hence, in this case,
the Jacobi identity holds.
\end{remark}



Inspired by equation \eqref{Ex:almostDirac} and our data, it is natural to
define the almost Dirac structure $L_{\mbox{\tiny \textup{nh}}}$ on $\M$ by
\begin{equation} L_{\mbox{\tiny \textup{nh}}}:= \{ (X,\alpha) \in T\M \oplus T^*\M \ : \ X \in
\C, \ {\bf i}_X \Omega_\M |_\C = -\alpha |_\C \}, \label{E:DiracNH}
\end{equation}
as considered (up to a minus sign) in \cite{YoshimuraMarsdenI, YoshimuraMarsdenII}, and subsequently in
\cite{JotzRatiu}.
%


\begin{proposition}\label{L:nhbracket}
Let $\pi_{\mbox{\tiny \textup{nh}}}$ be the bivector field defined
in \eqref{E:DefBracket} and $\{\cdot, \cdot\}_{\mbox{\tiny
\textup{nh}}}$ the corresponding bracket. The following statements hold:
\begin{enumerate}
\item [$(i)$] The almost Dirac structures $L_{\pi_{\mbox{\tiny \textup{nh}}}}:=
 graph(\pi_{\mbox{\tiny \textup{nh}}}^\sharp)$ 
and $L_{\mbox{\tiny \textup{nh}}}$ given in \eqref{E:DiracNH} coincide.
\item [$(ii)$] The almost Poisson bracket $\{\cdot,
\cdot\}_{\mbox{\tiny \textup{nh}}}$ coincides with the classical
almost Poisson  bracket for nonholonomic systems defined in \textup{\cite{SchaftMaschke1994, Marle1998}}.
\end{enumerate}
\end{proposition}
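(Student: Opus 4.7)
For part $(i)$, the plan is to unwind both descriptions and check they produce the same set of pairs $(X,\alpha)\in T\M\oplus T^*\M$. Starting from $L_{\mbox{\tiny nh}}$ as defined in \eqref{E:DiracNH}, the first-factor projection is exactly $\C$, and a pair lies in $L_{\mbox{\tiny nh}}$ precisely when $X\in\C$ and ${\bf i}_X\Omega_\M|_\C=-\alpha|_\C$. Since $X\in\C$, the restriction $({\bf i}_X\Omega_\M)|_\C$ equals ${\bf i}_X\Omega_\C$ by definition of $\Omega_\C$, so the defining condition becomes ${\bf i}_X\Omega_\C=-\alpha|_\C$. On the other hand, $L_{\pi_{\mbox{\tiny nh}}}=\mathrm{graph}(\pi_{\mbox{\tiny nh}}^\sharp)$ consists of pairs $(Y,\alpha)$ with $Y=\pi_{\mbox{\tiny nh}}^\sharp(\alpha)$; by \eqref{E:DefBracket}, this says $Y=-X_f^{(\alpha)}$ where ${\bf i}_{X_f^{(\alpha)}}\Omega_\C=\alpha|_\C$, equivalently ${\bf i}_Y\Omega_\C=-\alpha|_\C$ with $Y\in\C$. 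The two conditions match, so $L_{\mbox{\tiny nh}}=L_{\pi_{\mbox{\tiny nh}}}$; this can also be phrased as an application of part $(i)$ of Proposition \ref{L:Dirac2Section}, observing that both structures are associated to the same pair $(\C,\Omega_\C)$.

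For part $(ii)$, I will recall the Van der Schaft--Maschke/Marle construction and match Hamiltonian vector fields. The classical bracket on $\M$ is obtained from the symplectic decomposition \eqref{E:symp-decomp}: let $P:T_\M(T^*Q)\to\C$ be the projection along $\C^{\Omega_Q}$, and for $f\in C^\infty(\M)$ with any extension $\tilde f\in C^\infty(T^*Q)$, define $X_f^{\text{class}}:=P(X_{\tilde f})|_\M$, where $X_{\tilde f}$ is the canonical Hamiltonian vector field (${\bf i}_{X_{\tilde f}}\Omega_Q=d\tilde f$). The associated bracket is $\{f,g\}^{\text{class}}(m)=\Omega_\M(P X_{\tilde f},P X_{\tilde g})(m)$. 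It thus suffices to show $X_f^{\text{class}}=X_f$, with $X_f$ given by \eqref{E:Almost_Almost_Ham}, since both brackets can then be recovered from their Hamiltonian vector fields via the nondegenerate pairing $\Omega_\C$.

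The key calculation: for any $Y\in\C$, using $Y\in T\M$ we have $d\tilde f(Y)=df(Y)$; using $X_{\tilde f}|_\M=PX_{\tilde f}+QX_{\tilde f}$ with $QX_{\tilde f}\in\C^{\Omega_Q}$, one computes
\begin{equation*}
\Omega_\M(PX_{\tilde f},Y)=\Omega_Q(X_{\tilde f},Y)-\Omega_Q(QX_{\tilde f},Y)=d\tilde f(Y)-0=df(Y),
\end{equation*}
so that ${\bf i}_{PX_{\tilde f}}\Omega_\C=(df)_\C$. By the uniqueness clause in \eqref{E:Almost_Almost_Ham}, $PX_{\tilde f}=X_f$, and a further check shows the result is independent of the extension $\tilde f$ (two extensions differ by a function vanishing on $\M$, whose Hamiltonian vector field lies in $(T\M)^{\Omega_Q}\subset\C^{\Omega_Q}$, hence is killed by $P$).

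The only conceptual step that requires care is reconciling the possibly different sign conventions used by \cite{SchaftMaschke1994,Marle1998} for the projection and the bracket, which is why I expect this bookkeeping (and the well-definedness with respect to extensions) to be the main obstacle; otherwise the proof is a direct combination of \eqref{E:symp-decomp}, the defining relation \eqref{E:Almost_Almost_Ham}, and the characterization \eqref{E:DefBracket} of $\pi_{\mbox{\tiny nh}}$.
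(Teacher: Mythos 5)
Your proposal is correct and follows essentially the same route as the paper: part $(i)$ is the observation that both structures are determined by the pair $(\C,\Omega_\C)$ via Proposition \ref{L:Dirac2Section}, and part $(ii)$ starts from the projected-Hamiltonian-vector-field formula $\{f,g\}=\Omega_\M(\mathcal{P}Y_{\bar f},\mathcal{P}Y_{\bar g})$ (which the paper takes from the reference \cite{Cantrijn99}) and shows $\mathcal{P}Y_{\bar f}=X_f$ by exactly the computation you give. Your additional check that the construction is independent of the chosen extension is a detail the paper leaves implicit, but it changes nothing in substance.
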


\begin{proof}
({\it i}\,) It is immediate since both almost Dirac structure are
defined by the same pair $(\C, \Omega_\C)$.

%
%

({\it ii}\,) The  bracket on $\M$ for nonholonomic systems introduced in
\textup{\cite{SchaftMaschke1994, Marle1998}} was shown in \cite{Cantrijn99}
to be given by
\begin{equation}
\label{E:almost_Poisson}
\{ f, g \} = \Omega_\M (\mathcal{P}
\mbox{$Y_{\bar f}$}, \mathcal{P}\mbox{$Y_{\bar g}$}), \qquad
\mbox{for } f, g \in C^\infty(\M),
\end{equation}
where $\mathcal{P}:T_\M(T^*Q)\to \C$ is the projector associated to the symplectic decomposition
\eqref{E:symp-decomp}, and $\mbox{$Y_{\bar f}$}$ the free Hamiltonian vector
field on the symplectic manifold $(T^*Q, \Omega_Q)$ defined by
$\mathbf{i}_{\mbox{$Y_{\bar f}$}} \Omega_Q= d\bar f $, where $\bar f$ is an arbitrary smooth extension
of $f$ to $T^*Q$.

It is easy to check that along $\M$ one has ${\bf i}_{\mbox{$\mathcal{P} Y_{\bar f}$}}\Omega_\C =(df)_\C$,
so $\mathcal{P}Y_{\bar f}$ coincides with the almost
Hamiltonian vector field $X_f$ defined by equation
\eqref{E:Almost_Almost_Ham}. Therefore, for
any $f,g \in C^\infty(\M)$,
$$\{f,g\} = \Omega_\M (\mathcal{P} \mbox{$Y_{\bar f}$},
\mathcal{P}\mbox{$Y_{\bar g}$}) =\Omega_\C (X_f, X_g)= -
df(\pi^\sharp_{\mbox{\tiny nh}}(dg) ) = \{f,g \}_{\mbox{\tiny nh}}.
$$
\end{proof}

The second item of the above Proposition should not be surprising since the
 expression \eqref{E:almost_Poisson} is the nonholonomic version of the Dirac bracket
 (see discussion in \cite{IbLeMaMa1999, Cantrijn99}). Hence, its description naturally
 falls in the ambit of almost Dirac structures as described above.
As a consequence of the above Proposition, we can equivalently describe
 our nonholonomic system with the triple $(\M,
\pi_{\mbox{\tiny nh}}, \Ham_\M)$.

\begin{definition} \label{D:NHdescription}
Let $(\M, \pi_{\mbox{\tiny nh}},\Ham_\M)$ be a
nonholonomic system.
\begin{enumerate}\item
The bivector field $\pi_{\mbox{\tiny nh}}$ on $\M$ given by
(\ref{E:DefBracket})  is called {\it the nonholonomic bivector
field} and the bracket $\{\cdot, \cdot\}_{\mbox{\tiny nh}}$ is
called the {\it nonholonomic bracket}. \item  We say that an almost
Dirac structure $L$ {\it describes the dynamics of the nonholonomic
system} if the pair $(-X_{\mbox{\tiny nh}}, d \Ham_\M) \in
\Gamma(L)$.
\end{enumerate}
\end{definition}


%
%



\medskip

\subsection{Gauge transformations of the nonholonomic bracket}
\label{SS:GaugeTransf}

The main idea of using gauge transformations in our setting is that it opens  the
possibility to modify the geometric structure on $\M$ that describes the dynamics.

Consider the nonholonomic system $(\M, \pi_{\mbox{\tiny nh}},
\Ham_\M)$ and continue to denote $L_{\pi_{\mbox{\tiny nh}}} = graph
(\pi_{\mbox{\tiny nh}}^\sharp)$. The gauge transformation of
$\pi_{\mbox{\tiny nh}}$ associated to a 2-form $B$ on $\M$ gives
\begin{equation} \tau_B(L_{\pi_{\mbox{\tiny nh}}}) = \{(X, \alpha +
{\bf i}_X B ) \in T\M \oplus T^*\M \ : \ \pi_{\mbox{\tiny
nh}}^\sharp(\alpha) = X\}. \label{E:GaugedNH}
\end{equation}

First of all we are interested in knowing when the pair
$(-X_{\textup{\mbox{\tiny nh}}}, d\Ham_\M)$ is a section of $
\tau_B(L_{\pi_{\mbox{\tiny nh}}})$. On the other hand, we would also like to know whether the almost
Dirac structure $ \tau_B(L_{\pi_{\mbox{\tiny nh}}})$ corresponds to the
graph of a bivector field or not.

If the 2-form $B$ on $\M$ verifies   ${\bf
i}_{\mbox{$X_{\mbox{\tiny nh}}$}} \, B = 0$, then from equation
(\ref{E:GaugedNH}) we see that the pair $(-X_{\mbox{\tiny nh}},
d\Ham_\M)$ belongs to $\Gamma(\tau_B(L_{\pi_{\mbox{\tiny nh}}}))$.
Moreover, in view of  \eqref{Eq:GaugeSection}, the
gauge transformation of $\pi_{\mbox{\tiny nh}}$ by the 2-form $B$ has the form
\begin{equation} \tau_B(L_{\pi_{\mbox{\tiny nh}}}) = \{ (X,\alpha) \in
T\M \oplus T^*\M \ : \ X \in \C, \ {\bf i}_X (\Omega_\M - B)|_\C =
-\alpha |_\C \}. \label{E:GaugedOmega}
\end{equation} Thus the equations of motion
\eqref{E:Eqns_of_motion_Almost_Almost_Ham} are equivalently written
as
\begin{equation*} {\bf i}_{\mbox{$X_{\mbox{\tiny nh}}$}} \,(
\Omega_\C - B_\C) = (d\Ham_\M)_\C, \label{E:Omega0_modif}
\end{equation*}
where $B_\C$  is the point-wise restriction of $B$ to $\C$.

Therefore, as a particular case of Corollary \ref{C:NonDegSection},
we observe that if the section $
\Omega_\C - B_\C$ is non-degenerate then the gauge
transformation of $\pi_{\mbox{\tiny \textup{nh}}}$ associated to the
2-form $B$ is again a bivector field $\pi_{\mbox{\tiny
\textup{nh}}}^B$. It follows from \eqref{Eq:GaugedBivector} that
 the non-degeneracy of $ \Omega_\M - B$ on $\C$ is equivalent
to  the invertibility of the endomorphism $(\textup{Id} - B^\flat
\circ \pi^\sharp_{\mbox{\tiny nh}})$ on $T^*\M$ .
%
%

\bigskip

\begin{definition} \label{D:dynamicalGauge} Let $(P, \pi)$ be an almost Poisson
manifold with a distinguished Hamiltonian function $H \in
C^\infty(P)$. Given a 2-form $B$ on $P$, the gauge transformation of
$\pi$ associated to the 2-form $B$ is said to be a {\it dynamical gauge transformation} if
\begin{enumerate} \item[$(i)$] ${\bf i}_{X_H} B =0$, where $X_H$ is the (almost) Hamiltonian vector field
associated to $H$ and
\item[$(ii)$] $\tau_B (graph(\pi^\sharp))$ corresponds to
the graph of a new bivector $\pi^B$, i.e.,  the endomorphism
$(\textup{Id} - B^\flat\circ \pi^\sharp)$ on $T^*P$ is invertible.
\end{enumerate}
\end{definition}

Of course we are interested in dynamical gauge transformations of the nonholonomic bracket
where the distinguished Hamiltonian function is $\Ham_\M$ and the corresponding (almost)
Hamiltonian vector field is $\Xnh$.

Note that if $\pi$ is regular,  by equation
\eqref{Eq:GaugeSection}, the gauge transformation defined by $B$ is determined by
the restriction $B_F$ of $B$ to $F$ where $F:=\pi^\sharp(T^*P)$. Then condition ($i$) of the above Definition
is equivalent to ${\bf i}_{X_H} B_F =0$.
%
%

\begin{remark} \label{Rem:GeneralGauge} The definition
of an {\it affine almost Poisson bracket} for a nonholonomic system
made in \cite{Naranjo2008}  corresponds to a dynamical gauge
transformation of the nonholonomic bracket by a 2-form
$B=-\iota^*\Omega_0$ where $\Omega_0$ is a semi-basic form 2-form on
$T^*Q$. The proof is analogous to that of item ($ii$) of Proposition
\ref{L:nhbracket}. In this case, the hypothesis that $\Omega_0$ is
semi-basic implies that the condition ($ii$) of Definition
\ref{D:dynamicalGauge} is satisfied (see Proposition
\ref{C:NHGaugeSemibasic} below).
%
\end{remark}

\medskip

After this discussion, we observe that it is more appropriate to describe a
nonholonomic system by the triple $(\M, \mathfrak{F},
\Ham_\M)$ where $\mathfrak{F}$ is the family of  bivector fields that are
related to $\pi_{\mbox{\tiny nh}}$ through a dynamical gauge transformation.
  Notice that  $\C$ is the characteristic distribution of any bivector field in $\mathfrak{F}$.
 Thus, the (almost) Hamiltonian vector fields defined by the corresponding  brackets satisfy the nonholonomic constraints.
It follows that  the bivector fields in the family $\mathfrak{F}$ are almost
Poisson bivectors in a ``strong'' sense since the non-integrability of $\C$ prevents  them
from being twisted or conformally Poisson. Our interest in considering this big
family of brackets relies on reduction. In the presence of symmetries, the bracket that
Hamiltonizes the reduced equations may arise as the reduction of a member of $\mathfrak{F}$
that is not necessarily $\pi_{\mbox{\tiny nh}}$.

Since the distribution $\C$ is regular we observe
\begin{corollary} {\bf [of Theorem \ref{T:GaugedRelation}]}.
All bivectors with
characteristic distribution equal to $\C$ are gauge related (in particular, gauge related to the nonholonomic bivector $\pi_{\mbox{\tiny \textup{nh}}}$).
\end{corollary}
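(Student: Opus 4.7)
The plan is to read this as a direct specialization of Theorem \ref{T:GaugedRelation} to almost Dirac structures of the form $L_\pi = graph(\pi^\sharp)$, using that the characteristic distribution $\C$ of the nonholonomic system is regular.

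First I would take two bivector fields $\pi_1$ and $\pi_2$ on $\M$ with $\pi_i^\sharp(T^*\M) = \C$ for $i=1,2$, and consider the associated almost Dirac structures $L_{\pi_i} = graph(\pi_i^\sharp)$. By the defining property of the graph, $pr_1(L_{\pi_i}) = \pi_i^\sharp(T^*\M) = \C$. Since $\C$ is a regular distribution (as recorded in the proposition preceding equation \eqref{E:symp-decomp}), both $L_{\pi_1}$ and $L_{\pi_2}$ are regular almost Dirac structures with the same image under $pr_1$.

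Applying Theorem \ref{T:GaugedRelation}, there exists a 2-form $B$ on $\M$ such that $\tau_B(L_{\pi_1}) = L_{\pi_2}$. The only thing left to verify is that this equality of almost Dirac structures is genuinely a gauge relation between the \emph{bivectors} $\pi_1$ and $\pi_2$ in the sense of Section 3.3, i.e.\ that $\tau_B(L_{\pi_1})$ is the graph of a bivector and this bivector is precisely $\pi_2$. But this is automatic: since $\tau_B(L_{\pi_1}) = L_{\pi_2}$ and $L_{\pi_2}$ is the graph of $\pi_2^\sharp$, condition \eqref{Eq:TPcapL} holds for $\tau_B(L_{\pi_1})$, so by the discussion around \eqref{E:condition-gauge}--\eqref{Eq:GaugedBivector} the gauge-transformed bivector $\pi_1^B$ is well defined and must coincide with $\pi_2$. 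In particular, taking $\pi_1 = \pi_{\mbox{\tiny nh}}$ shows that every bivector with characteristic distribution $\C$ is gauge related to the nonholonomic bivector.

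I do not foresee any real obstacle: the corollary is essentially an observation that Theorem \ref{T:GaugedRelation} applies verbatim once one recalls that $\C$ is regular. The mildest subtlety is the distinction between gauge equivalence of Dirac structures (which is what Theorem \ref{T:GaugedRelation} provides) and gauge equivalence of bivectors (which additionally requires the invertibility of $\mathrm{Id}+B^\flat\circ\pi_1^\sharp$), but this invertibility is forced a posteriori by the fact that $\tau_B(L_{\pi_1})$ already coincides with the graph of $\pi_2^\sharp$.
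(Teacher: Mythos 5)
Your proposal is correct and is exactly the argument the paper intends: the corollary is stated as an immediate consequence of Theorem \ref{T:GaugedRelation} together with the regularity of $\C$, and your spelling out of the passage from gauge equivalence of the graphs to gauge equivalence of the bivectors themselves (via the automatic invertibility forced by $\tau_B(L_{\pi_1})=L_{\pi_2}$ being a graph) is the right way to close the only subtlety.
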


\medskip

We finish this section by discussing some cases for which the
second condition in Definition \ref{D:dynamicalGauge} is satisfied.
  Recall that $\M\subset T^*Q $ is a vector bundle over $Q$. We have

\begin{proposition}\label{C:NHGaugeSemibasic}
If $B$ is a semi-basic 2-form on $\M$, then the gauge transformation
of $\pi_{\mbox{\tiny \textup{nh}}}$ associated to $B$ corresponds again to a
bivector field.
\end{proposition}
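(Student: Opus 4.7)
The plan is to verify the non-degeneracy criterion from Corollary \ref{C:NonDegSection} together with \eqref{Eq:GaugedBivector}: the gauge transformation yields a bivector field precisely when the point-wise restriction $\Omega_\C - B_\C$ is non-degenerate on $\C$, or equivalently when $\tau_B(L_{\pi_{\mbox{\tiny nh}}}) \cap T\M = \{0\}$. So I will take $X \in \C$ with $(\Omega_\M - B)(X, Y) = 0$ for every $Y \in \C$ and show $X = 0$, exploiting the hypothesis that $B$ is semi-basic, i.e.\ ${\bf i}_V B = 0$ for every $V$ tangent to a fiber of $\tau|_\M \colon \M \to Q$.

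The first step is to observe that the vertical subbundle $V_\M := T\M \cap \ker(\tau_*)$ is contained in $\C$: indeed, $V_\M \subset T\M$ tautologically, and $\tau^*\vecep$ annihilates any vertical vector. Testing the hypothesis against an arbitrary $V \in V_\M$ and using that $B$ is semi-basic (so $B(X,V) = -({\bf i}_V B)(X) = 0$) one obtains
\begin{equation*}
\Omega_\M(X, V) \;=\; \Omega_Q(X, V) \;=\; 0 \quad \text{for all } V \in V_\M.
\end{equation*}
In canonical coordinates with $\Omega_Q = dq^i \wedge dp_i$, this condition says that $\tau_*(X) \in T_qQ$ annihilates the fiber $\M_q \subset T^*_qQ$ (under the vertical identification $V_{m\M} \cong \M_q$). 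Since $\M = \L(\D)$ and the Legendre transform of the kinetic energy metric sends $\dot{\bf q} \mapsto \mathcal{G}\dot{\bf q}$, we have $\M_q = \mathcal{G}(\D_q)$. Hence $\mathcal{G}(\tau_*(X), \dot{\bf q}) = 0$ for all $\dot{\bf q} \in \D_q$, which means $\tau_*(X) \in \D_q^\perp$ with respect to $\mathcal{G}$.

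On the other hand, $X \in \C \subset \F$ forces $\tau_*(X) \in \D_q$. Since $\mathcal{G}$ is positive definite, $\D_q \cap \D_q^\perp = \{0\}$, so $\tau_*(X) = 0$, i.e.\ $X \in V_\M$. Now apply the semi-basic hypothesis a second time: with $X$ itself vertical, ${\bf i}_X B = 0$, so the assumption reduces to $\Omega_\M(X, Y) = \Omega_\C(X, Y) = 0$ for every $Y \in \C$. Non-degeneracy of $\Omega_\C$ (from the Bates--Sniatycki proposition quoted in the text) yields $X = 0$, completing the proof. The main conceptual point—and the only mildly delicate step—is the double use of the semi-basic condition: first to extract that $X$ must be vertical by playing $B$ off against the vertical subbundle $V_\M \subset \C$, and then to eliminate $B$ entirely so that non-degeneracy of $\Omega_\C$ can be invoked.
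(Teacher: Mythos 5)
Your proposal is correct, and it follows the same overall scheme as the paper's proof: reduce the question to the non-degeneracy of the point-wise restriction $(\Omega_\M - B)|_\C$ via Proposition \ref{L:Dirac2Section}, equation \eqref{Eq:GaugeSection} and Corollary \ref{C:NonDegSection}. The difference is in how that non-degeneracy is established. The paper simply cites \cite{Naranjo2008} for the statement that a semi-basic $B$ cannot destroy the non-degeneracy of $\Omega_\C$, whereas you supply a complete, self-contained argument: you first use the semi-basic hypothesis against the vertical subbundle $V_\M \subset \C$ to conclude that a null vector $X$ of $(\Omega_\M - B)|_\C$ projects to $\D_q^{\perp_{\mathcal G}}$, combine this with $X \in \F$ (so $\tau_* X \in \D_q$) and the positive definiteness of the kinetic energy metric to force $X$ vertical, and then use the semi-basic hypothesis a second time to kill $B$ entirely and invoke the Bates--Sniatycki non-degeneracy of $\Omega_\C$. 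Each step checks out: $V_\M \subset \C$ because $\tau^*\vecep$ annihilates vertical vectors; the identification of $\M_q$ with $\mathcal{G}(\D_q)$ via the Legendre transform is exactly what makes $\Omega_Q(X,V)=0$ for all $V\in V_\M$ equivalent to $\tau_*X \perp_{\mathcal G} \D_q$; and $\D_q \cap \D_q^{\perp_{\mathcal G}} = \{0\}$ by Riemannian positive definiteness. What your version buys is that the proposition no longer rests on an external reference, and it makes transparent exactly where the mechanical structure (kinetic-energy Legendre transform, positive definite metric) enters; the paper's version is shorter but opaque at precisely that point.
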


\begin{proof}
The graph of $\pi_{\mbox{\tiny{nh}}}^\sharp$ is an almost Dirac
structure  corresponding to the pair $(\C, \Omega_\C)$ in the sense
of Proposition \ref{L:Dirac2Section} (see Proposition
\ref{L:nhbracket} ($i$)). Thus, in view of \eqref{Eq:GaugeSection},
the gauge transformation of $\pi_{\mbox{\tiny{nh}}}$ is the almost
Dirac structure corresponding to the pair  $(\C,
(\Omega_\M-B)|_\C)$. It is shown in \cite{Naranjo2008} that if $B$ a
is a semi-basic 2-form, then the point-wise restriction of $(\Omega_\M - B)$ to $\C$
 is non-degenerate. Thus, by Corollary
\ref{C:NonDegSection}, $\tau_B(L_{\pi_{\mbox{\tiny nh}}})$
corresponds to the graph of a bivector.
%
%
%
\end{proof}

In fact this Proposition is a special case of the following result:

\begin{proposition}\label{P:GaugeSemibasic}
Let $P \rightarrow Q$ be a vector bundle and $\pi$ a regular almost
Poisson bivector on $P$. If for all semi-basic 1-forms $\alpha$ on
$P$ the vector field $\pi^\sharp(\alpha)$ is  vertical,
then the gauge transformation of $\pi$ associated to a semi-basic
2-form $B$ corresponds again to a bivector.
\end{proposition}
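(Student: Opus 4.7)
The plan is to apply the invertibility criterion recalled in the paragraph containing \eqref{E:condition-gauge}: the gauge transformation $\tau_B(\mathrm{graph}(\pi^\sharp))$ is the graph of a bivector field if and only if the endomorphism $\mathrm{Id}+B^\flat\circ\pi^\sharp$ of $T^*P$ is invertible on each fibre. I will in fact show that under the two hypotheses the composition $B^\flat\circ\pi^\sharp$ is nilpotent of order two, which makes $\mathrm{Id}+B^\flat\circ\pi^\sharp$ invertible with explicit inverse $\mathrm{Id}-B^\flat\circ\pi^\sharp$.

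To prove $(B^\flat\circ\pi^\sharp)^2=0$, I would chase an arbitrary $\alpha\in T^*P$ through the composition. Since $B$ is semi-basic, the 1-form $B^\flat(\pi^\sharp(\alpha))={\bf i}_{\pi^\sharp(\alpha)}B$ is itself semi-basic, because ${\bf i}_X B$ is semi-basic for every vector field $X$ (its contraction with any vertical vector is a contraction of $B$ with two vectors one of which is vertical). The standing hypothesis on $\pi$ then forces $\pi^\sharp$ to send this semi-basic 1-form to a vertical vector field. A final application of $B^\flat$ to that vertical vector yields zero, by semi-basicness of $B$ once more. Hence $(B^\flat\circ\pi^\sharp)^2=0$, the endomorphism $\mathrm{Id}+B^\flat\circ\pi^\sharp$ is invertible via the truncated Neumann series $\mathrm{Id}-B^\flat\circ\pi^\sharp$, and the gauge transformation produces a bivector $\pi^B$ whose sharp map is given by \eqref{Eq:GaugedBivector}.

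The entire argument is a short bookkeeping exercise resting on the duality between ``semi-basic'' (contraction with verticals vanishes) and ``vertical'' (annihilated by contraction of semi-basic forms), which is used twice, once on each side of the composition $B^\flat\circ\pi^\sharp$. No serious obstacle is anticipated; the only delicate point is ensuring that the sign conventions in the invertibility criterion preceding \eqref{E:condition-gauge} and in \eqref{Eq:GaugedBivector} agree, but the nilpotency argument is insensitive to such sign choices since $\mathrm{Id}\pm N$ is invertible whenever $N^2=0$.
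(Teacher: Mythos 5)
Your argument is correct and is in essence the paper's own proof: the paper writes $B^\flat\circ\pi^\sharp$ in local bundle coordinates $({\bf q},{\bf p})$ and observes that it annihilates the $d{\bf q}$'s and sends the $d{\bf p}$'s into the span of the $d{\bf q}$'s, so that $\mathrm{Id}-B^\flat\circ\pi^\sharp$ is a unipotent block-triangular matrix --- which is exactly your coordinate-free nilpotency statement $(B^\flat\circ\pi^\sharp)^2=0$. The two facts you invoke (contraction of a semi-basic $2$-form with any vector is semi-basic, and with a vertical vector is zero) are precisely what produce the zero block and the identity diagonal in the paper's matrix, so this is the same mechanism in a slightly cleaner packaging.
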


\begin{proof}
Consider local bundle coordinates $({\bf q}, {\bf p})\in U\times V\subset \R^n\times \R^m$,
on  $P$ such that ${\bf q}$  are local
coordinates on the base manifold $Q$.  Since
 $\pi^\sharp(d{\bf q})$ is  vertical and $B$ is semi-basic we obtain
\begin{eqnarray*} B^\flat \circ \pi^\sharp (d{\bf q}) &=& 0   \\
B^\flat \circ \pi^\sharp (d{\bf p}) &=& b({\bf q}, {\bf p})\, d{\bf q}, \end{eqnarray*}
where $b({\bf q}, {\bf p})$ denotes the $m\times n$ matrix with entries
$b_{aj}({\bf q}, {\bf p})= \langle B^\flat \circ \pi^\sharp (dp_a),
\frac{\partial}{\partial q^j} \rangle$, for $a=1,...,m$, and
$j=1,...,n$.
 Thus, the matrix representation
of the endomorphism $(\textup{Id} - B^\flat\circ \pi^\sharp)$ on
$T^*P$ is $$\left(\begin{array}{cc} \textup{Id}_{n\times n} &
-b({\bf q}, {\bf p})^T
\\ 0 & \textup{Id}_{m\times m} \end{array} \right).$$
This matrix has full rank and hence $(\textup{Id} - B^\flat \circ
\pi^\sharp ) : T^*P \rightarrow T^*P$ is invertible.
\end{proof}

%

\begin{remark} \label{R:GeneralDynGauge} It is interesting for future work to  drop the condition ($ii$) in Definition
\ref{D:dynamicalGauge}
that  requires $\tau_B(graph(\pi^\sharp_{\mbox{\tiny
nh}}))$   to  define a bivector field. In this case, the family
$\mathfrak{F}$ consists of all the  almost Dirac
structures that are gauge related to $L_{\mbox{\tiny nh}}$ and that describe the nonholonomic
dynamics. In this sense, the Hamiltonization of the problem is achieved if the
reduction of a member of $\mathfrak{F}$ is  a Dirac structure (but not necessarily a
Poisson structure). This approach requires the consideration of a general reduction
scheme for almost  Dirac structures. However, we are unaware of
any examples of nonholonomic systems that justify the need of such a general framework.
 \end{remark}

\medskip

\subsection{Reduction by a group of symmetries}
\label{SS:Reduction-general}

We now add symmetries to the problem and perform the reduction. Our interest
from the point of view of Hamiltonization is to find a bivector field
in the family $\mathfrak{F}$ whose reduction is either Poisson or conformally Poisson
(see Section \ref{Ss:ConformalFactor}).

%

 Let $G$ be a Lie group acting
 freely and properly on $Q$. We say that that $G$ is a {\it symmetry
} of the nonholonomic system if the  lifted action to $TQ$ is free and
proper, and leaves the constraint distribution $\D
\subset TQ$ and the Lagrangian $\mathcal{L} :TQ \rightarrow \R$
invariant.

Denote by  $\Psi: G \times T^*Q \rightarrow T^*Q$ the cotangent lift of
the action to $T^*Q$. If $G$ is a symmetry for our nonholonomic
system, then $\Psi$ leaves both the constraint submanifold $\M$ and the
Hamiltonian $\Ham:T^*Q \rightarrow \R$ invariant. We continue to denote by $\Psi$
the restricted action to $\M$.

One can show that the tangent lift of  $\Psi$ to $T\M$,
preserves the distribution $\C$ and the section
$\Omega_\C$. As a consequence, if $G$ is a symmetry group for our
nonholonomic system, then the action $\Psi$
preserves the standard nonholonomic bracket $\{ \cdot,
\cdot\}_{\mbox{\tiny nh}}$. That is, for $f, g \in C^\infty(\M)$, we have \begin{equation*} \{f \circ \Psi , g \circ
\Psi \}_{\mbox{\tiny nh}} = \{f , g \}_{\mbox{\tiny nh}}\circ \Psi.
\label{E:NHBracketPreserved} \end{equation*} By freeness and properness of the action,
the reduced space $\RR := \M / G$
is a smooth manifold  and the orbit projection map $\rho: \M\to \RR$ is a surjective
submersion.  Notice
that $\RR$ inherits a vector bundle structure from $\M$ over the \emph{shape space} $Q/G$.
Moreover, $\RR$ is equipped with the reduced nonholonomic
bracket $\{ \cdot, \cdot\}_{\mbox{\tiny red}}$ that is characterized  by
\begin{equation} \{f , g \}_{\mbox{\tiny red}}\circ \rho(m) := \{f \circ
\rho , g \circ \rho\}_{\mbox{\tiny nh}} (m) \qquad \mbox{for } m \in \M
\mbox{ and } f, g \in C^\infty(\RR). \label{E:NHBracketReduction}
\end{equation}
The corresponding bivector field will be denoted $\pi_{\mbox{\tiny red}}$.
The reduced nonholonomic bracket describes the reduced dynamics in the sense that the
nonholonomic vector field $X_{\mbox{\tiny nh}}$ is $\rho$-related to
the (almost) Hamiltonian vector field $X_{\Ham_{\RR}} = \{ \cdot , \Ham_\RR\}_{\mbox{\tiny red}}$ associated to
the reduced Hamiltonian $\Ham_\RR$ defined by the condition $\Ham_\M
= \Ham_\RR \circ \rho$.

%

\begin{remark} The reduction of nonholonomic systems performed by
Bates and Sniatycki in \cite{BS93} shows that it is possible to
define a 2-form $\omega_{\mbox{\tiny red}}$ on $\RR$ which is
non-degenerate along a distribution $\bar \C \subset T\RR$. The
definition of $\bar \C$ is given by $\bar \C:= T\rho (U)$ where $U
:= \C \cap (\C \cap V)^{\Omega_\M}$ and $V$ is the distribution on
$\M$ tangent to the orbits of $G$. In fact, the pair $(\bar \C,
\omega_{\mbox{\tiny red}})$ is just the pair associated to the
almost Dirac structure given by the bivector field $\pi_{\mbox{\tiny
red}}$ in the sense of Corollary \ref{C:NonDegSection}.
\end{remark}

The analysis of the  reduction of the bivector fields $\pi_{\mbox{\tiny nh}}^B$ in the family $\frak{F}$ that
are  related to $\pi_{\mbox{\tiny nh}}$ by a dynamical gauge $B$ follows from:

\begin{proposition} \label{P:GaugedPreserved}
Let $(P,\pi)$ be an almost Poisson manifold and $B$ a 2-form on $P$
such that the endomorphism $(\textup{Id} - B^\flat \circ \pi^\sharp)
: T^*P \rightarrow T^*P$ is invertible. If the Lie group $G$, acting
freely and properly on $P$, preserves the almost Poisson structure
$\pi$ and leaves $B$ invariant, then $G$ preserves the bivector
field $\pi^B$ obtained by the gauge transformation of $\pi$
associated to $B$.
\end{proposition}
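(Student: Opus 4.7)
The plan is to pass to the almost Dirac picture developed in Section \ref{S:Geometry}, where gauge transformations act naturally. The bivector $\pi^B$ is completely determined by the almost Dirac structure $\tau_B(L_\pi) = \{(\pi^\sharp(\alpha), \alpha + {\bf i}_{\pi^\sharp(\alpha)} B) : \alpha \in T^*P\}$, which by hypothesis on the invertibility of $\textup{Id} - B^\flat \circ \pi^\sharp$ coincides with $\textup{graph}((\pi^B)^\sharp)$. Since the graph of a bivector is $G$-invariant if and only if the bivector itself is $G$-invariant, it suffices to show that $\tau_B(L_\pi)$ is preserved by the natural action of $G$ on $TP \oplus T^*P$.

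The $G$-action on $P$ lifts to the Whitney sum $TP \oplus T^*P$ via $g \cdot (X, \alpha) = (T\Psi_g\, X, (T\Psi_{g^{-1}})^* \alpha)$. In this language, $G$-invariance of $\pi$ is exactly the statement that $L_\pi = \textup{graph}(\pi^\sharp)$ is invariant under this action. I then verify that the map $\tau_B : TP \oplus T^*P \to TP \oplus T^*P$ is $G$-equivariant precisely because $B$ is $G$-invariant: evaluating both $g \cdot \tau_B(X,\alpha)$ and $\tau_B(g \cdot (X,\alpha))$ at a test tangent vector reduces the required equality to the identity $(T\Psi_{g^{-1}})^*({\bf i}_X B) = {\bf i}_{T\Psi_g X}\, B$, which is nothing other than $\Psi_g^* B = B$. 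Consequently $\tau_B(L_\pi)$ is the image of a $G$-invariant subbundle under a $G$-equivariant fiberwise automorphism, hence is itself $G$-invariant.

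Combining these two observations, $\textup{graph}((\pi^B)^\sharp) = \tau_B(L_\pi)$ is $G$-invariant, and therefore $\pi^B$ is preserved by $G$. An alternative, equivalent, route is to work directly with the formula $(\pi^B)^\sharp = \pi^\sharp \circ (\textup{Id} - B^\flat \circ \pi^\sharp)^{-1}$ from \eqref{Eq:GaugedBivector}: the $G$-invariance of $\pi$ and of $B$ make $\pi^\sharp$ and $B^\flat$ equivariant with respect to the induced actions on $TP$ and $T^*P$, so their composition, its inverse, and hence $(\pi^B)^\sharp$ are equivariant as well. There is no genuine obstacle in this proof; the only point that needs a careful check is the equivariance of $\tau_B$, which is a direct unwinding of the definitions once the natural lifted action on $TP \oplus T^*P$ is fixed.
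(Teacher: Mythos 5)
Your proposal is correct. Your primary argument, however, is not the one the paper uses: the paper's entire proof is the one-line observation that, by \eqref{Eq:GaugedBivector}, $(\pi^B)^\sharp = \pi^\sharp \circ (\textup{Id} - B^\flat \circ \pi^\sharp)^{-1}$ is a composition of $G$-equivariant maps, hence equivariant --- which is precisely the ``alternative route'' you mention in your last paragraph. Your main argument instead works in the almost Dirac picture: you check that the lifted $G$-action on $TP \oplus T^*P$ preserves $L_\pi = \textup{graph}(\pi^\sharp)$ (this is invariance of $\pi$) and that $\tau_B$ is $G$-equivariant (this reduces, as you correctly compute, to $\Psi_g^* B = B$), so $\tau_B(L_\pi) = \textup{graph}((\pi^B)^\sharp)$ is invariant. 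Both arguments are sound. The paper's is shorter and stays entirely within the bivector formalism; yours is more conceptual and has the advantage of not actually needing the invertibility of $\textup{Id} - B^\flat \circ \pi^\sharp$ to conclude invariance of the gauged object --- the equivariance of $\tau_B$ shows that the gauge-transformed almost Dirac structure is $G$-invariant in general, with the invertibility hypothesis only needed to identify it as the graph of a bivector (this is essentially the generality contemplated in Remark \ref{R:GeneralDynGauge}).
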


\begin{proof} In view of equation (\ref{Eq:GaugedBivector}), we see
that $(\pi^B)^\sharp$ is a composition of invariant maps and we
conclude that $\pi^B$ is invariant as well.
\end{proof}

As a direct consequence of this Proposition we have:
\begin{proposition}
\label{P:reduction-gauge}
If  $G$ is a symmetry group of the nonholonomic system $(\M,
\pi_{\mbox{\tiny \emph{nh}}}, \Ham_\M)$ and  $B$ is a $G$-invariant dynamical
gauge on $\M$, then  $\{ \cdot, \cdot\}_{\mbox{\tiny \emph{nh}}}^B$ is $G$ invariant.
In particular there is a reduced bivector field $\pi_{\mbox{\tiny \emph{red}}^B}$ on the reduced space
$\RR$ that determines a well defined bracket $\{ \cdot , \cdot \}_{\mbox{\tiny \emph{red}}^B}$ on $\RR$
 satisfying
$$
\{f, g\}_{\mbox{\tiny ${\textup{red}}$}^B} \circ \rho(m) = \{f \circ \rho, g \circ \rho \}^B_{\mbox{\tiny \emph{nh}}} (m),
\qquad \mbox{for } f, g \in C^\infty(\RR).
$$
Moreover, the reduced bracket $\{ \cdot , \cdot \}_{\mbox{\tiny \emph{red}}^B}$ also describes
the reduced dynamics in the sense that $X_{\Ham_{\RR}} = \{ \cdot , \Ham_\RR\}_{\mbox{\tiny \emph{red}}^B}$.
\end{proposition}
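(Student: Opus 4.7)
The plan is to proceed in three short steps, each invoking a result already established above, so that the proposition becomes essentially a packaging statement.

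First, I would obtain the $G$-invariance of $\{\cdot,\cdot\}_{\mbox{\tiny nh}}^B$ as an immediate application of Proposition \ref{P:GaugedPreserved}. Its hypotheses are satisfied: $\pi_{\mbox{\tiny nh}}$ is $G$-invariant because the lifted action preserves both $\C$ and $\Omega_\C$ (as noted in the opening of Section \ref{SS:Reduction-general}); $B$ is $G$-invariant by assumption; and the invertibility of $\textup{Id} - B^\flat \circ \pi_{\mbox{\tiny nh}}^\sharp$ is built into the definition of a dynamical gauge. The conclusion is that $\pi_{\mbox{\tiny nh}}^B$ is $G$-invariant, and hence so is its bracket.

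Second, I would construct $\pi_{\mbox{\tiny red}^B}$ and the associated bracket on $\RR$ by the standard descent argument, exactly mirroring the definition of $\{\cdot,\cdot\}_{\mbox{\tiny red}}$ in \eqref{E:NHBracketReduction}. For $f,g \in C^\infty(\RR)$, the pullbacks $f\circ\rho$ and $g\circ\rho$ are $G$-invariant, so by the previous step $\{f\circ\rho, g\circ\rho\}_{\mbox{\tiny nh}}^B$ is $G$-invariant and therefore descends through the surjective submersion $\rho: \M \to \RR$ to a well-defined function on $\RR$, which I take as the definition of $\{f,g\}_{\mbox{\tiny red}^B}$. Skew-symmetry and the Leibniz rule transfer directly from $\{\cdot,\cdot\}_{\mbox{\tiny nh}}^B$, so the descended bracket is almost Poisson and determines the bivector field $\pi_{\mbox{\tiny red}^B}$ on $\RR$.

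Third, for the dynamical assertion I would first observe that $X_{\mbox{\tiny nh}}$ remains the almost Hamiltonian vector field of $\Ham_\M$ with respect to $\{\cdot,\cdot\}_{\mbox{\tiny nh}}^B$. This uses the dynamical-gauge condition $\mathbf{i}_{X_{\mbox{\tiny nh}}}B = 0$ together with the characterization \eqref{E:condition-gauge}: setting $\alpha = d\Ham_\M$ and using $\pi_{\mbox{\tiny nh}}^\sharp(d\Ham_\M) = -X_{\mbox{\tiny nh}}$, the form $\mathbf{i}_{\pi_{\mbox{\tiny nh}}^\sharp(d\Ham_\M)}B$ vanishes and we conclude $(\pi_{\mbox{\tiny nh}}^B)^\sharp(d\Ham_\M) = -X_{\mbox{\tiny nh}}$. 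Since $X_{\mbox{\tiny nh}}$ is $G$-invariant and already $\rho$-related to $X_{\Ham_\RR}$, applying the descent formula to the pair $(f,\Ham_\RR)$ for an arbitrary $f \in C^\infty(\RR)$ yields
\begin{equation*}
X_{\Ham_\RR}(f)\circ\rho = X_{\mbox{\tiny nh}}(f\circ\rho) = \{f\circ\rho,\Ham_\M\}_{\mbox{\tiny nh}}^B = \{f,\Ham_\RR\}_{\mbox{\tiny red}^B}\circ\rho,
\end{equation*}
and surjectivity of $\rho$ gives the claimed identity. There is no essential obstacle in this argument; the only point that requires a moment's care is bookkeeping in the third step, namely recognizing that the condition $\mathbf{i}_{X_{\mbox{\tiny nh}}}B = 0$ (rather than $B = 0$) is precisely what is needed to identify the two Hamiltonian vector fields of $\Ham_\M$, a fact that was already implicit in the discussion following Definition \ref{D:dynamicalGauge}.
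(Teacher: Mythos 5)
Your proposal is correct and follows essentially the same route the paper intends: the paper states this proposition as a direct consequence of Proposition \ref{P:GaugedPreserved} (giving $G$-invariance of $\pi_{\mbox{\tiny nh}}^B$), with the descent of the bracket mirroring \eqref{E:NHBracketReduction} and the dynamical claim resting on the observation, already made after \eqref{E:GaugedNH}, that ${\bf i}_{X_{\mbox{\tiny nh}}}B=0$ forces $(\pi_{\mbox{\tiny nh}}^B)^\sharp(d\Ham_\M)=\pi_{\mbox{\tiny nh}}^\sharp(d\Ham_\M)=-X_{\mbox{\tiny nh}}$. Your write-up simply makes explicit the details the paper leaves implicit, and each step checks out.
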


There is a good reason why we did not denote the reduced bivector
field $\pi_{\mbox{\tiny {red}}^B}$ by $\pi_{\mbox{\tiny {red}}}^B$ and that is
that in general the reduced bivector fields $\pi_{\mbox{\tiny {red}}}$ and $\pi_{\mbox{\tiny {red}}^B}$
need \emph{not} be gauged related.  We shall see this explicitly
with the analysis of our mechanical examples (Remark \ref{R:reduced_not_gauged}).

\medskip

\begin{remark} Since the almost Dirac structure $L_{\mbox{\tiny \textup{nh}}}$ given in (\ref{E:DiracNH}) is
the graph of a bivector (see Proposition \ref{L:nhbracket} ($i$)),
then the reduction of $L_{\mbox{\tiny \textup{nh}}}$ as an almost Dirac structure is simply the
reduction of the bivector $\pi_{\mbox{\tiny nh}}$ in the classical
way. The same observation is valid for the reduction of the almost
Dirac structure $\tau_B(L_{\pi_{\mbox{\tiny nh}}})$ defined in
(\ref{E:GaugedOmega}) since $\tau_B(L_{\pi_{\mbox{\tiny nh}}})$ is
also the graph of a bivector ($(\Omega_\M -B)$ is non-degenerate on
$\C$).
\end{remark}

\subsubsection*{The $G$-Chaplygin case.}
If the reduced bivector fields  $\pi_{\mbox{\tiny {red}}}$ and $\pi_{\mbox{\tiny {red}}^B}$  happen to be everywhere
non-degenerate, then they are  gauge-related.
This is the scenario that one finds after reduction of external symmetries of $G$-Chaplygin systems
(see \cite{EhlersKoiller}). These systems are characterized by the property that the tangent space to the
orbits of the symmetry group exactly complements the constraint
distribution on the tangent space $TQ$ of the configuration manifold
$Q$.

In this case there exist unique non-degenerate 2-forms
$\Omega_{\mbox{\tiny {red}}}$  and $\Omega_{\mbox{\tiny {red}}^B}$
on $\RR$ satisfying
\begin{equation*}
\pi_{\mbox{\tiny {red}}}^\sharp \circ \Omega_{\mbox{\tiny
{red}}}^\flat =\textup{Id}, \qquad \pi_{\mbox{\tiny {red}}^B}^\sharp
\circ
\Omega_{\mbox{\tiny {red}}^B}^\flat =\textup{Id}, 
\end{equation*}
 where as usual, $\Omega_{\mbox{\tiny
{red}}}^\flat(X) = - {\bf i}_{X}\Omega_{\mbox{\tiny {red}}}$ and
$\Omega_{\mbox{\tiny {red}}^B}^\flat(X) = -{\bf
i}_{X}\Omega_{\mbox{\tiny {red}}^B}$ for all $X\in
\mathfrak{X}(\RR)$.
In particular, the reduced equations can be
written as:
 \begin{equation*}
{\bf i}_{X_{\Ham_{\RR}}}\Omega_{\mbox{\tiny {red}}}={\bf i}_{X_{\Ham_{\RR}}}\Omega_{\mbox{\tiny {red}}^B}=d\Ham_\RR,
\end{equation*}
which gives  different almost symplectic formulations of the reduced
equations (compare with the results in \cite{Hoch}). The bivector
fields $\pi_{\mbox{\tiny {red}}}$ and $\pi_{\mbox{\tiny {red}}^B}$
are gauge related via the 2-form $B_{\mbox{\tiny {red}}}:=
\Omega_{\mbox{\tiny {red}}}-\Omega_{\mbox{\tiny {red}}^B}$ on $\RR$.
Moreover, the 2-forms $\Omega_{\mbox{\tiny {red}}}$ and
$\Omega_{\mbox{\tiny {red}}^B}$ satisfy
\begin{equation*}
(\rho^* \Omega_{\mbox{\tiny red}} )|_\C = \Omega_\C \qquad
\mbox{and} \qquad (\rho^* \Omega_{{\mbox{\tiny red}}^B}) |_\C =
(\Omega_\M-B)_\C,
\end{equation*}
and thus $B_{\mbox{\tiny red}}$ verifies that $(\rho^*B_{\mbox{\tiny
red}}) |_\C = B_\C.$ In other words, we have the following
commutative diagram $$\xymatrix{
L_{\pi_{\mbox{\tiny nh}}}\ar[d]^{\rho}\ar[rr]^{\tau_B}&& L_{\pi_{\mbox{\tiny nh}}^B}\ar[d]^{\rho}\\
L_{\pi_{\mbox{\tiny red}}} \ar[rr]^{\tau_{B_{\mbox{\tiny red}}}} &&
L_{\pi_{{\mbox{\tiny red}}^B} }.}$$ We stress that one does not have
such a commutative diagram in more general situations, see Remark
\ref{R:reduced_not_gauged}.

\subsection{Hamiltonization} \label{Ss:ConformalFactor}

 We continue with the notation of the previous sections
and suppose that $G$ is a symmetry group for our nonholonomic system. The solutions of the reduced equations
on the reduced space $\RR=\M/G$ are the integral curves of the reduced vector field $X_{\Ham_{\RR}}$
and preserve the reduced Hamiltonian $\Ham_\RR$.

 The issue
of Hamiltonization in our context concerns answering the question of
whether the reduced vector field $X_{\Ham_{\RR}}$ on $\RR$ is
Hamiltonian. Our candidates for the Hamiltonian structure come from the reduction
of the (invariant) bivector fields that belong to  the family $\frak{F}$.

It  turns out that the above condition is too restrictive.
We relax it by asking that the vector field $X_{\Ham_{\RR}}$ can be
rescaled by a basic\footnote{We use the term basic with respect to the fibered structure of $\RR$ inherited from
$\M$. That is  $\varphi=\tilde \varphi \circ \tau$ where $\tau :\RR\rightarrow Q/G$ is the bundle projection and
$\tilde \varphi:Q/G \rightarrow \R$.} positive function $\varphi :\RR\rightarrow \R$ in such
a way that the resulting vector field $\varphi \, X_{\Ham_\RR}$ is
Hamiltonian.

In view of Proposition \ref{P:reduction-gauge}, for any
$G$-invariant bivector field $\pi_{\mbox{\tiny nh}}^B$ belonging to $\frak{F}$,
the rescaled vector field $\varphi\,
X_{\Ham_{\RR}}$ satisfies
\begin{equation}
\label{E:Hamiltonization} (\varphi \pi_{\mbox{\tiny {red}}^B} )^\sharp (d \Ham_\RR) = - \varphi X_{\Ham_{\RR}}.
\end{equation}
Hence, we are interested in finding a bivector field $\pi_{\mbox{\tiny {red}}^B}$ satisfying
\begin{equation}
\label{E:ConfPoisson}
[\varphi \pi_{\mbox{\tiny {red}}^B}, \varphi \pi_{\mbox{\tiny {red}}^B} ] =0,
\end{equation}
that is, we want to  find $\pi_{\mbox{\tiny {red}}^B}$  conformally Poisson (see Definition \ref{D:Basics}).

\begin{definition}
\label{D:Hamiltonization}
If there exist an invariant bivector field  $\pi_{\mbox{\tiny nh}}^B$ belonging to $\frak{F}$ and a strictly positive,
basic function $\varphi:\RR\to \R$ such that \eqref{E:Hamiltonization} and \eqref{E:ConfPoisson} hold,
we say that the nonholonomic system is \emph{Hamiltonizable}. Moreover, we say that the reduced equations
are Hamiltonian in the new time $\tau$ defined by $d\tau =\frac{1}{\varphi}dt$ (see discussion below).
\end{definition}

In Section \ref{S:Hamiltonization} we will extend the discussion of  Section \ref{S:Examples} and
show that all generalized rolling systems are Hamiltonizable. The table \eqref{E:Table-brackets}
in Section \ref{S:Hamiltonization}
shows how different scenarios of the Hamiltonization scheme described above are realized according
to the rank of the matrix $A$.

\subsubsection*{Time reparametrizations}

It is common in the literature to interpret the
rescaling of the vector field $X_{\Ham_{\RR}}$  by the basic positive function
$\varphi$ as a \emph{nonlinear time reparametrization}. One
argues as follows, let  $c(t)\in \RR$ be a flow line of $X_{\Ham_{\RR}}$
(i.e. $\frac{d c}{dt}(t)= X_{\Ham_{\RR}} (c(t))$).  Introduce the new time
$\tau$ by integrating the relation
\begin{equation*}
d\tau =\frac{1}{\varphi( c (t))} \, dt.
\end{equation*}
Since $\varphi >0$, the correspondence between $t$ and $\tau$ is one-to-one
and one can express $t$ as a function of $\tau$. The curve  $\tilde
c (\tau):= c(t(\tau))$ is checked to be a flow line of $\varphi X_{\Ham_{\RR}}$ (i.e.
$\frac{d \tilde c}{d\tau }(\tau)= \varphi(\tilde c (\tau)) X_{\Ham_{\RR}} (\tilde
c(\tau))$). This interpretation of the rescaling is quite subtle.
The definition of $\tau$ depends on the particular flow line $c(t)$,
so different initial conditions induce  different
reparametrizations. It is therefore not possible to interpret the
time rescaling as a ``global" operation. This contrasts with the
natural procedure of multiplying the vector field $X_{\Ham_{\RR}}$ by the
positive function $\varphi$.

\begin{remark}
One might wonder why we only care about basic and not arbitrary functions $\varphi:\RR\to \R^+$.
A detailed answer to this question would involve a careful study of the structure of the equations of motion that
would gear us away from the main subject of this paper. We refer the reader to \cite{FedorovJovan} where one can find a  very
good discussion on the Hamiltonization of $G$-Chaplygin systems. We simply mention that
physically, the fact that $\varphi$ is basic  means that it is independent of the momentum variables and
only depends on the (reduced) configuration variables. Thus, the time reparametrization changes the speed at
which the trajectories are traversed depending on the position of the system but independently of the velocity itself.
It is shown in   \cite{FedorovJovan}  how in order to obtain Darboux coordinates for the reparametrized system, one can
keep the same (reduced) configuration variables but should rescale the   momenta by
$\frac{1}{\varphi}$. Since $\varphi$ is basic, the rescaled momenta continue to depend linearly on the velocities.
%
\end{remark}


\subsubsection*{Measure preservation}

Hamiltonization is strongly related to the existence of invariant
measures. Suppose for simplicity that  we are dealing with a $G$-Chaplygin system. As mentioned before,
in this case the bivector field $\pi_{\mbox{\tiny {red}}^B}$ is everywhere non-degenerate
and the reduced equations can be written as:
\begin{equation*}
{\bf i}_{X_{\Ham_\RR}}\Omega_{\mbox{\tiny {red}}^B} =d\Ham_\RR,
\end{equation*}
where $\Omega_{\mbox{\tiny {red}}^B}$ is the non-degenerate 2-form on $\RR$  induced by $\pi_{\mbox{\tiny {red}}^B}$.
It follows that the scaled vector field $\varphi X_{\Ham_\RR}$ satisfies
\begin{equation*}
{\bf i}_{\, \varphi X_{\Ham_\RR}}\left ( \mbox{\small $\frac{1}{\varphi}$}\Omega_{\mbox{\tiny {red}}^B}\right ) =d\Ham_\RR.
\end{equation*}
Hence, Hamiltonization in this setting amounts to finding a positive function $\varphi$ such that the
2-form $\mbox{\small $\frac{1}{\varphi}$}\Omega_{\mbox{\tiny {red}}^B}$ is closed (which under
our hypothesis,  is of course
equivalent to \eqref{E:ConfPoisson}). Suppose for a moment that this is the case so
$(\RR, \mbox{\small $\frac{1}{\varphi}$}\Omega_{\mbox{\tiny {red}}^B})$ is a symplectic manifold.
It follows from Liouville's theorem that the vector field $\varphi X_{\Ham_\RR}$ preserves the
symplectic volume
 $  \left ( \mbox{\small $\frac{1}{\varphi}$}\Omega_{\mbox{\tiny {red}}^B}  \right )^m$ where $m=\frac{1}{2}\dim \RR$.
 Therefore, the volume form $(\mbox{\small $\frac{1}{\varphi}$})^{m-1} (\Omega_{\mbox{\tiny {red}}^B})^m $
 is preserved by the vector field $X_{\Ham_\RR}$.


The above argument shows that a Hamiltonizable $G$-Chapligyn system
possesses an invariant measure. One might wonder if the reciprocal
statement is true, namely, if any $G$-Chaplygin system with an invariant
measure is Hamiltonizable. The celebrated Chaplygin's reducing
multiplier theorem \cite{Chapligyn_reducing_multiplier} demonstrates
that the answer is positive if $m=2$. For $m>2$, a
characterization of the systems for which this is true  is an open
problem. Interesting examples where this holds for arbitrary values of $m$ have been
found by Fedorov and Jovanovic in the study of the multidimensional Veselova problem \cite{FedorovJovan}.
See also the discussion in \cite{EhlersKoiller} where a candidate for the conformal factor
$\mbox{\small $\frac{1}{\varphi}$}$ is given under the hypothesis that there exists a preserved measure,
and \cite{Fernandez} where a set of coupled first order partial differential equations
for the multiplier $\mbox{\small $\frac{1}{\varphi}$}$ are given.

In the case where the  nonholonomic system is Hamiltonizable but the corresponding
bivector field $\pi_{\mbox{\tiny {red}}^B}$ is degenerate at some points in $\RR$, one can repeat the above argument
to conclude that the reduced system preserves a measure on every leaf of the symplectic foliation of $\RR$
corresponding to the Poisson bivector field $\varphi \pi_{\mbox{\tiny {red}}^B}$.
However, this does \emph{not} imply the existence of
a smooth invariant measure on $\RR$ (and this is the motivation for Fernandez, Mestdag, and Bloch
\cite{Fernandez} to talk about \emph{Poissonization}). An example of this situation is given by the reduction
of the Chaplygin sleigh (see the discussion in \cite{Naranjo2007} ) that exhibits asymptotic dynamics
that contravene the existence of a global invariant measure. The problem of the existence
of a global invariant measure in this case is most naturally attacked by considering
the \emph{modular class} of the Poisson manifold $(\RR, \varphi \pi_{\mbox{\tiny {red}}^B})$, see \cite{Weinstein}.


\section{Back to the Examples: Hamiltonization and Integrability} \label{S:Hamiltonization}

In this section, we analyze the generalized rolling systems presented in Section
\ref{S:Examples} using  the geometric framework that was developed  in the previous sections.
In subsection \ref{SS:Geom_brackets} we provide the geometric interpretation for the
brackets $\{\cdot , \cdot \}_{\mbox{\tiny Rank{\it j}}}$ and $\{\cdot , \cdot \}_{\mbox{\tiny Rank{\it j}}}'$ presented
in Section \ref{S:Examples}.
In  \ref{SS:Hamiltonization-Integrability} we consider the Hamiltonization and integrability
of generalized rolling systems in detail and finally, in \ref{Ss:TwistedMechanics} we explicitly
show that the brackets $\{\cdot , \cdot \}_{\mbox{\tiny Rank1}}$ and $\{\cdot , \cdot \}_{\mbox{\tiny Rank{2}}}'$
are twisted Poisson. To our knowledge, this is the first time that the appearance of such structures is made
explicit in the field of nonholonomic mechanics.

\subsection{The geometry of the rigid bodies with generalized rolling constraints}
\label{SS:Geom_brackets}

We begin by computing the  nonholonomic bracket  for the motion of a
rigid body subject to generalized rolling constraints as introduced
 in Section \ref{S:Examples}.

\subsubsection*{Nonholonomic bracket via the non-degenerate 2-section}

Consider again, as in Section \ref{S:Examples}, the motion of a rigid
body in space subject to a generalized rolling constraint as in (\ref{E:general_constraint}).
That is, the constraint relates the linear and the angular
velocities of the body ${\bf x}=r A \vecom=rAg\vecOm,$ where the matrix $A$
satisfies any of the conditions of Definition \ref{condA} and $\vecom, \; \vecOm$ is the angular velocity
written in space and body coordinates, respectively.

Recall that the configuration space for the system is $Q=\operatorname{SO}(3)\times \R^3$. Denote by
$\vecL$ (respectively, $\vecR$) the left (respectively, right) Maurer-Cartan form on $\operatorname{SO}(3)$.
Upon the identification of the Lie algebra $\so(3)$ with $\R^3$ by the hat map \eqref{E:hat-map} we think
of $\vecL$ and $\vecR$ as $\R^3$ valued 1-forms on $\operatorname{SO}(3)$. For a tangent vector
 $v_g\in T_g\operatorname{SO}(3)$ we have
 \begin{equation*}
\vecom=\vecR(g)(v_g), \qquad \vecOm=\vecL(g)(v_g),
\end{equation*}
where $\vecom$ (respectively, $\vecOm$) denotes the angular velocity vector written
in space (respectively, body) coordinates as discussed in Section \ref{S:Examples}.

The Maurer-Cartan forms $\vecL$ and $\vecR$ are related by $\vecL(g)=g^{-1}\vecR(g)$ and satisfy the
well-known Maurer-Cartan equations
\begin{equation*}
d\vecR=[\vecR,\vecR], \qquad d\vecL=-[\vecL,\vecL],
\end{equation*}
where $[\cdot, \cdot]$ is the commutator in the Lie algebra.
For the rest of the section we will use three dimensional vector algebra notation in our calculations with differential forms
and vector fields. In our convention, the
scalar product of differential forms should always be interpreted as a wedge product (and is thus anti-commutative!). The Maurer-Cartan equations
take the form
\begin{equation}
\label{E:Maurer-Cartan-Cross-Product}
d\vecR=\frac{1}{2} \vecR \times \vecR, \qquad d\vecL=-\frac{1}{2} \vecL \times \vecL,
\end{equation}
where ``$\times$" denotes the standard vector product in $\R^3$.

The constraint distribution $\D$, defined by the generalized rolling constraints,
can be expressed in the terminology of subsection \ref{S:Nonholonomic_Systems}
as the annihilator of the $\R^3$-valued 1-form $\vecep$ on $Q$ given by
\begin{equation*}
\vecep=d{\bf x}-rA\vecR=d{\bf x}-rAg\vecL.
\end{equation*}

We consider the (global) moving co-frame $\{\vecL, d{\bf x}  \}$ for $T^*Q$ that defines fiber coordinates $({\bf M}, {\bf p})$ in
the following sense. A co-vector $\alpha_q\in T^*_qQ$ is
written uniquely as $\alpha_q = {\bf M}\cdot \vecL + {\bf p}\cdot d{\bf x}$, for a certain $({\bf M}, {\bf p})\in \R^3\times \R^3$. The Legendre
transform $\L:TQ\to T^*Q$ associated to the kinetic energy Lagrangian \eqref{E:lag-rigidbody} is defined by the rule:
\begin{equation*}
{\bf M}=\I \vecOm, \qquad {\bf p}=m\dot {\bf x}.
\end{equation*}
Physically, ${\bf p}$ is the linear momentum of the body while ${\bf M}$ is the angular momentum of the body
about the center of mass written in body coordinates.

In order to deal with the constraints, it is more convenient to work with the global moving co-frame
 $\{\vecL, \vecep   \}$ for $T^*Q$.  We denote by $({\bf K},{\bf u})$ the fiber coordinates defined by this
 co-frame. Putting $ {\bf K}\cdot \vecL +{\bf u}\cdot \vecep = {\bf M}\cdot \vecL +{\bf p}\cdot d{\bf x}$ implies
 \begin{equation*}
{\bf K}={\bf M}+rg^TA^T{\bf p}, \qquad  {\bf u}={\bf p}.
\end{equation*}
Along the constraint submanifold $\M=\L(\D)$ we have ${\bf p}=mrAg\vecOm$ so
\begin{equation*}
{\bf K}=\I\vecOm +mr^2g^TA^TAg\vecOm,
\end{equation*}
which is the expression for the kinetic momentum obtained in \eqref{E:Kinetic_Momentum}. Notice that $\M$ is a vector
bundle over $Q$ and that
 ${\bf K}$ is a natural coordinate for the fibers of $\M$. In what follows we will use
 the components of $g, {\bf x}$, and  ${\bf K}$, as redundant coordinates on $\M$.

Denote by ${\bf X}^{\mbox{\tiny L}}=(X_1^L,X_2^L,X_3^L)$ the moving frame of $\operatorname{SO}(3)$
that is dual to $\vecL=(\lambda_1,\lambda_2,\lambda_3)$. The components of ${\bf X}^{\mbox{\tiny L}}$ are the
left invariant vector fields on $\operatorname{SO}(3)$ obtained by left extension of the canonical
basis of $\R^3$. Along the points of the constraint subbundle $\M$, the
 non-integrable distribution $\C$ defined  in
(\ref{E:C}) is given by
\begin{equation}
\label{E:basis_of_C}
\C=\mbox{span}\left \{ \, {\bf X}^{\mbox{\tiny L}} + rg^TA^T\frac{\partial}{\partial {\bf x}} \, , \, \frac{\partial}{\partial {\bf K}} \, \right \}.
\end{equation}

The canonical 2-form $\Omega_Q$ on $T^*Q$ is given by
\begin{equation*}
\begin{split}
\Omega_Q&=-d({\bf M}\cdot \vecL+ {\bf p}\cdot d{\bf x}) =-d( {\bf K}\cdot \vecL +{\bf p}\cdot \vecep) \\
&= \vecL\cdot d{\bf K} -{\bf K}\cdot d\vecL  -{\bf p}\cdot d\vecep +  \vecep \cdot d{\bf p},
\end{split}
\end{equation*}
where ``$\cdot$" denotes the usual scalar product in $\R^3$.

To compute $d\vecep$ we use the identity $dg=g\hat{\vecL}$ where $\; \hat{} \;$ denotes the hat map \eqref{E:hat-map}.
Using the Maurer-Cartan equations \eqref{E:Maurer-Cartan-Cross-Product} we get
\begin{equation*}
\begin{split}
d\vecep &=-rA(dg)\vecL -rAg\, d\vecL = -rAg (\vecL \times \vecL) -rAg\, d\vecL = rAg \, d\vecL.
\end{split}
\end{equation*}
Therefore,
\begin{equation*}
\begin{split}
\Omega_Q &= \vecL\cdot d{\bf K} -{\bf K}\cdot d\vecL  -{\bf p}\cdot (rAg \, d\vecL)  +  \vecep \cdot d{\bf p} \\
&= \vecL\cdot d{\bf K} -( {\bf K}+ rg^TA^T{\bf p}) \cdot d\vecL  +  \vecep \cdot d{\bf p}.
\end{split}
\end{equation*}

Let $\iota:\M \hookrightarrow T^*Q$ denote the inclusion. Since ${\bf p}=mrAg\vecOm$ along $\M$, we have
\begin{equation*}
\begin{split}
\Omega_\M:=\iota^*(\Omega_Q) &= \vecL\cdot d{\bf K} -( {\bf K}+
mr^2g^TA^TAg\vecOm) \cdot d\vecL  +  \iota^*(\vecep \cdot d{\bf p}).
\end{split}
\end{equation*}
Since $\vecep$ vanishes along the non-integrable distribution $\C$,
we get the following expression for the restriction $\Omega_\C$ of
$\iota^*(\Omega_Q)$ to $\C$:
 \begin{equation*}
\begin{split}
\Omega_\C &= \vecL\cdot d{\bf K} -( {\bf K}+ mr^2g^TA^TAg\vecOm) \cdot d\vecL .
\end{split}
\end{equation*}
With this expression for $\Omega_\C$ we are ready to show:

\begin{proposition}
\label{E:nonho-bracket-geometric}
The nonholonomic bracket $\{\cdot , \cdot \}_{\mbox{\emph{\tiny nh}}}$ on $\M$ for the generalized rolling system
is given in the redundant coordinates $\{ g_{ij},x_k, K_l \}$, $\, i,j,k,l=1,2,3$,  for $\M$ by
\begin{equation*}
\begin{split}
 \{x_i , K_l \}_{\mbox{\emph{\tiny nh}}}=r(Ag)_{il}, \qquad \{g_{ij} , K_l \}_{\mbox{\emph{\tiny nh}}} =-\varepsilon_{jl}^k\, g_{ik},
\qquad \{K_i , K_j \}_{\mbox{\emph{\tiny nh}}} =-\varepsilon_{ij}^l({\bf K}+mr^2(g^TA^TAg\vecOm))_l \, ,
\end{split}
\end{equation*}
with all other combinations equal to zero.
In the above formulas the Einstein convention of sum over repeated indices holds and $ \varepsilon_{ij}^k$ denotes
the alternating tensor, that equals $0$ if two indices are equal, it equals $1$ if $(i,j,k)$ is a cyclic permutation of $(1,2,3)$, and
it is equal to $-1$ otherwise. The entries of $(g,{\bf x})\in Q$ are denoted by $(g_{ij}, x_k)$ and those of ${\bf K}$ by $K_l$.
\end{proposition}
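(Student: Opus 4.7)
The plan is to compute the bracket directly from the characterization $\mathbf{i}_{X_f}\Omega_\C=(df)_\C$ of the almost Hamiltonian vector fields, using the explicit expression for $\Omega_\C$ obtained above the statement together with the frame for $\C$ in \eqref{E:basis_of_C}. I will work in the (global) frame $\{Y_1,Y_2,Y_3,Z_1,Z_2,Z_3\}$ of $\C$, where $Y_l:=X^L_l+r(g^TA^T)_{lj}\partial_{x_j}$ and $Z_l:=\partial_{K_l}$. The bracket then falls out of the formula $\{f,g\}_{\mbox{\tiny nh}}=X_g(f)=df(X_g)$ derived from the sign conventions in Definition~\ref{D:Basics} and \eqref{E:DefBracket}.

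First I would compute the components of $\Omega_\C=\vecL\cdot d{\bf K}-\boldsymbol\eta\cdot d\vecL$ (with $\boldsymbol\eta:={\bf K}+mr^2g^TA^TAg\vecOm$) in this frame. The evaluations $\vecL(Y_l)=\mathbf e_l$, $\vecL(Z_l)=0$, $dK_l(Y_m)=0$, $dK_l(Z_m)=\delta_{lm}$ are immediate, and the Maurer--Cartan equation $d\vecL=-\tfrac12\vecL\times\vecL$ from \eqref{E:Maurer-Cartan-Cross-Product} gives $d\lambda_a(Y_i,Y_j)=-\varepsilon^a_{ij}$ and $d\lambda_a(Y_i,Z_j)=d\lambda_a(Z_i,Z_j)=0$. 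This yields the matrix of $\Omega_\C$ in block form
\begin{equation*}
[\Omega_\C]=\begin{pmatrix} \varepsilon^a_{ij}\eta_a & \delta_{ij}\\ -\delta_{ij} & 0\end{pmatrix},
\end{equation*}
which is visibly non-degenerate, as guaranteed by the general theory.

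Next, for each coordinate function $f\in\{g_{ij},x_i,K_l\}$, I would find the unique $X_f\in\Gamma(\C)$ by writing $X_f=a^lY_l+b^lZ_l$ and solving the two linear systems obtained by pairing $\mathbf i_{X_f}\Omega_\C$ against $Y_m$ and against $Z_m$. The right-hand sides involve the restriction of $df$ to $\C$: $Y_l(x_i)=r(Ag)_{il}$, $Z_l(x_i)=0$, $Y_l(g_{ij})=X^L_l(g_{ij})$, $Y_l(K_m)=0$, $Z_l(K_m)=\delta_{lm}$. A short calculation using $dg=g\hat{\vecL}$ and the identification \eqref{E:hat-map} gives $X^L_l(g_{ij})=-\varepsilon^k_{jl}g_{ik}$. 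Solving the linear systems yields in particular $X_{K_l}=Y_l+\varepsilon^k_{la}\eta_a Z_k$, $X_{x_k}=-r(Ag)_{km}Z_m$, and $X_{g_{kl}}=\varepsilon^n_{lm}g_{kn}Z_m$.

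Finally, evaluating $\{f,g\}_{\mbox{\tiny nh}}=X_g(f)$ on the three relevant pairs reduces to bookkeeping. For instance, $\{K_i,K_j\}_{\mbox{\tiny nh}}=X_{K_j}(K_i)=\varepsilon^k_{ja}\eta_a\delta_{ki}=\varepsilon^i_{ja}\eta_a=-\varepsilon^a_{ij}\eta_a$, and $\{x_i,K_l\}_{\mbox{\tiny nh}}=X_{K_l}(x_i)=r(g^TA^T)_{li}=r(Ag)_{il}$, while $\{g_{ij},K_l\}_{\mbox{\tiny nh}}=X^L_l(g_{ij})=-\varepsilon^k_{jl}g_{ik}$. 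The remaining brackets (between pairs of $g_{ij}$'s, pairs of $x_k$'s, or one $g_{ij}$ and one $x_k$) vanish because the corresponding Hamiltonian vector fields lie in $\operatorname{span}\{Z_m\}$, which annihilates $g_{ij}$ and $x_k$. The only conceptual step that requires care is the derivation of the formula for $X^L_l(g_{ij})$; the rest is essentially the inversion of a structured $6\times 6$ matrix and should be treated as the main bookkeeping item but not as a genuine obstacle.
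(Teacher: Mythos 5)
Your proposal is correct in substance and follows essentially the same route as the paper: both characterize $X_f$ by $\mathbf{i}_{X_f}\Omega_\C=(df)_\C$, contract $\Omega_\C$ against the frame of $\C$ from \eqref{E:basis_of_C}, and read off the brackets of the coordinate functions (the paper phrases this as computing $\pi^\sharp_{\mbox{\tiny nh}}$ on the coframe $\{\vecL,d{\bf x},d{\bf K}\}$ and handles the $g_{ij}$ via $d(g^{-1}{\bf e}_i)=(g^{-1}{\bf e}_i)\times\vecL$, but it is the same computation). One caveat: your intermediate formula $X_{K_l}=Y_l+\varepsilon^k_{la}\eta_aZ_k$ has the wrong sign on the $Z$-part; solving your own linear system with $[\Omega_\C]$ as displayed gives $b^k=\varepsilon^a_{lk}\eta_a=-\varepsilon^k_{la}\eta_a$, i.e.\ $X_{K_l}=Y_l-(\boldsymbol{\eta}\times\partial/\partial{\bf K})_l$ as in the paper. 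This slip is then compensated by the incorrect identity $\varepsilon^i_{ja}=-\varepsilon^a_{ij}$ used in your last display: in fact $\varepsilon^i_{ja}=\varepsilon_{jai}=\varepsilon_{ija}=+\varepsilon^a_{ij}$, since $(j,a,i)$ is a cyclic permutation of $(i,j,a)$. The two sign errors cancel, so the final formula for $\{K_i,K_j\}_{\mbox{\tiny nh}}$ — and all the other stated brackets, which do not see the $Z$-part of $X_{K_l}$ — come out correct; the rest of the computation ($X_{x_k}$, $X_{g_{kl}}$, the identity $X^L_l(g_{ij})=-\varepsilon^k_{jl}g_{ik}$, and the vanishing brackets) checks out.
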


\begin{proof}
We will rely on the identity \eqref{E:DefBracket} that characterizes the associated bivector field $\pi_{\mbox{\tiny nh}}$ in
terms of  $\Omega_\C$. Contraction of $\Omega_\C$ by the elements in the basis of $\C$ given in \eqref{E:basis_of_C}
gives
\begin{equation*}
{\bf i}_{({\bf X}^{\mbox{\tiny L}} + rg^TA^T\frac{\partial}{\partial {\bf x}})}\Omega_\C =d{\bf K}-({\bf K}+ mr^2g^TA^TAg\vecOm)\times \vecL \, ,
\qquad {\bf i}_{\frac{\partial}{\partial {\bf K}}}\Omega_\C=-\vecL,
\end{equation*}
where we have again made use of the Maurer-Cartan equations \eqref{E:Maurer-Cartan-Cross-Product}. It follows that
\begin{equation*}
{\bf i}_{({\bf X}^{\mbox{\tiny L}} + rg^TA^T\frac{\partial}{\partial {\bf x}} -({\bf K}+ mr^2g^TA^TAg\vecOm)\times  \frac{\partial}{\partial {\bf K}})}\Omega_\C =d{\bf K}, \qquad {\bf i}_{-rAg \frac{\partial}{\partial {\bf K}}}\Omega_\C=rAg\vecL=d{\bf x}|_\C.
\end{equation*}
Therefore, according to \eqref{E:DefBracket} we get
\begin{equation}
\label{E:piflat_nonho_bracket}
\begin{split}
\pi_{\mbox{\tiny nh}}^\sharp (d{\bf K})&=-{\bf X}^{\mbox{\tiny L}} - rg^TA^T\frac{\partial}{\partial {\bf x}} +({\bf K}+ mr^2g^TA^TAg\vecOm)\times
 \frac{\partial}{\partial {\bf K}}, \\  &\pi_{\mbox{\tiny nh}}^\sharp (\vecL)= \frac{\partial}{\partial {\bf K}}, \qquad
 \pi_{\mbox{\tiny nh}}^\sharp (d{\bf x}) = rAg\frac{\partial}{\partial {\bf K}}\, .
\end{split}
\end{equation}
In addition, for any canonical vector ${\bf e}_i\in \R^3$ we have
$d(g^{-1}{\bf e}_i)=(g^{-1}{\bf e}_i)\times \vecL$, so
\begin{equation*}
 \pi_{\mbox{\tiny nh}}^\sharp (d(g^{-1}{\bf e}_i))=(g^{-1}{\bf e}_i)\times \frac{\partial}{\partial {\bf K}}.
\end{equation*}
The proof follows using the above formulas and  recalling that for
any $f,g \in C^\infty(\M)$, we have $\{f,g \}_{\mbox{\tiny nh}}=- df(\pi^\sharp_{\mbox{\tiny nh}}(dg) )$.

\end{proof}

Finally, we state without a formal proof that the nonholonomic vector field $\Xnh$ on $\M$ is given by
\begin{equation}
\label{E:Xnh}
\Xnh=\vecOm \cdot {\bf X}^{\mbox{\tiny L}} +r Ag\vecOm \cdot \frac{\partial}{\partial {\bf x}} + ({\bf K}\times \vecOm)\cdot \frac{\partial}{\partial {\bf K}}.
\end{equation}
The above expression can be shown by  taking into account the equations \eqref{E:Motion3}, the constraint
\eqref{E:general_constraint}, and the definition of $\vecOm$, or, alternatively, by computing the the
almost Hamiltonian vector field $X_{\Ham_\M}=-\pi_{\mbox{\tiny nh}}^\sharp (d\Ham_\M)$
corresponding to the Hamiltonian $\Ham_\M$ that coincides with expression
 \eqref{E:Ham}. The latter approach requires one to write $\vecOm$ in terms of ${\bf K}$ and $g$ as
was done in Section \ref{S:Examples} for the different values of the rank of $A$.

\subsubsection*{The gauge transformation of the nonholonomic bracket}

We will now construct a gauge transformation of the nonholonomic
bracket in the sense of subsection \ref{SS:GaugeTransf}. We are
interested in describing the same dynamics so we look for a 2-form
$B$ that defines a dynamical gauge transformation as introduced in
Definition \ref{D:dynamicalGauge}. In our case, the distinguished
Hamiltonian is $\Ham_\M$ that has $\Xnh$ as its associated almost
Hamiltonian vector field.

Following \cite{Naranjo2008}, we consider the bi-invariant
volume form $\nu$ on $\SO (3)$ oriented and scaled such that $\nu(X^L_1, X^L_2, X^L_3)=1$.
We
consider the natural extension of $\nu$ as a 3-form on $Q=\SO(3) \times \R^3$. Denote by
$\bar \nu \in \Omega^3(T^*Q)$ the 3-form given by $\bar \nu =
\upsilon^*\nu$ where $\upsilon: T^*Q \rightarrow Q$ is the canonical projection. We can write
$\bar \nu =\frac{1}{6} \vecL \cdot (\vecL\times \vecL)$.

Let $B$ be the 2-form on $\M$ given by
$$
B= mr^2 ({\bf i}_{\Xnh} \iota^* \bar \nu),
$$
where, as before,  $\iota: \M \hookrightarrow T^*Q$
is the inclusion. Note that $B$ is a semi-basic 2-form on $\M$ that vanishes
upon contraction with the nonholonomic (almost) Hamiltonian vector field $X_{\mbox{\tiny nh}}$.
Therefore, by Proposition
\ref{C:NHGaugeSemibasic}, we can perform a dynamical gauge
transformation of the nonholonomic bivector field $\pi_{\mbox{\tiny nh}}$
by the 2-form $B$ to obtain another  bivector field $\pi_{\mbox{\tiny nh}}^B$ that also describes  the dynamics
of our problem.

Using the Maurer-Cartan equations \eqref{E:Maurer-Cartan-Cross-Product}  and the expression \eqref{E:Xnh} for
$\Xnh$, we obtain
\begin{equation}
\label{Ex:2formB}
B= -mr^2 \vecOm\cdot d\vecL.
\end{equation}
To compute the bivector field $\pi_{\mbox{\tiny nh}}^B$ associated to the gauge transformation we
use equation \eqref{E:condition-gauge}. For an arbitrary one-form $\alpha$ on $\M$ we have
\begin{equation*}
(\pi_{\mbox{\tiny nh}}^B)^\sharp \left ( \alpha + {\bf i}_{\pi_{\mbox{\tiny nh}}^\sharp(\alpha)} B \right ) =
\pi_{\mbox{\tiny nh}}^\sharp(\alpha).
\end{equation*}
Setting $\alpha$ equal to $\vecL$ and $d{\bf x}$ and using \eqref{E:piflat_nonho_bracket} and \eqref{Ex:2formB}
we obtain
\begin{equation*}
(\pi_{\mbox{\tiny nh}}^B)^\sharp(\vecL)=\frac{\partial}{\partial {\bf K}}, \qquad
(\pi_{\mbox{\tiny nh}}^B)^\sharp(d{\bf x})=rAg\frac{\partial}{\partial {\bf K}}.
\end{equation*}
Similarly, putting $\alpha=d{\bf K}$ and noticing that
\begin{equation*}
 {\bf i}_{\pi_{\mbox{\tiny nh}}^\sharp(d{\bf K})} B=  -  {\bf i}_{ {\bf X}^{\mbox{\tiny L}} } B = mr^2 \vecOm \times \vecL,
\end{equation*}
we deduce
\begin{equation*}
(\pi_{\mbox{\tiny nh}}^B)^\sharp(d{\bf K})= -{\bf X}^{\mbox{\tiny L}} - rg^TA^T\frac{\partial}{\partial {\bf x}}
+({\bf K}+ mr^2(g^TA^TAg-E)\vecOm)\times  \frac{\partial}{\partial {\bf K}},
\end{equation*}
where $E$ denotes the $3\times 3$ identity matrix.

The above formulas imply
\begin{proposition}
\label{E:gauge-bracket-geometric}
The gauged nonholonomic bracket $\{\cdot , \cdot \}^B_{\mbox{\emph{\tiny nh}}}$ on $\M$,
associated to the bivector field $\pi_{\mbox{\emph{\tiny nh}}}^B$,
is given in the redundant coordinates $( g_{ij},x_k, K_l )$, $\, i,j,k,l=1,2,3$,  for $\M$ by
\begin{equation*}
\begin{split}
 \{x_i , K_l \}^B_{\mbox{\emph{\tiny nh}}}=r(Ag)_{il}, \qquad \{g_{ij} , K_l \}_{\mbox{\emph{\tiny nh}}}^B =-\varepsilon_{jl}^k\, g_{ik},
\qquad \{K_i , K_j \}^B_{\mbox{\emph{\tiny nh}}} =-\varepsilon_{ij}^l({\bf K}+mr^2(g^TA^TAg-E)\vecOm )_l \, ,
\end{split}
\end{equation*}
with all other combinations equal to zero.
\end{proposition}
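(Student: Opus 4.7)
The plan is to proceed in complete parallel with the proof of Proposition \ref{E:nonho-bracket-geometric}, since all of the geometric content has already been assembled in the paragraphs immediately preceding the statement. Specifically, the three formulas
\begin{equation*}
(\pi_{\mbox{\tiny nh}}^B)^\sharp (\vecL), \qquad (\pi_{\mbox{\tiny nh}}^B)^\sharp (d{\bf x}), \qquad (\pi_{\mbox{\tiny nh}}^B)^\sharp (d{\bf K})
\end{equation*}
have been derived by using the defining relation \eqref{E:condition-gauge} for the gauged bivector together with the explicit expression $B=-mr^2\,\vecOm\cdot d\vecL$ from \eqref{Ex:2formB}. Thus it only remains to evaluate $\{f,g\}^B_{\mbox{\tiny nh}}=-df((\pi_{\mbox{\tiny nh}}^B)^\sharp(dg))$ on the coordinate functions $g_{ij}$, $x_k$, $K_l$.

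The first step is to express the differentials of the coordinate functions in the global moving coframe $\{\vecL, d{\bf x},d{\bf K}\}$. For $d{\bf x}$ and $d{\bf K}$ this is immediate. For $dg_{ij}$, I would use the identity $dg=g\hat{\vecL}$ together with the hat-map convention $(\hat{\boldsymbol\eta})_{kj}=-\varepsilon_{kjl}\eta_l$, which yields
\begin{equation*}
dg_{ij}\;=\;g_{ik}(\hat{\vecL})_{kj}\;=\;-\,g_{ik}\,\varepsilon_{kjl}\,\lambda_l,
\end{equation*}
so that the bracket of $g_{ij}$ with any function reduces to evaluating a left-invariant vector field on $g_{ij}$ via $X^L_l(g_{ij})=-g_{ik}\varepsilon_{kjl}$.

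The second step is the actual contraction. Reading off from the formula for $(\pi_{\mbox{\tiny nh}}^B)^\sharp(d{\bf K})$, the $\partial/\partial{\bf x}$ component produces $\{x_i,K_l\}^B_{\mbox{\tiny nh}}=r(Ag)_{il}$, the $-{\bf X}^L$ component combined with the identity above gives $\{g_{ij},K_l\}^B_{\mbox{\tiny nh}}=-\varepsilon_{jl}^{k}g_{ik}$, and the cross-product term $({\bf K}+mr^2(g^TA^TAg-E)\vecOm)\times\partial/\partial{\bf K}$ yields the stated formula for $\{K_i,K_j\}^B_{\mbox{\tiny nh}}$ after using the cyclic property of $\varepsilon_{ij}^k$. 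The remaining brackets $\{x_i,x_k\}^B$, $\{x_i,g_{jk}\}^B$, $\{g_{ij},g_{kl}\}^B$ are all zero because $(\pi_{\mbox{\tiny nh}}^B)^\sharp(d{\bf x})$ and $(\pi_{\mbox{\tiny nh}}^B)^\sharp(\vecL)$ are proportional to $\partial/\partial{\bf K}$, which annihilates $d{\bf x}$ and $\vecL$.

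Since all the conceptual work has been done upstream, there is no real obstacle: the proof is a bookkeeping exercise with indices and sign conventions. The only place where one has to be slightly careful is in the passage from the vector-valued formula for $(\pi_{\mbox{\tiny nh}}^B)^\sharp(d{\bf K})$ to its component-wise action on $dK_l$, ensuring that the transpose in $rg^TA^T\partial/\partial{\bf x}$ is tracked correctly so that $-dx_i((\pi_{\mbox{\tiny nh}}^B)^\sharp(dK_l))=r(Ag)_{il}$ and not $r(Ag)_{li}$. Comparing with the analogous derivation for $\pi_{\mbox{\tiny nh}}$ in Proposition \ref{E:nonho-bracket-geometric}, the sole difference is the replacement of the vector ${\bf K}+mr^2 g^TA^TAg\vecOm$ by ${\bf K}+mr^2(g^TA^TAg-E)\vecOm$ in the ${\bf K}\times{\bf K}$ bracket, which is precisely the effect of subtracting ${\bf i}_X B$ with $B=-mr^2\vecOm\cdot d\vecL$.
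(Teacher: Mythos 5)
Your proposal is correct and is essentially the paper's own argument: the paper simply states that the formulas for $(\pi_{\mbox{\tiny nh}}^B)^\sharp(\vecL)$, $(\pi_{\mbox{\tiny nh}}^B)^\sharp(d{\bf x})$ and $(\pi_{\mbox{\tiny nh}}^B)^\sharp(d{\bf K})$ derived just above "imply" the proposition, exactly as in the proof of Proposition \ref{E:nonho-bracket-geometric}, which is the bookkeeping you carry out. Your index checks (in particular the transpose in $rg^TA^T\partial/\partial{\bf x}$ and the relation $dg=g\hat{\vecL}$) are consistent with the paper's conventions, so nothing is missing.
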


\subsubsection*{Reduction  of the symmetries}

Recall that the Lie group $H$, introduced in section \ref{SSS:reduction-example}, acts on the
configuration space $Q$ and that its lift to $TQ$ leaves both the Lagrangian and the
constraints invariant. From the discussion in section \ref{SS:Reduction-general} (and the regularity
of the action) it follows that the  reduced space $\RR:=\M/H$ is equipped with a reduced bracket
$\{\cdot , \cdot \}_{\mbox{\tiny red}}$ determined by
condition \eqref{E:NHBracketReduction}  and that describes the reduced dynamics.

We are now ready to give the geometric interpretation of the bracket $\{\cdot,\cdot\}_{\mbox{\tiny Rank}j}$
introduced in section
\ref{S:Examples} for the different values of the rank of $A$.

\begin{theorem}
\label{T:red_bracket}
The reduced bracket $\{\cdot , \cdot \}_{\mbox{\tiny \emph{red}}}$ on $\RR$
is precisely the restriction of the bracket $\{\cdot,\cdot\}_{\mbox{\tiny \emph{Rank}}j}$
(defined in section \ref{S:Examples}) to the Casimir
level set $||\vecgamma||=1$, for the different values $j=0,1,2,3,$ of the rank of $A$.
\end{theorem}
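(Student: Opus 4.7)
The plan is to compute the reduced bracket $\{\cdot,\cdot\}_{\mbox{\tiny red}}$ on $\RR$ using its characterization \eqref{E:NHBracketReduction} directly from the explicit formulas for $\{\cdot,\cdot\}_{\mbox{\tiny nh}}$ in Proposition \ref{E:nonho-bracket-geometric}, expressed in the coordinates $(\vecgamma,{\bf K})$ that parametrize $\RR \simeq \operatorname{S}^2 \times \R^3$, and then compare with the brackets $\{\cdot,\cdot\}_{\mbox{\tiny Rank{\it j}}}$ given in Section \ref{S:Examples}. First I would verify that $\gamma_i = g_{3i}$ (which follows from $\vecgamma = g^{-1}{\bf e}_3 = g^T{\bf e}_3$ and $h{\bf e}_3 = {\bf e}_3$) together with the $K_l$'s form an $H$-invariant set of redundant coordinates on $\RR$, subject only to the Casimir constraint $\|\vecgamma\|^2 = 1$.

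Next, the three classes of brackets between the coordinate functions must be assembled. The bracket $\{\gamma_i,\gamma_j\}_{\mbox{\tiny nh}}$ vanishes because Proposition \ref{E:nonho-bracket-geometric} lists no nonzero bracket among the $g_{ij}$'s. The mixed bracket $\{\gamma_i,K_l\}_{\mbox{\tiny nh}} = -\varepsilon_{il}^{\,k}\gamma_k$ is read off directly from the formula $\{g_{ij},K_l\}_{\mbox{\tiny nh}} = -\varepsilon_{jl}^{\,k} g_{ik}$ by setting $i=3$. Rewritten in three-vector form, it matches exactly the $-\vecgamma \cdot (\partial f/\partial {\bf K} \times \partial g/\partial \vecgamma - \partial g/\partial {\bf K} \times \partial f/\partial \vecgamma)$ term that is common to all four brackets \eqref{E:Brackets-Rank3}, \eqref{E:Brackets-Rank2}, \eqref{E:Brackets-Rank1} and \eqref{E:Brackets-Rank0}. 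Together with the observation that $\mathcal{H}_\RR$ and $\Ham_\M$ are related by pullback, this already handles the ``kinematic'' part of the comparison.

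The crux is to show that the ${\bf K}$-${\bf K}$ bracket $\{K_i,K_j\}_{\mbox{\tiny nh}} = -\varepsilon_{ij}^{\,l}({\bf K} + m r^2\, g^T A^T A g\,\vecOm)_l$ is basic (i.e.\ descends to $\RR$) and reproduces the case-dependent formula. Using $g^{-1} = g^T$ and the four possibilities of Definition \ref{condA}, one checks
\begin{equation*}
g^T A^T A g\,\vecOm =
\begin{cases}
\vecOm & \text{if } \operatorname{rank} A = 3,\\
\vecOm - (\vecgamma\cdot\vecOm)\vecgamma & \text{if } \operatorname{rank} A = 2,\\
(\vecgamma\cdot\vecOm)\vecgamma & \text{if } \operatorname{rank} A = 1,\\
0 & \text{if } \operatorname{rank} A = 0,
\end{cases}
\end{equation*}
each of which depends only on $(\vecgamma,\vecOm)$. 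Since the expressions for $\vecOm$ in terms of $(\vecgamma,{\bf K})$ worked out in Section \ref{S:Examples} coincide with what appears inside the parentheses of each $\{\cdot,\cdot\}_{\mbox{\tiny Rank{\it j}}}$, a direct term-by-term comparison completes the identification on the Casimir level $\|\vecgamma\|^2 = 1$.

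The main obstacle is bookkeeping rather than conceptual: one must carefully identify $\partial/\partial\gamma_i$-derivatives on $\RR$ with the action of the left-invariant vector fields $X^{\mathrm L}_k$ on $g_{3i}$, verify that the resulting formulas are compatible with the redundant parametrization $\vecgamma\in \mathrm{S}^2$ (i.e.\ are tangent to $\|\vecgamma\|^2 = 1$), and keep track of the alternating tensor $\varepsilon_{ij}^{\,k}$ versus the wedge/cross-product notation used in Section \ref{S:Examples}. Once these identifications are made, no further input is required.
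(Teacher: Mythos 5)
Your proposal is correct and follows essentially the same route as the paper: identify $\RR\simeq \operatorname{S}^2\times\R^3$ with redundant coordinates $(\vecgamma,{\bf K})$, compute the reduced brackets of the coordinate functions from Proposition \ref{E:nonho-bracket-geometric} via \eqref{E:NHBracketReduction} (using $\gamma_i=g_{3i}$), and then specialize $g^TA^TAg\,\vecOm$ according to the rank of $A$ in Definition \ref{condA}, extending to arbitrary functions by the Leibniz rule. Your case-by-case evaluation of $g^TA^TAg\,\vecOm$ is exactly the verification the paper leaves implicit, and it checks out against \eqref{E:Brackets-Rank3}--\eqref{E:Brackets-Rank0}.
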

\begin{proof}
Recall from section \ref{SSS:reduction-example} that the reduced space $\RR$
can be identified with $\operatorname{S}^2\times \R^3$ with redundant coordinates $(\vecgamma, {\bf K})$.
Therefore,  it makes sense to compare the two brackets  on the Casimir
level set $||\vecgamma||=1$ of the space $(\vecgamma, {\bf K})\in \R^3\times \R^3$.

 Moreover, from the expression of the projection $\rho: \M\to \RR$ given by \eqref{E:orbit-projection},
 and condition \eqref{E:NHBracketReduction}, it
 follows that the reduced bracket of the (redundant) coordinate functions $(\vecgamma, {\bf K})$
 can be computed using the formulas obtained in Proposition \ref{E:nonho-bracket-geometric} (notice
 that $\vecgamma =(g_{31}, g_{32}, g_{33})$).

 The proof is completed by considering the particular form of $A$ for the different values
 of its rank given in Definition \ref{condA}, and by writing the bracket $\{f_1 , f_2 \}_{\mbox{\tiny {red}}}$
 of arbitrary functions $f_1, f_2\in C^\infty (\RR)$ in terms of the derivatives $\frac{\partial f_i}{\partial \vecgamma}$
 and $\frac{\partial f_i}{\partial {\bf K}}$ using Leibniz rule.
\end{proof}

We now turn to the study of the reduction of the gauged nonholonomic bracket
$\{\cdot , \cdot \}^B_{\mbox{{\tiny nh}}}$. First of all notice that the 2-form $B$ that
defines the gauge transformation is written in \eqref{Ex:2formB} in terms of left invariant objects
on $\operatorname{SO}(3)$. Since the symmetry group $H$ acts by left multiplication on
the $\operatorname{SO}(3)$ factor of $Q$, it follows that $B$ is invariant under the cotangent lifted
action. Therefore, in accordance with Proposition \ref{P:reduction-gauge}, the gauged bracket
$\{\cdot , \cdot \}^B_{\mbox{{\tiny nh}}}$ drops to $\RR$ where it defines the bracket
$\{\cdot , \cdot \}_{\mbox{{\tiny red}}^B}$ that determines the dynamics. As usual, the corresponding
bivector field on $\RR$ will be denoted by $\pi_{\mbox{{\tiny red}}^B}$.

In analogy with Theorem \ref{T:red_bracket} we have
\begin{theorem}
\label{T:red_bracket_gauge}
The reduced bracket $\{\cdot , \cdot \}_{\mbox{\tiny \emph{red}}^B}$ on $\RR$
is precisely the restriction of the bracket $\{\cdot,\cdot\}'_{\mbox{\tiny \emph{Rank}}j}$
(defined in section \ref{S:Examples}) to the Casimir
level set $||\vecgamma||=1$, for the different values $j=0,1,2,3,$ of the rank of $A$.
\end{theorem}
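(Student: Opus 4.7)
The plan is to mirror the argument used for Theorem \ref{T:red_bracket}, replacing the nonholonomic bracket by its dynamical gauge transform. First I would verify that the 2-form $B=-mr^2\,\vecom\cdot d\vecL$ from \eqref{Ex:2formB} is invariant under the lifted action of $H$ on $\M$. This is immediate from two observations: the Maurer--Cartan form $\vecL$ is left invariant on $\operatorname{SO}(3)$, and the body angular velocity $\vecom$ is unchanged by left translations. With $B$ invariant and $\pi_{\mbox{\tiny nh}}$ invariant under $H$, Proposition~\ref{P:reduction-gauge} guarantees that the gauged bracket $\{\cdot,\cdot\}^B_{\mbox{\tiny nh}}$ drops to $\RR$, yielding a well-defined bracket $\{\cdot,\cdot\}_{\mbox{\tiny red}^B}$ that describes the reduced dynamics.

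Next, using the identification $\RR\cong \operatorname{S}^2\times\R^3$ with redundant coordinates $(\vecgamma,\mathbf{K})$ and the orbit projection $\rho(g,\mathbf{x},\mathbf{K})=(g^{-1}\mathbf{e}_3,\mathbf{K})$ (so that $\vecgamma=(g_{31},g_{32},g_{33})$), I would compute the reduced brackets of the coordinate functions $(\vecgamma,\mathbf{K})$ from the explicit formulas for $\{\cdot,\cdot\}^B_{\mbox{\tiny nh}}$ in Proposition~\ref{E:gauge-bracket-geometric}. The brackets among $\gamma_i$'s vanish, the mixed brackets $\{\gamma_i,K_l\}_{\mbox{\tiny red}^B}$ give the expected $-\varepsilon^k_{jl}\gamma_k$-type terms, and the key entries are the $\{K_i,K_j\}_{\mbox{\tiny red}^B}$ components, which according to Proposition~\ref{E:gauge-bracket-geometric} equal
\begin{equation*}
\{K_i,K_j\}_{\mbox{\tiny red}^B}=-\varepsilon_{ij}^{\,l}\bigl(\mathbf{K}+mr^2(g^TA^TAg-E)\vecom\bigr)_l.
\end{equation*}

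The core of the proof is a case-by-case identification of the term $g^TA^TAg\,\vecom$ in terms of the reduced variables $(\vecgamma,\mathbf{K})$ for each admissible form of $A$ in Definition~\ref{condA}. Using $hA=Ah$ for $h\in H$ together with the explicit shape of $A$, a short calculation yields: for $\operatorname{rank}A=3$, $A^TA=E$ so $(g^TA^TAg-E)\vecom=0$; for $\operatorname{rank}A=2$, $A^TA=E-\mathbf{e}_3\mathbf{e}_3^T$ so $(g^TA^TAg-E)\vecom=-(\vecom\cdot\vecgamma)\vecgamma$; for $\operatorname{rank}A=1$, $A^TA=\mathbf{e}_3\mathbf{e}_3^T$ so $(g^TA^TAg-E)\vecom=(\vecom\cdot\vecgamma)\vecgamma-\vecom$; and for $\operatorname{rank}A=0$, $(g^TA^TAg-E)\vecom=-\vecom$. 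In each instance the resulting expression for $\{K_i,K_j\}_{\mbox{\tiny red}^B}$ matches exactly the $\mathbf{K}$-coefficient of the primed bracket $\{\cdot,\cdot\}'_{\mbox{\tiny Rank}j}$ written in \eqref{E:Brackets-Rank3}, \eqref{E:Brackets-Rank2}, \eqref{E:Brackets-Rank1}, \eqref{E:Brackets-Rank0}.

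Finally, by the Leibniz rule the bracket of two arbitrary functions $f_1,f_2\in C^\infty(\RR)$ is recovered from the brackets of the coordinates $(\vecgamma,\mathbf{K})$ in terms of the partial derivatives $\partial f_i/\partial\vecgamma$ and $\partial f_i/\partial\mathbf{K}$. Comparing this with the formulas for $\{\cdot,\cdot\}'_{\mbox{\tiny Rank}j}$ and restricting to the Casimir level set $\|\vecgamma\|=1$ (which is automatic since $\vecgamma=g^{-1}\mathbf{e}_3$ lies in $\operatorname{S}^2$) yields the asserted equality. The main obstacle is purely bookkeeping—tracking the four different algebraic identities for $g^TA^TAg$ and verifying that the $\varepsilon^l_{ij}$-structure matches the vector-product structure in \eqref{E:Brackets-Rank3}--\eqref{E:Brackets-Rank0}—but no new geometric ingredient beyond Proposition~\ref{P:reduction-gauge} and Proposition~\ref{E:gauge-bracket-geometric} is needed.
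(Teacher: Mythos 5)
Your proposal is correct and follows essentially the same route as the paper, which simply runs the proof of Theorem \ref{T:red_bracket} again using the formulas of Proposition \ref{E:gauge-bracket-geometric} in place of those of Proposition \ref{E:nonho-bracket-geometric}; your case-by-case evaluation of $g^TA^TAg-E$ reproduces the primed brackets exactly. The only quibble is notational: the 2-form in \eqref{Ex:2formB} and the $\{K_i,K_j\}$ formulas involve the body angular velocity $\vecOm$ (which is indeed the left-invariant object), not the spatial angular velocity $\vecom$ that you wrote.
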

The proof is identical to that of Theorem \ref{T:red_bracket} except that one uses the formulas
obtained in Proposition \ref{E:gauge-bracket-geometric}.

According to the discussion in Section \ref{S:Examples}, the following table summarizes the properties
of the reduced bivector fields $\pi_{\mbox{{\tiny red}}}$ and $\pi_{\mbox{{\tiny red}}^B}$ for a generalized
rolling system according to the different
values of the rank of $A$:
%

\begin{small}
\begin{center}
\begin{equation}
\label{E:Table-brackets}
  \begin{tabular}{|c|c|c|c|c| } \hline
 Rank of $A$ & 0& 1 & 2 &3 \\
    \hline
    $\pi_{\mbox{{\tiny red}}}$ &
Poisson & Conformally  Poisson
& $\begin{array}{c} \mbox{Non-integrable} \\ \mbox{characteristic } \\ \mbox{distribution} \end{array}$ &
$\begin{array}{c} \mbox{Non-integrable} \\ \mbox{characteristic } \\ \mbox{distribution} \end{array}$
 \\  \hline   $\pi_{\mbox{{\tiny red}}^B}$ &
$\begin{array}{c} \mbox{Non-integrable} \\ \mbox{characteristic } \\ \mbox{distribution} \end{array}$  &
 $\begin{array}{c} \mbox{Non-integrable} \\ \mbox{characteristic } \\ \mbox{distribution} \end{array}$  &
 Conformally  Poisson& Poisson \\
    \hline
  \end{tabular}
\end{equation}
\end{center}
\end{small}

\bigskip

\begin{remark}
\label{R:reduced_not_gauged}
Notice that the reduced bivector fields $\pi_{\mbox{{\tiny red}}}$ and $\pi_{\mbox{{\tiny red}}^B}$
are \emph{not} gauge related. Indeed, from the table above one sees that for any value of the rank
of $A$, only one of the two bivector fields
$\pi_{\mbox{{\tiny red}}}$ or $\pi_{\mbox{{\tiny red}}^B}$
has an integrable characteristic distribution.  It follows from Theorem \ref{T:GaugedRelation} that there cannot exist a gauge
transformation between their graphs.
\end{remark}

\begin{remark}
\label{R:Gauge-vs-Integrability} Recall from subsection
\ref{SS:Kinematics} that as the rank of $A$ increases, the
constraint distribution is less integrable or ``more nonholonomic".
The table \eqref{E:Table-brackets} above seems to suggest that it is
appropriate to perform a gauge transformation by the 2-form $B$
when the nonholonomic effects are more important, while the
reduction of the standard nonholonomic bracket works better for
weaker nonholonomic effects.

\end{remark}

\subsection{Hamiltonization and integrability of rigid bodies with generalized rolling constraints}
\label{SS:Hamiltonization-Integrability}

According to the notion of Hamiltonization introduced in Section \ref{Ss:ConformalFactor}
(Definition \ref{D:Hamiltonization}), and the table \eqref{E:Table-brackets}, it immediately follows
that the problem of the motion of a rigid body subject to a generalized rolling constraint
 is Hamiltonizable for any value of the rank of $A$.

If the rank of $A$ equals 0 (respectively, 3) the reduced equations are Hamiltonian
with respect to the bracket $\{ \cdot , \cdot \}_{\mbox{{\tiny red}}}$
(respectively, $\{ \cdot , \cdot \}_{\mbox{{\tiny red}}^B}$). Recall that in both cases the reduced dynamics
correspond to classical rigid body motion (with modified inertia tensor $\I+mr^2E$ if the rank of $A$
equals 3).

If the rank of $A$ equals 1 or 2, the analysis of the Hamiltonization is a bit more delicate but it also
follows directly from Definition \ref{D:Hamiltonization} and the table \eqref{E:Table-brackets}.
In the case rank $A=2$,   it follows that the reduced equations are Hamiltonian in the
 new time $\tau_2$ defined by $d\tau_2=\mbox{\small $\frac{1}{\varphi_2}$} dt$ and with respect to the
 bracket $\varphi_2 \{ \cdot , \cdot \}_{\mbox{{\tiny red}}^B}$ where
 \begin{equation}
 \varphi_2(\vecgamma) = \sqrt{1-mr^2\, (
\vecgamma \cdot (\I + mr^2E)^{-1} \vecgamma ) }.
\label{Ex:ConfFactor2}
\end{equation}
Note that $\varphi_2$ is a basic function on $\RR$ corresponding to the restriction of
 \eqref{E:rank2_conf_factor} to the level set $||\vecgamma||=1$.

Analogously, if the rank of $A$ equals 1, the reduced equations are Hamiltonian in the
new time $\tau_1$ defined by $d\tau_1=\mbox{\small $\frac{1}{\varphi_1}$} dt$ and with respect to the
 bracket $\varphi_1 \{ \cdot , \cdot \}_{\mbox{{\tiny red}}}$ where
 \begin{equation}
 \varphi_1(\vecgamma) = \sqrt{1+ mr^2\, (
\vecgamma \cdot \I^{-1} \vecgamma ) }.
\label{Ex:ConfFactor1}
\end{equation}

\subsubsection*{Integrability of the reduced equations}

In view of the Hamiltonization of the problem, the integrability of the reduced equations
of motion  \eqref{E:Motion3} can be easily established using the celebrated Arnold-Liouville Theorem
 for classical Hamiltonian systems, see e.g. \cite{Arnold}.

 Indeed, for any value of the rank of $A$, the reduced equations are
Hamiltonian on $\mathcal{R}$ (after a time reparametrization if rank $A=1,2$).
Independently of the rank of $A$, the symplectic leaves $\mathcal{O}_a$ of the foliation of $\RR$
correspond to the  level sets $C_1({\bf K}, \vecgamma)= {\bf K}\cdot \vecgamma=a$ and can be shown
to be diffeomorphic to the tangent bundle $T\operatorname{S}^2$ of the sphere (see the discussion in
chapter 14 of \cite{MarsdenRatiubook} for the coadjoint orbits  on $\se (3)^*$).

Once the value of $a$ is fixed, the reduced equations  \eqref{E:Motion3} can be seen as a two degree
of freedom classical Hamiltonian system on $\mathcal{O}_a$ (again, after a time reparametrization
if rank $A=1,2$). These equations possess two independent integrals, the Hamiltonian $\Ham_\RR$, and
$F={\bf K}\cdot{\bf K}$, whose joint level sets are compact in $\mathcal{O}_a$. It follows from the
Arnold-Liouville Theorem  that
these level sets are invariant two-tori and the dynamics are quasi-periodic on them (notice that the flow on the tori
is rectilinear but not uniform if the rank of $A$ is 1 or 2).

The Arnold-Liouville Theorem also tells us that the reduced equations are integrable by quadratures
(after the time reparametrization  if the rank of $A$ is 1 or 2).

Finally, we state without proof that the reduced equations of motion \eqref{E:Motion3} preserve the measure
$\mu (\vecgamma) \, \sigma \wedge dK_1\wedge dK_2 \wedge dK_3$ where $\sigma$ is the area form of the sphere
$\operatorname{S}^2$, and the basic density $\mu: \operatorname{S}^2\to \R$
is given by
\begin{equation*}
\mu (\vecgamma)=\begin{cases} 1  & \mbox{if rank $A=0,3$,} \\
\frac{1}{\varphi_1(\vecgamma)} & \mbox{if rank $A=1$,} \\
\frac{1}{\varphi_2(\vecgamma)} & \mbox{if rank $A=2$,}
\end{cases}
\end{equation*}
where $\varphi_1 , \varphi_2 \in C^\infty(\operatorname{S}^2)$ are
defined in \eqref{Ex:ConfFactor1} and \eqref{Ex:ConfFactor2}
respectively.

\subsection{Twisted Poisson structures for rigid bodies with generalized rolling constraints}
\label{Ss:TwistedMechanics}

In Section \ref{Sec:Twisted}, we presented twisted Poisson structures which
have been extensively studied in other contexts but not in mechanics. Now, we
will show explicitly that twisted Poisson structures appear naturally in the
study of nonholonomic systems.

\subsubsection*{Rigid body with generalized rolling constraints of rank 2}

Here we show that the bracket $\{ \cdot , \cdot \}'_{\mbox{\tiny Rank2}}$, in addition to being
conformally Poisson,  is twisted Poisson.
Note that this cannot be the case for the other bracket $\{ \cdot , \cdot \}_{\mbox{\tiny Rank2}}$
that describes the dynamics since, as shown in Section \ref{S:Examples}, its characteristic distribution
is not integrable.

Recall from the discussion in \ref{SSS:reduction-example} that $\{ \cdot , \cdot \}'_{\mbox{\tiny Rank2}}$
should be considered as a bracket on the reduced space $\RR= \operatorname{S}^2 \times \R^3$ with
redundant coordinates $(\vecgamma , {\bf K})$. The characteristic distribution of the bracket is
integrable and the leaves $\mathcal{O}_a$ of the foliation are the level sets
 $C_1(\vecgamma , {\bf K})=\vecgamma \cdot {\bf K}=a$.
By regularity and integrability of the characteristic distribution, it follows from Corollary
 \ref{C:Int_Dist_implies_twist_Poisson} that
the bracket is $\phi$-twisted. The value of the  3-form $\phi$ is given in the following,

\begin{theorem} \label{T:ChaplyginIsTwisted} The bracket $\{ \cdot , \cdot \}'_{\mbox{\tiny \emph{Rank2}}}$
defined in (\ref{E:Brackets-Rank2})
(that in particular describes the reduced dynamics of the Chaplygin sphere for the appropriate choice of $A$), is a
$\phi$-twisted Poisson bracket with $\phi = - d {\mathcal B}$ where
\begin{equation} {\mathcal B} = m r^2 (\vecOm \cdot \vecgamma)\, \sigma,
 \label{Ex:TwistedOmega}
\end{equation}
and where $\sigma$ denotes the area form of the sphere $||\vecgamma||=1$.
\end{theorem}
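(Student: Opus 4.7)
The plan is to invoke Corollary \ref{C:Int_Dist_implies_twist_Poisson} to guarantee the existence of an exact twisting 3-form, and then to pin it down as $-d\mathcal{B}$. From the discussion following Proposition \ref{P:bracket-rank2}, the bracket $\{\cdot,\cdot\}'_{\mbox{\tiny Rank2}}$ is regular and conformally Poisson with factor $\varphi_2$ from \eqref{Ex:ConfFactor2}; hence its characteristic distribution is integrable, with leaves $\mathcal{O}_a$ cut out by the Casimirs $C_1 = {\bf K}\cdot\vecgamma = a$ and $||\vecgamma||^2 = 1$ (so $\dim \mathcal{O}_a = 4$). By Corollary \ref{C:Int_Dist_implies_twist_Poisson} the bracket is $\phi$-twisted Poisson for some exact $\phi$; by Theorem \ref{Prop:twisted} one may take $\phi = d\Omega$ for any 2-form $\Omega$ on $\RR$ whose pullback to each leaf coincides with the leafwise 2-form $\Omega_{\mathcal{O}_a}$ associated to $(\pi'_{\mbox{\tiny Rank2}})$, and Remark \ref{R:nonCanChoice}(i) guarantees that any two such choices of $\phi$ differ by a closed 3-form that vanishes on the leaves.

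The natural candidate is $\Omega := -\mathcal{B}$, with $\vecOm$ regarded as the function of $(\vecgamma, {\bf K})$ induced by the rank-2 form of \eqref{E:Kinetic_Momentum}. To carry the verification out I would first extract the leafwise 2-form $\Omega_{\mathcal{O}_a}$ from the defining relation $\Omega_{\mathcal{O}_a}(X_f, X_g) = \{f,g\}'_{\mbox{\tiny Rank2}}$, applied to the (almost) Hamiltonian vector fields $X_{\gamma_i}$ and $X_{K_i}$ read off from \eqref{E:Brackets-Rank2}; then I would pull back $-\mathcal{B}$ to a leaf, using the presentation of $\sigma$ in redundant coordinates (e.g. $\sigma = \tfrac{1}{2}\vecgamma \cdot (d\vecgamma \times d\vecgamma)$ up to sign) together with the constraint $\vecgamma \cdot {\bf K} = a$ that eliminates one fiber direction on $\mathcal{O}_a$. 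Matching $\iota^*_{\mathcal{O}_a}(-\mathcal{B})$ with $\Omega_{\mathcal{O}_a}$ modulo a closed 2-form on $\mathcal{O}_a$ (which is all that is required, by Remark \ref{R:nonCanChoice}(i)) then yields $\phi = -d\mathcal{B}$.

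A computationally more transparent alternative is to verify the identity $\tfrac{1}{2}[\pi'_{\mbox{\tiny Rank2}},\pi'_{\mbox{\tiny Rank2}}] = (\pi'_{\mbox{\tiny Rank2}})^\sharp(-d\mathcal{B})$ directly via \eqref{E:Jacobi}, testing it on the coordinate triples $(\gamma_i,\gamma_j,\gamma_k)$, $(\gamma_i,\gamma_j,K_k)$, $(\gamma_i,K_j,K_k)$, and $(K_i,K_j,K_k)$. The first triple is automatic since $\{\gamma_i,\gamma_j\}'_{\mbox{\tiny Rank2}}=0$, and the $S_3$ symmetry in the indices $\{1,2,3\}$ further reduces the check to a handful of polynomial identities in $({\bf K},\vecgamma)$. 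The main obstacle in either route is entirely computational: expanding $d\mathcal{B}$ requires differentiating $\vecOm({\bf K},\vecgamma)$, which introduces the matrix $(\I+mr^2E)^{-1}$, and care is needed to enforce the Casimir relations $||\vecgamma||^2 = 1$ and $\vecgamma\cdot{\bf K} = a$ consistently when restricting to a leaf. No conceptual difficulty arises beyond this bookkeeping, and the very structure of the ansatz $\mathcal{B} \propto (\vecOm\cdot\vecgamma)\,\sigma$ is indicated by Proposition \ref{Prop:ConformallyTwisted}, which tells us that on each leaf $\phi$ must coincide with $\tfrac{1}{\varphi_2}d\varphi_2 \wedge \Omega$.
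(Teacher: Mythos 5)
Your plan is sound and would succeed, but it follows a genuinely different route from the paper's. The paper's proof is a one-step gauge argument: it applies the gauge transformation by $\mathcal{B}$ to the bivector field of $\{\cdot,\cdot\}'_{\mbox{\tiny Rank2}}$ using \eqref{E:condition-gauge}, observes that the resulting bivector is exactly the Lie--Poisson structure of $\se(3)^*$ (i.e.\ the bracket $\{\cdot,\cdot\}_{\mbox{\tiny Rank0}}$ of \eqref{E:Brackets-Rank0}), and then invokes Proposition \ref{P:gauge-twist} to conclude that the original bracket is $(-d\mathcal{B})$-twisted, since a genuine Poisson structure is $0$-twisted. This requires only two short contractions (of $d{\bf K}$ and $d\vecgamma$) and no differentiation of $\vecOm({\bf K},\vecgamma)$ at all; the closedness that your route must verify is inherited for free from the Jacobi identity of the Lie--Poisson bracket. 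Your first route is essentially the leafwise shadow of this argument: the "matching modulo a closed 2-form on $\mathcal{O}_a$" that you defer is precisely the statement that $\Omega_{\mathcal{O}_a}+\iota^*_{\mathcal{O}_a}\mathcal{B}$ is the (closed) orbit symplectic form of $\se(3)^*$, and unless you recognize it as such you are left verifying closedness by hand, which reintroduces the derivatives of $\vecOm$ and the matrix $(\I+mr^2E)^{-1}$ that you flag as the main computational burden. Your second route (checking \eqref{E:twisted_Schout} on coordinate triples) is likewise valid but is the most laborious of the three. In short: what the paper's approach buys is that the exact primitive $-\mathcal{B}$ of the twisting form is not an ansatz to be verified but is produced structurally, as the 2-form whose gauge action carries the reduced nonholonomic bracket onto a known Poisson bracket; what your approach buys is independence from having to guess the correct Poisson "reference point", at the cost of a substantially heavier computation.
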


\begin{proof}
The idea of this proof is to show that the bracket is gauge related
to a Poisson bracket via the 2-form $-{\mathcal B}$. Thus, by
Proposition \ref{P:gauge-twist}, the bracket  is
$(-d\mathcal{B})$-twisted Poisson. More precisely, we will show that
$\{ \cdot , \cdot \}'_{\mbox{\tiny {Rank2}}}$ is
$-\mathcal{B}$-gauge related with the bracket $\{ \cdot , \cdot
\}_{\mbox{\tiny {Rank0}}}$  defined in \eqref{E:Brackets-Rank0} and
that coincides with the Lie-Poisson bracket on $\se(3)^*$.

According to Theorem \ref{T:red_bracket_gauge},  we denote the bivector field associated to the
 bracket $\{ \cdot , \cdot \}'_{\mbox{\tiny {Rank2}}}$  by  $\pi_{\mbox{{\tiny red}}^B}$.
Using  (\ref{E:Brackets-Rank2}) one gets
\begin{equation*}
(\pi_{\mbox{{\tiny red}}^B})^\sharp (d{\bf K})=\left ( {\bf K}-mr^2(\vecOm \cdot \vecgamma)\vecgamma \right )
\times \frac{\partial}{\partial {\bf K}} + \vecgamma \times \frac{\partial}{\partial \vecgamma}, \qquad
(\pi_{\mbox{{\tiny red}}^B})^\sharp (d\vecgamma)=\vecgamma \times \frac{\partial}{\partial {\bf K}}.
\end{equation*}
Next, notice that the 2-form ${\mathcal B}$ defined by
\eqref{Ex:TwistedOmega} is written in the redundant coordinates
$(\vecgamma , {\bf K})$ as
\begin{equation*}
{\mathcal B} =\frac{1}{2} m r^2 (\vecOm \cdot \vecgamma) \, \vecgamma \cdot
\left ( d\vecgamma \times d\vecgamma \right ).
\end{equation*}
The bivector field $\pi_{\mbox{{\tiny red}}^B}$ and the 2-form
$\mathcal{B}$ verify hypothesis of Proposition
\ref{P:GaugeSemibasic} and thus the gauge transformation of
$\pi_{\mbox{{\tiny red}}^B}$ associated to $\mathcal{B}$ is again a
bivector field that we will denote it by $\pi_{\mbox{{\tiny
red}}^B}^\mathcal{B}$. Relying on equation
\eqref{E:condition-gauge}, one computes
\begin{equation*}
{\bf i}_{\left (\pi_{\mbox{{\tiny red}}^B} \right )^\sharp (d{\bf K})}{\mathcal B}=
mr^2(\vecOm \cdot \vecgamma)(E-\vecgamma \vecgamma^T) \, d\vecgamma, \qquad
{\bf i}_{\left (\pi_{\mbox{{\tiny red}}^B} \right )^\sharp (d{\vecgamma})}{\mathcal B}= 0,
\end{equation*}
where, as usual, $E$ denotes the $3\times 3$ identity matrix. Using these expressions we deduce
\begin{equation*}
\left (\pi_{\mbox{{\tiny red}}^B}^\mathcal{B} \right )^\sharp(d{\bf K})=
  {\bf K} \times \frac{\partial}{\partial {\bf K}} + \vecgamma \times \frac{\partial}{\partial \vecgamma}, \qquad
(\pi_{\mbox{{\tiny red}}^B}^\mathcal{B})^\sharp (d\vecgamma)=\vecgamma \times \frac{\partial}{\partial {\bf K}},
\end{equation*}
that complete the proof.
\end{proof}

\paragraph{ The conformal factor and the 3-form $\phi$.}  In accordance with Proposition \ref{Prop:ConformallyTwisted},
since the bracket $\{ \cdot , \cdot \}'_{\mbox{\tiny Rank2}}$ is
both conformally Poisson and twisted Poisson, there is relationship
between the conformal factor $\varphi_2$ (given by
\eqref{Ex:ConfFactor2}), and the twisting 3-form $\phi$ (defined in
Theorem \ref{T:ChaplyginIsTwisted}).

We leave it to the reader to check that on the leaves
$\mathcal{O}_a$ of the foliation of $\RR$ corresponding to the
bracket $\{ \cdot , \cdot \}'_{\mbox{\tiny Rank2}}$, the 3-form
$\phi$ coincides with $\psi := \frac{1}{\varphi_2} \, d \varphi_2 \,
\wedge \Omega$, where 2-form $\Omega$ is given in the redundant
coordinates $(\vecgamma, {\bf K})$ by
\begin{equation*}
\Omega= \frac{1}{2} \left ( {\bf K} -mr^2(\vecOm \cdot \vecgamma) \vecgamma \right ) \cdot
 \left ( d\vecgamma \times d\vecgamma \right )
-\vecgamma \cdot \left ( d{\bf K} \times d\vecgamma  \right ).
\end{equation*}
This choice of $\Omega$ satisfies the conditions of   Corollary
\ref{C:DiracForm} for the graph of the bivector field
$\pi_{\mbox{{\tiny red}}^B}$ corresponding to $\{ \cdot , \cdot
\}'_{\mbox{\tiny Rank2}}$ on $T\RR \oplus T^*\RR$.

\bigskip

\subsubsection*{Rigid body with generalized rolling constraints of rank 1}

A completely analogous analysis can be performed if the rank of the
matrix $A$ equals one. This time it is the bracket $\{ \cdot , \cdot
\}_{\mbox{\tiny Rank1}}$ that is both twisted and conformally
Poisson. In analogy with Theorem \ref{T:ChaplyginIsTwisted} we have

\begin{theorem} \label{T:Rank1IsTwisted}
The bracket $\{ \cdot , \cdot \}_{\mbox{\tiny \textup{Rank1}}}$
defined in (\ref{E:Brackets-Rank1}), is a $\phi$-twisted Poisson
bracket with $\phi =  d {\mathcal B}$ with $\mathcal{B}$ given by
expression \eqref{Ex:TwistedOmega}.
\end{theorem}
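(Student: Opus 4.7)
The plan is to mirror the proof of Theorem \ref{T:ChaplyginIsTwisted}: I will show that $\{\cdot,\cdot\}_{\mbox{\tiny Rank1}}$ is gauge related to the Lie-Poisson bracket $\{\cdot,\cdot\}_{\mbox{\tiny Rank0}}$ on $\se(3)^*$ (defined in \eqref{E:Brackets-Rank0}) via the semi-basic 2-form $\mathcal{B}$, and then invoke Proposition \ref{P:gauge-twist} to read off the twisting 3-form.

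First, I would check that the hypotheses of Proposition \ref{P:GaugeSemibasic} are satisfied for the pair $(\pi_{\mbox{\tiny Rank1}}, \mathcal{B})$ on the reduced bundle $\RR=\operatorname{S}^2\times\R^3\to\operatorname{S}^2$: the 2-form $\mathcal{B}$ is manifestly semi-basic, and inspection of \eqref{E:Brackets-Rank1} shows that $\pi_{\mbox{\tiny Rank1}}^\sharp$ maps every semi-basic 1-form (one involving only $d\vecgamma$) into vertical vector fields along the fibers parametrized by ${\bf K}$. Hence the gauge transformation of $\pi_{\mbox{\tiny Rank1}}$ by $\mathcal{B}$ defines a genuine bivector field $\pi_{\mbox{\tiny Rank1}}^{\mathcal{B}}$.

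Second, writing
$$\mathcal{B} = \tfrac{1}{2}mr^2(\vecOm\cdot\vecgamma)\,\vecgamma\cdot(d\vecgamma\times d\vecgamma)$$
in the redundant coordinates $(\vecgamma,{\bf K})$, I would use equation \eqref{E:condition-gauge} to compute $(\pi_{\mbox{\tiny Rank1}}^{\mathcal{B}})^\sharp(d\vecgamma)$ and $(\pi_{\mbox{\tiny Rank1}}^{\mathcal{B}})^\sharp(d{\bf K})$ and verify the identities
$$(\pi_{\mbox{\tiny Rank1}}^{\mathcal{B}})^\sharp(d\vecgamma) = \vecgamma\times\frac{\partial}{\partial{\bf K}}, \qquad (\pi_{\mbox{\tiny Rank1}}^{\mathcal{B}})^\sharp(d{\bf K}) = {\bf K}\times\frac{\partial}{\partial{\bf K}} + \vecgamma\times\frac{\partial}{\partial\vecgamma},$$
which together characterize the Lie-Poisson bivector $\pi_{\mbox{\tiny Rank0}}$ on $\se(3)^*$. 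Once the identity $\pi_{\mbox{\tiny Rank1}}^{\mathcal{B}} = \pi_{\mbox{\tiny Rank0}}$ is established, Proposition \ref{P:gauge-twist} applied with $\pi=\pi_{\mbox{\tiny Rank1}}$, $B=\mathcal{B}$ and $\pi^B=\pi_{\mbox{\tiny Rank0}}$ immediately yields that $\pi_{\mbox{\tiny Rank1}}$ is $(d\mathcal{B})$-twisted, since $\pi_{\mbox{\tiny Rank0}}$ is Poisson.

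The main obstacle is the explicit coordinate calculation. In the Rank 1 setting, the angular velocity $\vecOm$ is a less symmetric function of $({\bf K},\vecgamma)$ than in the Rank 2 case, involving $\I^{-1}$ and the scalar $||\vecgamma||^2+mr^2\,\vecgamma\cdot\I^{-1}\vecgamma$, so inverting the endomorphism $(\textup{Id}-\mathcal{B}^\flat\circ\pi_{\mbox{\tiny Rank1}}^\sharp)$ on $T^*\RR$ demands careful algebra. The simplification that makes the computation tractable is that $\mathcal{B}$ is semi-basic, which confines its action to the ${\bf K}$-${\bf K}$ slot of the bracket; the content of the verification is that the modification produced there exactly cancels the extra term $mr^2(\vecOm\cdot\vecgamma)\vecgamma$ distinguishing $\{K_i,K_j\}_{\mbox{\tiny Rank1}}$ from $\{K_i,K_j\}_{\mbox{\tiny Rank0}}$. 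The sign of this correction is opposite to the one appearing in the Rank 2 case, which is what forces $\phi=+d\mathcal{B}$ here, as opposed to $\phi=-d\mathcal{B}$ in Theorem \ref{T:ChaplyginIsTwisted}.
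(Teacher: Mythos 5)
Your proposal is correct and follows essentially the same route as the paper: the paper's own (very terse) proof of Theorem~\ref{T:Rank1IsTwisted} likewise reduces to the observation that $\{\cdot,\cdot\}_{\mbox{\tiny Rank1}}$ is $\mathcal{B}$-gauge related to the Lie--Poisson bracket $\{\cdot,\cdot\}_{\mbox{\tiny Rank0}}$ on $\se(3)^*$, with the twisting form then read off from Proposition~\ref{P:gauge-twist} exactly as in the Rank~2 case. Your sign bookkeeping (gauging by $+\mathcal{B}$ here rather than $-\mathcal{B}$, whence $\phi=+d\mathcal{B}$) and your appeal to Proposition~\ref{P:GaugeSemibasic} to guarantee that the gauge transformation yields a genuine bivector both match the paper's argument.
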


%
%
%
The proof is the same to the Rank 2 case. The bracket $\{ \cdot ,
\cdot \}_{\mbox{\tiny Rank1}}$ is $\mathcal{B}$-gauge related with
the bracket $\{ \cdot , \cdot \}_{\mbox{\tiny {Rank0}}}$ defined in
\eqref{E:Brackets-Rank0} and that coincides with the Lie-Poisson
bracket on $\se(3)^*$.

\paragraph{Acknowledgments} { \small
 P.B. thanks CNPq(Brazil) and IMPA (Instituto de Matematica Pura e Aplicada, Brazil)
for supporting this project during 2009-2010 and the Centre
Interfacultaire Bernoulli, EPFL, (Switzerland) for its hospitality
during the Program {\it Advances in the Theory of Control Signals
and Systems with Physical Modeling}, where part of this work was
done.
\newline
We thank the GMC (Geometry, Mechanics and Control Network, project
MTM2009-08166-E, Spain), particularly J.C. Marrero and E. Padron, as
well as the organizers of the Young Researchers Workshops held in
Ghent and Tenerife, for facilitating our collaboration. We are also
greatly grateful to David Iglesias-Ponte, Jair Koiller, and Henrique
Bursztyn
 for useful and interesting discussions.}


\begin{thebibliography}{99}

\begin{small}

\bibitem{Arnold} V.~I. Arnold, \emph{Mathematical methods of classical mechanics.} Translated
from the Russian by K. Vogtmann and A. Weinstein. Graduate Texts in
Mathematics, {\bf 60}. Springer-Verlag, New York-Heidelberg, (1978).


\bibitem{BS93} L. Bates and J. Sniatycki, Nonholonomic reduction,  \emph{ Rep. Math. Phys.}
\textbf{32} (1993), 
99--115.


\bibitem{BKMM} A.~M. Bloch, P.~S.Krishnapasad , J.~E. Marsden
and R.~M. Murray,  Nonholonomic mechanical systems
with symmetry. {\em Arch. Rat. Mech. An.}, {\bf 136} (1996), 21--99.

\bibitem{Blochbook} A.~M. Bloch, {\em Non-holonomic mechanics and control}.
Springer Verlag, New York, (2003).


\bibitem{Mestdag-Variations} A. Bloch, O. Fernandez, and T. Mestdag,  Hamiltonization of nonholonomic systems and the
inverse problem of the calculus of variations,
{\em Rep. Math. Phys.}  {\bf 63}  (2009),  
225Ð249.

\bibitem{BorisovMamaev}  A.~V. Borisov and  I.~S. Mamaev,  Chaplygin's ball rolling
problem is Hamiltonian, \emph{ Math. Notes}, {\bf 70} (2001), 793--795.




\bibitem{BorisovMamaev2008}
 A.~V. Borisov and  I.~S. Mamaev, Conservation laws, hierarchy of dynamics and explicit
 integration of nonholonomic systems,
 \emph{Regul. Chaotic Dyn.}, {\bf 13} (2002), 
 443--490.


\bibitem{BCrainic} H. Bursztyn and  M. Crainic,  Dirac structures, momentum maps, and quasi-Poisson manifolds.
{\em The breath of Symplectic and Poisson Geometry}, 1--40. Progr.
Math.,  {\bf 232}, {\em Birkh\"auser Boston, Boston MA,} (2005).

%

\bibitem{Cantrijn99} 
 F. Cantrijn, M. de Le{\'o}n,  and D. Mart{\'i}n de
Diego,  On almost-Poisson structures in nonholonomic
mechanics, {\em Nonlinearity}, {\bf 12} (1999), 721--737.


\bibitem{chapsphere} S.~A. Chaplygin,  On a ball's rolling
on a horizontal plane, \emph{ Regul.  Chaotic Dyn.}, {\bf 7} (2002),
131--148; original paper in Mathematical Collection of the Moscow
Mathematical Society, {\bf 24} (1903), 139--168.

\bibitem{Chapligyn_reducing_multiplier} S.~A. Chaplygin,  On the theory of the motion of nonholonomic systems. The reducing-multiplier theorem.
Translated from \emph{ Matematicheski\u{i}  sbornik} (Russian)  {\bf 28} (1911) , no. 1 by A. V. Getling.  \emph{ Regul.  Chaotic Dyn.}, {\bf 13}
(2008), 
 369--376.

\bibitem{Courant} T.~J. Courant,  Dirac manifolds, \emph{Trans. Amer. Math. Soc.}, {\bf 319} (1990), 
631--661.

\bibitem{Dirac} P.~A.~M. Dirac,  Lectures on quantum mechanics. Second printing of the 1964 original. Belfer Graduate School of Science Monographs Series, 2.
\emph{Belfer Graduate School of Science, New York; produced and distributed by Academic Press, Inc., New York}, (1967).

\bibitem{EhlersKoiller} K. Ehlers, J. Koiller, R. Montgomery  and P.~M. Rios,  Nonholonomic  systems via moving frames: Cartan
equivalence and Chaplygin Hamiltonization.   {\em The breath of
Symplectic and Poisson Geometry}, 75-120. Progr.  Math.,  {\bf
232}, {\em Birkh\"auser Boston, Boston MA,} 2005.

\bibitem{FedorovJovan} 
Yu.~N. Fedorov and  B. Jovanovi{\'c},
Nonholonomic LR systems as generalized Chaplygin systems with
an invariant measure and flows on homogeneous spaces, \emph{J. Nonlinear
Sci.}, \textbf{14} (2004), 341--381.

\bibitem{FedorovKozlov} Yu.~N. Fedorov  and V.~V.  Kozlov,
 Various aspects of $n$-dimensional rigid body dynamics. {\em Dynamical systems in classical mechanics,} 141--171,
 Amer. Math. Soc. Transl. Series 2, {\bf 168}, {\em Amer. Math. Soc. Providence, RI,} 1995.


\bibitem{Fernandez} O. Fernandez, T. Mestdag and A. Bloch, A generalization of Chaplygin's reducibility Theorem
\emph{Regul. Chaotic Dyn.} {\bf 14} (2009), 
635--655.



\bibitem{Naranjo2007}
 L. Garc\'ia-Naranjo,  Reduction of almost
Poisson brackets for nonholonomic systems on Lie groups,  {\em Regul. Chaotic Dyn.},
{\bf12} (2007), 365--388.

\bibitem{Naranjo2008} L. C. Garc\'ia-Naranjo, Reduction of  almost
Poisson brackets and Hamiltonization of the Chaplygin sphere.
{\em Disc. and Cont. Dyn. Syst. Series S}, {\bf 3} (2010), 
37--60.


\bibitem{Hoch} S. Hochgerner  and L.~C. Garc\'ia-Naranjo,  $G$-Chaplygin systems with internal symmetries, truncation,
and an (almost) symplectic view of Chaplygin's ball. {\em J. Geom. Mech.} {\bf 1} (2009), 
 35--53.

\bibitem{IbLeMaMa1999} A. Ibort, M. de Le\'on, J.~C. Marrero and D. Mart\'in de Diego,   Dirac
brackets in constrained dynamics. {\em Fortschr. Phys.} {\bf 47} (1999), 
459--492.

\bibitem{JovaChap}  B. Jovanovi{\'c}, Hamiltonization and integrability of the Chaplygin sphere in $\R^n$.  \emph{J. Nonlinear Sci.}
 {\bf 20}  (2010),  
 569Ð593.


\bibitem{JotzRatiu} M. Jotz and T.~S. Ratiu,   Dirac and nonholonomic
reduction, arXiv:0806.1261,  (2008).

\bibitem{KlimcikStrobl} C. Klim\v{c}\'\i k and T. Str\"obl,  WZW-Poisson manifolds.  {\em J. Geom. Phys.} {\bf 43}  (2002),  
341--344.

\bibitem{Marle1998}
Ch.~M. Marle,   Various approaches
to conservative and nonconservative nonholonomic systems, {\em Rep. Math.
Phys.}, {\bf 42}  (1998), 211--229.


\bibitem{MarsdenRatiubook}
 J.~E. Marsden  and  T.~S.   Ratiu,
{\em Introduction to Mechanics and Symmetry. A basic exposition of classical mechanical systems.}
Second edition. Texts in Applied Mathematics, {\bf 17}.  Springer-Verlag, New York, 1999.

\bibitem{Montgomery} R. Montgomery, {\em A tour of sub-Riemannian geometries, their geodesics and applications}. Mathematical Surveys and Monographs, {\bf  91}. {\em
American Mathematical Society, Providence, RI,} 2002.

\bibitem{Ohsawa} T. Ohsawa, O. Fernandez, A. Bloch, and D. Zenkov. Nonholonomic Hamilton-Jacobi
theory via Chaplygin Hamiltonization, arXiv:1102.4361, (2011).


\bibitem{SeveraWeinstein} P. \v{S}evera and A. Weinstein,
 Poisson geometry with a 3-form background. Noncommutative geometry and string theory (Yokohama, 2001). {\em Progr. Theoret.
Phys. Suppl.}  {\bf No. 144} (2001), 145--154.

\bibitem{SchaftMaschke1994} A. J. van der Schaft and  B. M. Maschke,  On the Hamiltonian
formulation of nonholonomic mechanical systems, {\it Rep. on Math. Phys.}
{\bf 34} (1994), 225--233.

\bibitem{Veselova}  A.~P. Veselov  and L.~E. Veselova,
Integrable nonholonomic systems on Lie groups. (Russian) {\em Mat. Zametki}  {\bf 44} (1988), 
604--619, 701; {\em translation in Math. Notes} {\bf 44} 
(1989) 810--819.



\bibitem{YoshimuraMarsdenI} H. Yoshimura and J.~E., Marsden,  Dirac
structures in Lagrangian mechanics. Part I: Implicit Lagrangian
systems. {\em J.  Geom.  Phys.} {\bf  57} (2006), 
133--156

\bibitem{YoshimuraMarsdenII} H. Yoshimura and J.~E., Marsden, Dirac
Structures in Lagrangian mechanics. Part II: Variational
structures.  {\em J.  Geom.  Phys.} {\bf  57} (2006), 
 209--250

\bibitem{Weber1986} R.~W. Weber,  Hamiltonian systems with constraints
and their meaning in mechanics. {\em Arch. Rational Mech. Anal.}, {\bf 91} (1986), 
309--335.

\bibitem{Weinstein} A. Weinstein,  The modular automorphism group of a Poisson manifold, \emph{J. Geom. Phys.}  {\bf 23} (1997), 379--394.


\end{small}

\end{thebibliography}
\end{document}